\theoremstyle{plain}
\newtheorem{thm}{Theorem}
\newtheorem{coro}[thm]{Corollary}
\newtheorem{lem}[thm]{Lemma}
\newtheorem*{inclaim}{Claim}
\newtheorem{prop}[thm]{Proposition}
\theoremstyle{definition}
\newtheorem{defn}[thm]{Definition}
\DeclareMathOperator*{\CSP}{CSP}
\DeclareMathOperator*{\core}{core}
\DeclareMathOperator*{\Hom}{HOM}
\DeclareMathOperator*{\coCSP}{co-CSP}
\DeclareMathOperator*{\CS}{CanS}
\DeclareMathOperator*{\CL}{CanL}
\DeclareMathOperator*{\ar}{ar}
\DeclareMathOperator*{\lev}{level}
\DeclareMathOperator*{\hei}{height}
\DeclareMathOperator*{\Ex}{Ex}
\DeclareMathOperator*{\Eq}{Eq}
\newcommand{\mb}[1]{\mathbf{#1}}
\newcommand{\ccACO}{\ensuremath{\mathsf{AC^0}}}
\newcommand{\ccL}{\ensuremath{\mathsf{L}}}
\newcommand{\ccPL}{\ensuremath{\mathsf{\oplus L}}}
\newcommand{\ccNL}{\ensuremath{\mathsf{NL}}}
\newcommand{\ccP}{\ensuremath{\mathsf{P}}}
\newcommand{\ccNP}{\ensuremath{\mathsf{NP}}}
\newcommand{\one}{\ensuremath{\mathfrak{first}}}
\newcommand{\n}{\ensuremath{\mathfrak{last}}}
\newcommand{\suc}{\ensuremath{\mathfrak{suc}}}
\newcommand{\cC}{\mathcal{C}}
\newcommand{\cF}{\mathcal{F}}
\newcommand{\cL}{\mathcal{L}}
\newcommand{\cN}{\mathcal{N}}
\newcommand{\cO}{\mathcal{O}}
\newcommand{\cP}{\mathcal{P}}
\newcommand{\cS}{\mathcal{S}}
\newcommand{\cU}{\mathcal{U}}
\newcommand{\cV}{\mathcal{V}}
\newcommand{\bA}{\mathbf{A}}
\newcommand{\bB}{\mathbf{B}}
\newcommand{\bC}{\mathbf{C}}
\newcommand{\bD}{\mathbf{D}}
\newcommand{\bE}{\mathbf{E}}
\newcommand{\bG}{\mathbf{G}}
\newcommand{\bH}{\mathbf{H}}
\newcommand{\bK}{\mathbf{K}}
\newcommand{\bM}{\mathbf{M}}
\newcommand{\bP}{\mathbf{P}}
\newcommand{\bQ}{\mathbf{Q}}
\newcommand{\bR}{\mathbf{R}}
\newcommand{\bS}{\mathbf{S}}
\newcommand{\bT}{\mathbf{T}}
\newcommand{\bW}{\mathbf{W}}
\newcommand{\bX}{\mathbf{X}}
\newcommand{\bY}{\mathbf{Y}}
\newcommand{\bZ}{\mathbf{Z}}
\newcommand{\bbA}{\mathbb{A}}
\newcommand{\sD}{\mathscr{D}}
\newcommand{\sF}{\mathscr{F}}
\newcommand{\sS}{\mathscr{S}}
\newcommand{\sT}{\mathscr{T}}
\newcommand{\sP}{\mathscr{P}}
\begin{document}

\title{On Constraint Satisfaction Problems below P\footnote{Research supported by NSERC, FQRNT, and ERC Starting Grant PARAMTIGHT (No. 280152).}}

\author{L\'{a}szl\'{o} Egri\footnote{Institute for Computer Science and Control, Hungarian Academy of Sciences
(MTA SZTAKI), Budapest, Hungary. \texttt{\{laszlo.egri@mail.mcgill.ca\}}}
}

\date{}

\maketitle

\begin{abstract}
Symmetric Datalog, a fragment of the logic programming language Datalog, is conjectured to capture all constraint satisfaction problems (CSP) in \ccL. Therefore developing tools that help us understand whether or not a CSP can be defined in symmetric Datalog is an important task. It is widely known that a CSP is definable in Datalog and linear Datalog if and only if that CSP has bounded treewidth and bounded pathwidth duality, respectively. In the case of symmetric Datalog, Bulatov, Krokhin and Larose ask for such a duality (2008). We provide two such dualities, and give applications. In particular, we give a short and simple new proof of the result of Dalmau and Larose that ``Maltsev + Datalog $\Rightarrow$ symmetric Datalog'' (2008).

In the second part of the paper, we provide some evidence for the conjecture of Dalmau (2002) that every CSP in \ccNL\ is definable in linear Datalog. Our results also show that a wide class of CSPs--CSPs which do not have bounded pathwidth duality (e.g., the \ccP-complete {\sc Horn-3Sat} problem)--cannot be defined by any polynomial size family of monotone read-once nondeterministic branching programs.
\end{abstract}
\bigskip


\section{Introduction}

Constraint satisfaction problems ($\CSP$) constitute a unifying framework to
study various computational problems arising naturally in various branches of computer science, including artificial intelligence, graph homomorphisms, and database theory. Loosely speaking, an instance of a $\CSP$ consists of a list of variables and a set of constraints, each specified by an ordered tuple of variables and a constraint relation over some specified domain. The goal is then to determine whether variables can be assigned domain values such that all constraints are simultaneously satisfied.

Recent efforts have been directed at classifying the complexity of the so-called \emph{nonuniform} $\CSP$. For a fixed finite set of finite relations $\Gamma$, $\CSP(\Gamma)$ denotes the nonuniform $\CSP$ corresponding to $\Gamma$. The difference between an instance of $\CSP(\Gamma)$ and an instance of the general $\CSP$ is that constraints in an instance of $\CSP(\Gamma)$ take the form $(x_{i_1}, \dots, x_{i_k}) \in R$ for some $R \in \Gamma$. Examples of nonuniform $\CSP$s include \textsc{$k$-Sat}, \textsc{Horn-3Sat}, \textsc{Graph H-Coloring}, and many others.

For a relational structure $\bB$, the homomorphism problem $\Hom(\bB)$ takes a structure $\bA$ as input, and the task is to determine if there is a homomorphism from $\bA$ to $\bB$. For instance, consider structures that contain a single symmetric binary relation, i.e., graphs. A homomorphism from a graph $\bG$ to a graph $\bH$ is a mapping from $V_{\bG}$ to $V_{\bH}$ such that any edge of $\bG$ is mapped to an edge of $\bH$. If $\bH$ is a graph with a single edge then $\Hom(\bH)$ is the set of graphs which are two-colorable. There is a well-known and straightforward correspondence between the $\CSP$ and the homomorphism problem. For this reason, from now on we work only with the homomorphism problem instead of the $\CSP$. Nevertheless, we call $\Hom(\bB)$ a $\CSP$ and we also write $\CSP(\bB)$ instead of $\Hom(\bB)$, \mbox{as it is often done in the literature.}

The $\CSP$ is of course $\ccNP$-complete, and therefore research has focused on identifying
``islands'' of tractable $\CSP$s. The well-known $\CSP$ dichotomy conjecture of Feder and Vardi \cite{Feder/Vardi:1999:Computational} states that every $\CSP$ is either tractable or $\ccNP$-complete, and progress towards this conjecture has been steady during the last fifteen years. From a complexity-theoretic perspective, the classification of $\CSP(\bB)$ as in $\ccP$ or being $\ccNP$-complete is rather coarse and therefore somewhat dissatisfactory. Consequently, understanding the fine-grained complexity of $\CSP$s gained considerable attention during the last few years. Ultimately, one would like to know the precise complexity of a $\CSP$ lying in $\ccP$, i.e., to identify a ``standard'' complexity class for which a given $\CSP$ is complete. Towards this, it was established that Schaefer's $\ccP-\ccNP$ dichotomy for Boolean $\CSP$s \cite{Schaefer:1978:Complexity} can indeed be refined: each $\CSP$ over the Boolean domain is either definable in first order logic, or complete for one of the classes $\ccL$, $\ccNL$, $\ccPL$, $\ccP$ or $\ccNP$ under \emph{\ccACO-reductions} \cite{Allender/et_al:09:Complexity}. The question whether some form of this fine-grained classification extends to non-Boolean domains is rather natural. The two most important tools to study $\CSP$s whose complexity is below $\ccP$ are \emph{symmetric Datalog} and \emph{linear Datalog}, syntactic restrictions of the database-inspired logic programming language \emph{Datalog}. We say that $\coCSP(\bB)$--the complement of $\CSP(\bB)$--is definable in (linear, symmetric) Datalog if the set of structures that do not homomorphically map to $\bB$ is accepted by a (linear, symmetric) Datalog program.\footnote{The reason we define $\coCSP(\bB)$ instead of $\CSP(\bB)$ in (linear, symmetric) Datalog is a technicality explained in Section~\ref{Defining_CSPs}.}

Symmetric Datalog programs can be evaluated in logarithmic space ($\ccL$), and in fact, it is conjectured that if $\coCSP(\bB)$ is in $\ccL$ then it can also be defined in symmetric Datalog \cite{Egri/Larose/Tesson:07:Symmetric}. There is a considerable amount of evidence supporting this conjecture (see, for example, \cite{Egri/Larose/Tesson:07:Symmetric,Egri/et_al:2011:Complexity,Dalmau/Larose:2008:Maltsev,Larose/Tesson:09:Universal,Carvalho/et_al:11:Maltsev}), and therefore providing tools to show whether $\coCSP(\bB)$ can be defined in symmetric Datalog is an important task. It is well known and easy to see that for any structure $\bB$, there is a set of structures $\cO$, called an \emph{obstruction set}, such that a structure $\bA$ \emph{homomorphically maps to} $\bB$ if and only if there is no structure in $\cO$ that homomorphically maps to $\bA$. In fact, there are many possible obstruction sets for any structure $\bB$. We say that $\bB$ has \emph{duality} $X$, if $\bB$ has an obstruction set which has the special property $X$. The following two well-known theorems relate definability of $\coCSP(\bB)$ in Datalog and linear Datalog to $\bB$ having \emph{bounded treewidth} and \emph{bounded pathwidth} duality, respectively:
\begin{enumerate}
\item $\coCSP(\bB)$ is definable in Datalog if and only if $\bB$ has bounded treewidth duality \cite{Feder/Vardi:1999:Computational};
\item $\coCSP(\bB)$ is definable in linear Datalog if and only if $\bB$ has bounded pathwidth duality \cite{Dalmau:02:Constraint}.
\end{enumerate}

It was stated as an open problem in \cite{Bulatov/Krokhin/Larose:08:Dualities}\ to find a duality for symmetric Datalog in the spirit of the previous two theorems. We provide two such dualities: \emph{symmetric bounded pathwidth duality} (SBPD) and \emph{piecewise symmetric bounded pathwidth duality} (PSBPD). We note that SBPD is a special case of PSBPD. For both bounded treewidth and bounded pathwidth duality, the structures in the obstruction sets are restricted to have some special form. For SBPD and PSBPD the situation is a bit more subtle. In addition that we require the obstruction sets to contain structures only of a special form (they must have bounded pathwidth), the obstruction sets must also possess a certain ``symmetric closure'' property. To the best of our knowledge, this is the first instance of a duality where in addition to the local requirement that each structure must be of a certain form, the set must also satisfy an interesting global requirement.

Using SBPD, we give a short and simple new proof of the main result of \cite{Dalmau/Larose:2008:Maltsev} that ``Maltsev + Datalog $\Rightarrow$ symmetric Datalog''. Considering the simplicity of this proof, we suspect that SBPD (or PSBPD) could be a useful tool in an attempt to prove the \emph{algebraic symmetric Datalog conjecture} \cite{Larose/Tesson:09:Universal}, a conjecture that proposes an algebraic characterization of all $\CSP$s lying in $\ccL$. An equivalent form of this conjecture is that ``Datalog + $n$-permutability $\Rightarrow$ symmetric Datalog'' (by combining results from \cite{Hobby/McKenzie:1988:Structure,Barto/Kozik:09:Constraint,Larose/Zadori:07:Bounded}), where $n$-permutability is a generalization of Maltsev.

One way to gain more insight into the dividing line between $\CSP$s in $\ccL$ and $\ccNL$ is through studying the complexity of $\CSP$s corresponding to oriented paths. It is known that all these $\CSP$s are in $\ccNL$ (by combining results from \cite{Feder:01:Classification,Dalmau/Krokhin:2008:Majority,Dalmau:02:Constraint}), and it is natural to ask whether there are oriented paths for which the $\CSP$ is $\ccNL$-complete and $\ccL$-complete. We provide two classes of oriented paths, $\cC_1$ and $\cC_2$, such that for any $\bB_1 \in \cC_1$, the corresponding $\CSP$ is $\ccNL$-complete, and for any $\bB_2 \in \cC_2$, the corresponding $\CSP$ is in $\ccL$. In fact, it can be seen with the help of \cite{Larose/Tesson:09:Universal} that for most $\bB_2 \in \cC_2$, $\CSP(\bB_2)$ is $\ccL$-complete. To prove the membership of $\CSP(\bB_2)$ in $\ccL$ (for $\bB_2 \in \cC_2$), we use PSBPD in an essential way. One can hope to build on this work to achieve an $\ccL$-$\ccNL$ dichotomy for oriented paths.

In the second part of the paper, we investigate $\CSP$s in $\ccNL$. Based on the observation that any $\CSP$ known to be in $\ccNL$ is also known to be definable by a linear Datalog program, Dalmau conjectured that every $\CSP$ in $\ccNL$ can be defined by a linear Datalog program \cite{Dalmau:02:Constraint}. Linear Datalog(\suc,$\neg$) (linDat(\suc,$\neg$)) denotes the extension of linear Datalog in which we allow negation and access to an order over the domain of the input. It is known that any problem in $\ccNL$ can be defined by a linDat(\suc,$\neg$) program \cite{Dalmau:02:Constraint,Gradel:1992:Capturing,Immerman:1999:Descriptive}, and therefore one way to prove the above conjecture would be to show that any $\CSP$ that can be defined by a linDat(\suc,$\neg$) program can also be defined by a linear Datalog program. We consider a restriction of the conjecture because proving it in its full generality would separate $\ccNL$ from $\ccP$ (using \cite{Afrati/Cosmadakis:89:Expressiveness}).

\emph{Read-once linear Datalog(\suc)} (1-linDat(\suc)) is a subclass of linDat(\suc,$\neg$), but a subclass that has interesting computational abilities, and for which we are able to find the chink in the armor. We can easily define some $\ccNL$-complete problems in 1-linDat(\suc), such as the $\CSP$ directed $st$-connectivity ($st$-\textsc{Conn}), and also problems that are not \emph{homomorphism-closed}, such as determining if the input graph is a clique on $2^n$ vertices, $n \geq 1$. Because any problem that can be defined with a linear Datalog program must be homomorphism closed, it follows that 1-linDat(\suc) can define nontrivial problems which are in $\ccNL$ but which are not definable by any linear Datalog program. However, our main result shows that if $\coCSP(\bB)$ can be defined by a 1-linDat(\suc) program, then $\coCSP(\bB)$ can also be defined by a linear Datalog program. The crux of our argument applies the general case of the Erd\H{o}s-Ko-Rado theorem to show that a 1-linDat(\suc) program does not have enough ``memory'' to handle structures of unbounded pathwidth.


Our proof establishing the above result for 1-linDat(\suc) programs can be adapted to show a parallel result for a subclass of \emph{nondeterministic branching programs}, which constitute an important and well-studied class of computational models (see the book \cite{Wegener:00:Branching}). More precisely, we show that if $\coCSP(\bB)$ can be defined by a \emph{poly-size family of read-once\footnote{Our read-once restriction for nondeterministic branching programs is less stringent than the usual restriction because we require the programs to be read-once only on certain inputs.} monotone nondeterministic} branching programs (mnBP1(poly)) then $\coCSP(\bB)$ can also be defined by a linear Datalog program.\footnote{A 1-linDat(\suc) can be converted into an mnBP1(poly), so another way to present our results would be to do the proofs in the context of mnBP1s, and then to conclude the parallel result for 1-linDat(\suc).}

Finally, our results can be interpreted as lower-bounds on a wide class of $\CSP$s: if $\bB$ does not have bounded pathwidth duality, then $\coCSP(\bB)$ cannot be defined with any 1-linDat(\suc) program or with any mnBP1(poly). A specific example of such a $\CSP$ would be the $\ccP$-complete \textsc{Horn-3Sat} problem, and more generally, Larose and Tesson showed that any $\CSP$ whose associated \emph{variety admits the unary, affine or semilattice types} does not have bounded pathwidth duality (see \cite{Larose/Tesson:09:Universal} for details).


\section{Preliminaries}


\subsection{Basic Definitions}

A \emph{vocabulary} (or \emph{signature}) is a finite set of relation symbols with associated arities. The arity function is denoted with $\ar(\cdot)$. If $\bA$ is a relational structure over a vocabulary $\tau$, then $R^\bA$ denotes the relation of $\bA$ associated with the symbol $R \in \tau$. The lightface equivalent of the name of the structure denotes the universe of the structure, e.g., the universe of $\bA$ is $A$.

A \emph{tuple structure} $\tilde{\bA}$ over a vocabulary $\tau$ is a pair $(\tilde{A}, T)$: $T$ is a set of pairs of the form $(R,\mb{t})$, where $R \in \tau$ and $\mathbf{t}$ is an $\ar(R)$-tuple, and $\tilde{A}$ is the domain of $\tilde{A}$, i.e., $\tilde{A}$ contains every element that appears in some tuple $\mb{t}$, and possibly some other elements. Slightly abusing notation, we write $(R,\mb{t}) \in \tilde{\bA}$ to mean $(R,\mb{t}) \in T$, where $\tilde{\bA} = (\tilde{A}, T)$. Clearly, tuple structures are equivalent to relational structures. If $\bA$ is a relational structure, we denote the equivalent tuple structure with $\tilde{\bA}$, and vice versa. For convenience, we use the two notations interchangeably. We note that \emph{all} structures in this paper are finite.

Let $\bB$ be a structure of the same signature $\tau$ as $\bA$. The \emph{union} $\bA \cup \bB$ of $\bA$ and $\bB$ is the $\tau$-structure whose universe is $A \cup B$, and for each $R \in \tau$, $R^{\bA \cup \bB}$ is defined as $R^{\bA} \cup R^{\bB}$. (Note that it is possible that $A \cap B \neq \emptyset$.) A {\em homomorphism} from $\bA$ to $\bB$ is a map $f$ from $A$ to $B$ such that $f(R^\bA) \subseteq R^\bB$ for each $R \in \tau$. If there exists a homomorphism from $\bA$ to $\bB$, we often denote it with $\bA \rightarrow \bB$. If that homomorphism is $f$, we write $\bA \overset{f}{\longrightarrow} \bB$. A structure is called a \emph{core}
if it has no homomorphism to any of its proper substructures.
A \emph{retract} of a structure $\bB$ is an induced substructure $\bB'$ of $\bB$ such that there is a homomorphism $g\;:\; B \rightarrow B'$ with $g(b) = b$ for every $b \in B'$. A retract of $\bB$ that has minimal size among all retracts of $\bB$ is called a core of $\bB$. It is well known that all cores of a structure are isomorphic, and so one speaks of the core of a structure $\bB$, $\core(\bB)$.

We denote by $\CSP(\bB)$ the set $\{\bA \;|\; \bA \text{ is a $\tau$-structure such that $\bA \rightarrow \bB$}\}$, and by $\coCSP(\bB)$ the complement of $\CSP(\bB)$, i.e., the set $\{\bA \;|\; \bA \text{ is a $\tau$-structure such that $\bA \not\rightarrow \bB$}\}$. If we are given a class of $\tau$-structures $\cC$ such that for any $\bA \in \cC$, and any $\bB$ such that $\bA \rightarrow \bB$ it holds that $\bB \in \cC$, then we say that $\cC$ is \emph{homomorphism-closed}. \emph{Isomorphism closure} is defined in a similar way.

An $n$-ary operation on a set $A$ is a map $f:A^n \rightarrow A$. Given an $h$-ary relation $R$ and an  $n$-ary operation $f$ on the same set $A$, we say that $f$ {\em preserves} $R$ or that $R$ is {\em invariant} under $f$ if the following holds: given any matrix $M$ of size $h \times n$ whose columns are in $R$, applying $f$ to the rows of $M$ produces an $h$-tuple in $R$. A {\em polymorphism} of a structure $\bB$ is an operation $f$ that preserves each relation in $\bB$.

\begin{defn}[Maltsev Operation]
A ternary operation $f : A^3 \rightarrow A$ on a finite set $A$ is called a Maltsev operation if it satisfies the following identities: $f(x,y,y) = f(y,y,x) = x, \forall x,y \in A.$
\end{defn}

\subsection{Datalog}\label{Datalog}

We provide only an informal introduction to Datalog and its fragments, and the reader can find more details, for example, in \cite{Libkin:2004:Elements,Dalmau:02:Constraint,Egri/Larose/Tesson:07:Symmetric}. Datalog is a database-inspired query language whose connection with $\CSP$-complexity is now relatively well understood (see, e.g., \cite{Barto/Kozik:09:Constraint}). Let $\tau$ be some finite vocabulary. A Datalog program over $\tau$ is specified by a finite set of rules of the form $h \leftarrow b_1 \wedge \dots \wedge b_t$, where $h$ and the $b_i$ are atomic formulas $R(x_1,\dots,x_k)$. When we specify the variables of an atomic formula, we always list the variables from left to right, or we simply provide a tuple $\mathbf{x}$ of variables whose $i$-th variable is $\mathbf{x}[i]$. We distinguish two types of relational predicates occurring in a Datalog program: predicates that occur at least once in the head of a rule (i.e., its left-hand side) are called \emph{intensional database predicates} (\emph{IDBs}) and are not in $\tau$. The predicates which occur only in the body of a rule (its right-hand side) are called \emph{extensional database predicates} (\emph{EDBs}) and must all lie in $\tau$. A rule that contains no IDB in the body is called a \emph{nonrecursive rule}, and a rule that contains at least one IDB in the body is called a \emph{recursive rule}. A Datalog program contains a distinguished IDB of arity $0$ which is called the \emph{goal predicate}; a rule whose head IDB is a goal IDB is called a \emph{goal rule}.

\emph{Linear Datalog} is a syntactic restriction of Datalog in which there is at most one IDB in the body of each rule. The class of linear Datalog programs that contains only rules with at most $k$ variables and IDBs with at most $j \leq k$ variables is denoted by \emph{linear $(j,k)$-Datalog}. We say that the \emph{width} of such a linear Datalog program is $(j,k)$.

\emph{Symmetric Datalog} is a syntactic restriction of linear Datalog. A linear Datalog program $\cP$ is symmetric if for any recursive rule $I(\mathbf{x}) \leftarrow J(\mathbf{y}) \wedge \bar{E}(\mathbf{z})$ of $\cP$ (except for goal rules), where $\bar{E}(\mathbf{z})$ is a shorthand for the conjunction of the EDBs of the rule over variables in $\mathbf{z}$, the symmetric pair $J(\mathbf{y}) \leftarrow I(\mathbf{x}) \wedge \bar{E}(\mathbf{z})$ of that rule is also in $\cP$. The \emph{width} of a symmetric Datalog program is defined similarly to the width of a linear Datalog program.

We explain the semantics of linear (symmetric) Datalog using \emph{derivations} (it could also be explained with \emph{fixed point operators}, but that would be inconvenient for the proofs). Let $\cP$ be a linear Datalog program with vocabulary $\tau$. A \emph{$\cP$-derivation with codomain $D$} is a sequence of pairs $\sD = (\rho_1, \lambda_1), \dots, (\rho_q,\lambda_q)$, where $\rho_\ell$ is a rule of $\cP$, and $\lambda_\ell$ is a function from the variables $V_\ell$ of $\rho_\ell$ to $D$, $\forall \ell \in [q]$. The sequence $\mathscr{D}$ must satisfy the following properties. Rule $\rho_1$ is nonrecursive, and $\rho_q$ is a goal rule. For all $\ell \in [q-1]$, the head IDB $I$ of $\rho_\ell$ is the IDB in the body of $\rho_{\ell+1}$,
and if the variables of $I$ in the head of $\rho_\ell$ and the body of $\rho_{\ell+1}$ are $\mathbf{x}$ and $\mathbf{y}$, respectively, then $\lambda_\ell(\mathbf{x}[i]) = \lambda_{\ell+1}(\mathbf{y}[i])$, $\forall i \in [\ar(I)]$.

Let $\sD$ be a derivation. Let $R(\mathbf{z})$ be an EDB (with variables $\mathbf{z}$) appearing in some rule $\rho_\ell$ of $\sD$. We write $R(\mathbf{t})$ to denote that $\lambda_\ell(\mathbf{z}) = \mathbf{t}$, i.e., that $\lambda_\ell$ \emph{instantiates} the variables of $R(\mathbf{z})$ to $\mathbf{t}$. If $R(\mathbf{z})$ appears in some rule $\rho_\ell$ of $\sD$ and $\lambda_\ell(\mathbf{z}) = \mathbf{t}$, we say that $R(\mathbf{t})$ \emph{appears} in $\rho_\ell$, or less specifically, that $R(\mathbf{t})$ \emph{appears} in $\sD$.

Given a structure $\bA$ and a derivation $\sD$ with codomain $A$ for a program $\cP$, we say that \emph{$\sD$ is a derivation for $\bA$} if for every $R(\mathbf{t})$ that appears in a rule of $\sD$, $(R,\mathbf{t}) \in \tilde{\bA}$. The notation for a $\cP$-derivation for a structure $\bA$ will have the form $\sD_{\cP}(\bA)$. A linear (symmetric) Datalog program $\cP$ \emph{accepts} an input structure $\bA$ if there exists a $\cP$-derivation for $\bA$.

 \begin{defn}[Read-once Derivation]
 We say that a derivation $\sD$ is \emph{read-once} if every $R(\mathbf{t})$ that appears in $\sD$ appears exactly once in $\sD$, except when $R$ is the special EDB \suc, \one, or \n, defined in Section~\ref{NL_section}.
 \end{defn}

An example is given in Figure~\ref{example}. The vocabulary is $\tau = \{E^2,S^1,T^1\}$, where the superscripts denote the arity of the symbols. Notice that in the symmetric Datalog program $\cP$, rules (2) and (3) form a symmetric pair. It is not difficult to see that $\cP$ accepts a $\tau$-structure $\bA$ if and only if there is an oriented path (see Section~\ref{L_defs}) in $E^\bA$ from an element in $S^\bA$ to an element in $T^\bA$.

\begin{figure}[h!]
\noindent\begin{minipage}[h!]{0.32\textwidth}
\begin{eqnarray}[rclc]
I(x) &\leftarrow& S(x)&\qquad \qquad\\
I(y) &\leftarrow& I(x)\wedge E(x,y)\\
I(x) &\leftarrow& I(y)\wedge E(x,y)\\
G &\leftarrow& I(x)\wedge T(x)
\end{eqnarray}
\end{minipage}\begin{minipage}[h!]{0.06\textwidth}
\hfill
\end{minipage}\begin{minipage}[h!]{0.62\textwidth}
\begin{flushright}\includegraphics[scale=1]{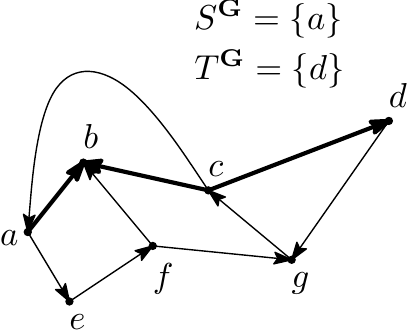}\end{flushright}
\end{minipage}
\bigskip
\begin{center}
\includegraphics[scale=1]{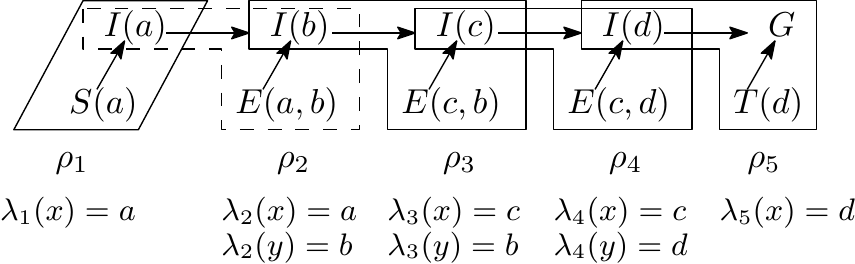}
 \caption{\emph{Top left:} Symmetric Datalog program $\cP$. \emph{Top right:} Input structure $\bG$ where the binary relation $E^\bG$ is specified by the digraph. \emph{Bottom:} Visualization of a $\cP$-derivation $\sD_{\cP}(\bG)=(\rho_1,\lambda_1),\dots,(\rho_5,\lambda_5)$ for $\bG$, where $\rho_1$ is the nonrecursive rule (1), $\rho_2,\rho_4$ are rule (2), $\rho_3$ is rule (3), and $\rho_5$ is the goal rule (4). In the diagram, for example, the dashed box corresponds to rule $\rho_2$, and it is the rule $I(y) \leftarrow I(x)\wedge E(x,y)$ of $\cP$, where $\lambda_2$ assigns $a$ to variable $x$ and $b$ to variable $y$. Observe that $\sD_{\cP}(\bG)$ is read-once.}\label{example}
\end{center}
\end{figure}

\subsection{Path Decompositions and Derivations}

\begin{defn}\label{path_decomp}[Path Decomposition]
Let $\bS$ be a $\tau$-structure. A \emph{$(j,k)$-path decomposition} (or path decomposition of \emph{width $(j,k)$}) of $\bS$ is a sequence $S_0,\dots,S_{n-1}$ of subsets of $S$ such that
\begin{enumerate}
\item For every $(R,(a_1,\dots,a_{\ar(R)})) \in \tilde{\bS}$, $\exists \ell \in \{0,\dots,n-1\}$ such that $\{a_1,\dots,a_{\ar(R)}\} \subseteq S_\ell$;
\item If $a \in S_i \cap S_{i'}$ ($i < i'$) then $a \in S_\ell$ for all $i < \ell < i'$;\label{inherit}
\item $\forall \ell \in \{0,\dots,n-1\}$, $|S_\ell| \leq k$, and $\forall \ell \in \{0,\dots,n-2\}$, $|S_\ell \cap S_{\ell+1}| \leq j$.\label{jk}
\end{enumerate}
\end{defn}

For ease of notation, it will be useful to introduce a concept closely related to path decompositions. Let $\tau$ be a vocabulary. Let $\bS$ be a $\tau$-structure that can be expressed as \mbox{$\bS = \bS_0 \cup \dots \cup \bS_{n-1}$}, where the $S_0,\dots,S_{n-1}$ (the universes of the $\bS_i$) satisfy properties \ref{inherit} and \ref{jk} above. We say that $\bS$ is a \emph{$(j,k)$-path}, and that $(\bS_0, \dots, \bS_{n-1})$ is a \emph{$(j,k)$-path representation} of $\bS$. We denote $(j,k)$-path representations with script letters, e.g., $\mathscr{S} = (\bS_0, \dots, \bS_{n-1})$. The substructure $\bS_i \cup \dots \cup \bS_{i'}$ of $\bS$ (assuming a $(j,k)$-representation is fixed) is denoted by $\bS_{[i,i']}$. We call $n$ the \emph{length} of the representation. Obviously, a structure is a $(j,k)$-path if and only if it admits a $(j,k)$-path decomposition.

Let $\sD = (\rho_1, \lambda_1), \dots, (\rho_q,\lambda_q)$ be a derivation for some linear or symmetric program $\cP$ with vocabulary $\tau$. We can extract from $\sD$ a $\tau$-structure $\Ex(\sD)$ such that $\sD$ is a derivation for $\Ex(\sD)$. We specify $\Ex(\sD)$ as a tuple structure $\tilde{\bA}$: for each $R(\mathbf{t})$ that appears in $\sD$ ($R \in \tau$), we add the pair $(R,\mathbf{t})$ to $\tilde{\bA}$, and set $\tilde{A}$ to be the set of those elements that appear in a tuple.

Let $\sD = (\rho_1, \lambda_1), \dots, (\rho_q,\lambda_q)$ be a derivation. For each $x$ that is in a rule $\rho_\ell$ for some $\ell \in [q]$, call $x^\ell$ the \emph{indexed version of $x$}. We define an equivalence relation $\Eq(\sD)$ on the set of indexed variables of $\sD$. First we define a graph $G = (V,E)$ as:
\begin{itemize}
\item $V$ is the set of all indexed versions of variables in $\sD$;
\item $(x^\ell, y^{\ell'}) \in E$ if $\ell' = \ell + 1$, $x$ is the $i$-th variable of the head IDB $I$ of $\rho_\ell$, and $y$ is the $i$-th variable of the body IDB $I$ of $\rho_{\ell+1}$.
\end{itemize}
Two indexed variables $x^\ell$ and $y^{\ell'}$ are related in $\Eq(\sD)$ if they are connected in $G$. Observe that if $C = \{x_1^{\ell_1}, x_2^{\ell_2}, \dots, x_c^{\ell_c}\}$ is a connected component of $G$, then it must be that $\lambda_{\ell_1}(x_1) = \lambda_{\ell_2}(x_2) = \dots = \lambda_{\ell_c}(x_c)$.
\begin{defn}[Free Derivation]\label{free_der}
Let $\cP$ be a linear Datalog program and $\sD = (\rho_0,\lambda_0), \dots,$ $(\rho_q,\lambda_q)$ be a derivation for $\cP$. Then $\sD$ is said to be \emph{free} if for any two $(x^\ell,y^{\ell'}) \not \in \Eq(\sD)$, $\lambda_\ell(x) \neq \lambda_{\ell'}(y)$.
\end{defn}
Intuitively, this definition says that $\sD$ is free if any two variables in $\sD$ which are not ``forced'' to have the same value are assigned different values.

\subsection{Canonical Programs}

Fix a $\tau$-structure $\bB$ and $j \leq k$. Let $Q_1,\dots,Q_n$ be all possible at most $j$-ary relations over $B$. The \emph{canonical linear $(j,k)$-Datalog program for $\bB$ ($(j,k)$-$\CL(\bB)$)} contains an IDB $I_m$ of the same arity as $Q_m$ for each $m \in [n]$. The rule $I_c(\mb{x}) \leftarrow I_d(\mb{y}) \wedge \bar{E}(\mb{z})$ belongs to the canonical program if it contains at most $k$ variables, and the implication $Q_c(\mb{x}) \leftarrow Q_d(\mb{y}) \wedge \bar{E}(\mb{z})$ is true for all possible instantiation of the variables to elements of $B$. The goal predicate of this program is the $0$-ary IDB $I_g$, where $Q_g = \emptyset$.

The \emph{canonical symmetric $(j,k)$-Datalog program for $\bB$ ($(j,k)$-$\CS(\bB)$)} has the same definition as $(j,k)$-$\CL(\bB)$, except that it has less rules due to the following additional restriction. If $I_c(\mb{x}) \leftarrow I_d(\mb{y}) \wedge \bar{E}(\mb{z})$ is in the program, then both $Q_c(\mb{x}) \leftarrow Q_d(\mb{y}) \wedge \bar{E}(\mb{z})$ and $Q_d(\mb{y}) \leftarrow Q_c(\mb{x}) \wedge \bar{E}(\mb{z})$ must hold for all possible instantiation of the variables to elements of $B$. The program $(j,k)$-$\CS(\bB)$ is obviously symmetric. When it is clear from the context, we write $\CL(\bB)$ and $\CS(\bB)$ instead of $(j,k)$-$\CL(\bB)$ and $(j,k)$-$\CS(\bB)$, respectively.

\subsection{Defining CSPs}\label{Defining_CSPs}

The following discussion applies not just to Datalog but also to its symmetric and linear fragments. It is easy to see that the class of structures accepted by a Datalog program is homomorphism-closed, and therefore it is not possible to define $\CSP(\bB)$ in Datalog. However, $\coCSP(\bB)$ is closed under homomorphisms, and in fact, it is often possible to define $\coCSP(\bB)$ in Datalog.

The following definition is key.
\begin{defn}[Obstruction Set]
A set $\cO$ of $\tau$-structures is called an \emph{obstruction set} for $\bB$, if for any $\tau$-structure $\bA$, $\bA \not \rightarrow \bB$ if and only if there exists $\bS \in \cO$ such that $\bS \rightarrow \bA$.
\end{defn}
In other words, an obstruction set defines $\coCSP(\bB)$ implicitly as $\bA \in \coCSP(\bB)$ if and only if there exists $\bS \in \cO$ such that $\bS \rightarrow \bA$. If $\cO$ above can be chosen to have property $X$, then we say that $\bB$ has $X$-duality. In the next section we show that $\coCSP(\bB)$ is definable in symmetric Datalog if and only if $\bB$ has \emph{symmetric bounded pathwidth} duality.


\section{On CSPs in symmetric Datalog}

\subsection{Definitions}\label{L_defs}

An \emph{oriented path} is a digraph obtained by orienting the edges of an undirected path. In other words, an oriented path has vertices $v_0,\dots,v_{q+1}$ and edges $e_0,\dots,e_q$, where $e_i$ is either $(v_i,v_{i+1})$, or $(v_{i+1},v_i)$. The \emph{length} of an oriented path is the number of edges it contains. We call $(v_i,v_{i+1})$ a \emph{forward edge} and $(v_{i+1},v_i)$ a \emph{backward edge}. Oriented paths can be thought of as relational structures over the vocabulary $\{E^2\}$, so we denote them with boldface letters.

For an oriented path $\bP$, we can find a mapping $\lev : P \rightarrow \{0,1,2,\dots\}$ such that $\lev(b) = \lev(a) + 1$ whenever $(a,b)$ is an edge of $\bP$. Clearly, there is a unique such mapping with the smallest possible values. The \emph{level of an edge} $(a,b)$ of $\bP$ is $\lev(a)$, i.e., the level of the starting vertex of $(a,b)$. The \emph{$\hei(\bP)$} of an oriented path $\bP$ is $\max_{a \in P} \lev(a)$. Let $\bP$ be an oriented path that has a vertex $u$ with indegree $0$ and outdegree $1$, and a vertex $v$ with indegree $1$ and outdegree $0$. We say that $\bP$ is \emph{minimal} if $u$ is in the bottommost level and $v$ is in the topmost level, and there are no other vertices of $\bP$ in the bottommost or the topmost levels.


A \emph{zigzag operator $\xi$} takes a $(j,k)$-path representation $\mathscr{S} = (\bS_0,\dots,\bS_{n-1})$ of a $(j,k)$-path $\bS$ and a minimal oriented path $\bP = e_0,\dots,e_q$ such that $\hei(\bP) = n$, and it returns another $(j,k)$-path $\xi(\mathscr{S},\bP)$. Intuitively, $\xi(\mathscr{S},\bP)$ is the $(j,k)$-path $\bS$ ``modulated'' by $\bP$ such that the forward and backward edges $e_i$ of $\bP$ are mimicked in $\xi(\mathscr{S},\bP)$ by ``forward and backward'' copies of $\bS_{\lev(e_i)}$. Before the formal definition, it could help the reader to look at the right side of Figure~\ref{zigzag}, where the oriented path used to modulate the $(j,k)$-path over the vocabulary $E^2$ (i.e., digraphs) with representation $(\bS_0,\bS_1,\bS_2)$ is $\bP$ on the left side. The left side is a more abstract example, and the reader might find it useful after reading the definition.

We inductively define the $(j,k)$-path $\xi(\mathscr{S},\bP)$ as $(\bS_{e_0}, \bS_{e_1},\dots, \bS_{e_q})$ together with a sequence of isomorphisms $\varphi_{e_0}, \varphi_{e_1}, \dots, \varphi_{e_q}$, where $\varphi_{e_i}$ is an isomorphism from $\bS_{e_i}$ to $\bS_{\lev(e_i)}$, $0 \leq i \leq q$. For the base case, we define $\bS_{e_0}$ to be an isomorphic copy of $\bS_0$, and $\varphi_{e_0}$ to be the isomorphism that maps $\bS_{e_0}$ back to $\bS_0$. Assume inductively that $\bS_{e_0}, \dots, \bS_{e_{i-1}}$ and $\varphi_{e_0}, \dots, \varphi_{e_{i-1}}$ are already defined. Let $\bS_{e_i}'$ be an isomorphic copy of $\bS_{\lev(e_i)}$ with domain disjoint from \mbox{$S_{e_0} \cup \dots \cup S_{e_{i-1}}$}, and fix $\varphi_{e_i}'$ to be the isomorphism that maps back $S_{e_i}'$ to $S_{\lev(e_i)}$. We ``glue'' $\bS_{e_i}'$ to $\bS_{e_{i-1}}$ by renaming some elements of $\bS_{e_i}'$ to elements of $\bS_{e_{i-1}}$. To facilitate understanding, we can think of the already constructed structures $\bS_{e_0}, \dots, \bS_{e_{i-1}}$ as labels of the edges $e_0,\dots,e_{i-1}$ of $\bP$, respectively, and we want to determine $\bS_{e_i}$, the label of the next edge. The connection between $\bS_{e_{i-1}}$ and $\bS_{e_i}$ will be defined such that $\bS_{e_{i-1}}$ and $\bS_{e_i}$ ``mimic'' the orientation of the edges $e_{i-1}$ and $e_i$.

We resume our formal definition. Set $\ell = \lev(e_i)$, and let $\ell' = \ell - 1$ if $e_i$ is a forward edge, and $\ell' = \ell + 1$ if $e_i$ is a backward edge. If an element $x \in S_{e_i}'$ and an element $y \in S_{e_{i-1}}$ are both copies of the same element $a \in S_{\ell} \cap S_{\ell'}$, then rename $x$ to $y$ in $S_{e_i}'$. After all such elements are renamed, $\bS_{e_i}'$ becomes $\bS_{e_i}$. That is, for all $a \in S_{\ell} \cap S_{\ell'}$, rename $\varphi_{e_i}^{\prime -1}(a)$ in $\bS_{e_i}'$ to $\varphi_{e_{i-1}}^{-1}(a)$ to obtain $\bS_{e_i}$.

We define the isomorphism $\varphi_{e_i}$ from $\bS_{e_i}$ to $\bS_{\lev(e_i)}$ as:
\[
\varphi_{e_i}(x) = \begin{cases}
\varphi_{e_i}'(x) & \text{if $x \in S_{e_i}$ and $x \not \in S_{e_{i-1}}$}\\
\varphi_{e_{i-1}}(x) & \text{if $x \in S_{e_i} \cap S_{e_{i-1}}$}.
\end{cases}
\]

\begin{figure}[h!]
\begin{center}
\includegraphics[scale=0.99]{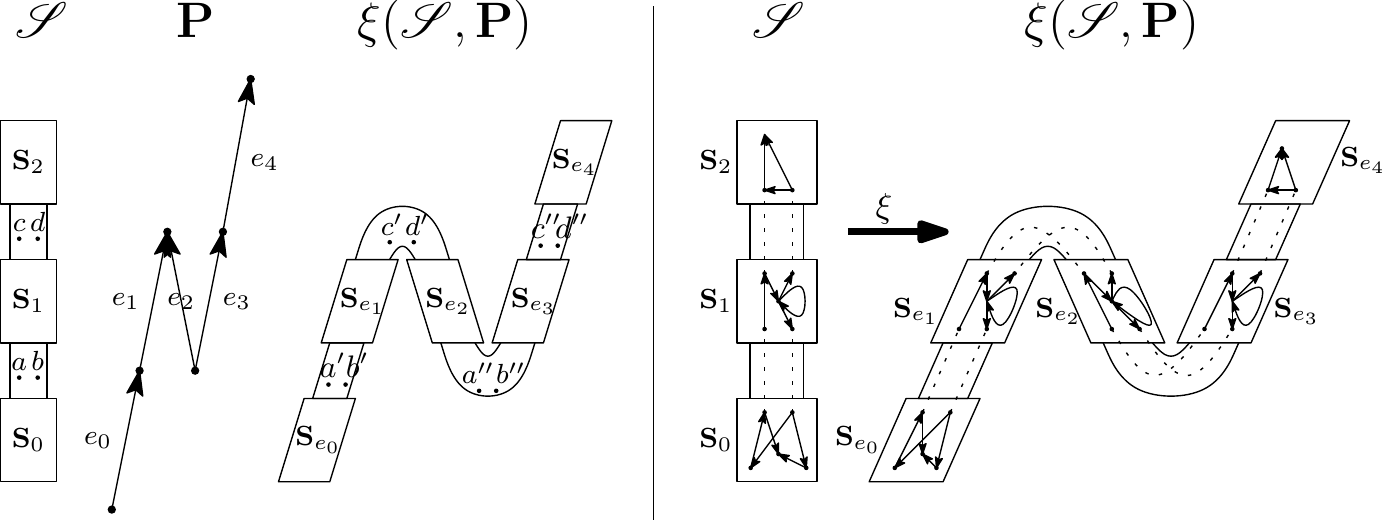}
 \caption{\emph{Left:} Applying a zigzag operator to the $(j,k)$-path $\bS$ with the $(j,k)$-representation $\mathscr{S} = (\bS_0,\bS_1,\bS_2)$. Suppose that $S_0 \cap S_1 = \{a,b\}$ and $S_1 \cap S_2 = \{c,d\}$. We demonstrate how $\bS_{e_0}$ and $\bS_{e_2}$ are obtained. \emph{$\bS_{e_0}$} is a disjoint copy of $\bS_0$ (and the copy of $a$ and $b$ in $\bS_{e_0}$ are $a'$ and $b'$, respectively). To obtain $\bS_{e_2}$, first make a disjoint copy $\bS_{e_2}'$ of $\bS_{\lev(e_2)} = \bS_1$. Set $\ell = \lev(e_2) = 1$. Since $e_1$ is a forward edge and $e_2$ is a backward edge, $\ell' = \ell + 1 = 2$. Therefore to ``glue''  $\bS_{e_2}'$ to $\bS_{e_1}$, we need to look at $S_{\ell} \cap S_{\ell'} = \{c,d\}$. Assume that the copy of $c$ and $d$ in $\bS_{e_1}$ are $c'$ and $d'$, respectively. Furthermore, assume that the copy of $c$ and $d$ in $\bS_{e_2}'$ are $\tilde{c}$ and $\tilde{d}$, respectively. To obtain $\bS_{e_2}$, we rename $\tilde{c}$ to $c'$, and $\tilde{d}$ to $d'$ in $\bS_{e_2}'$. \emph{Right:} A specific example when $\bS_0, \bS_1, \bS_2$ are the digraphs in the boxes. The dashed lines indicate identification of vertices. The level of $(\bS_{e_2}, \bS_{e_3})$, for example, is $0$ since $e_2$ and $e_3$ share a vertex at vertex level $1$.} \label{zigzag}
\end{center}
\end{figure}

\subsection{Two Dualities for Symmetric Datalog}

The two main theorems (Theorems~\ref{sym_obs_thm}~and~\ref{piece_sym_obs_thm}) of this section can be combined to obtain the equivalence of the statements \eqref{m1}, \eqref{m3} and \eqref{m4} in Theorem~\ref{main_duality_thm} below. The proof of the implication \eqref{m1} $\rightarrow$ \eqref{m2} is a direct adaptation of the proof of the result from \cite{Feder/Vardi:1999:Computational} that if $\coCSP(\bB)$ is defined by a $(j,k)$-Datalog program, then it is also defined by the canonical $(j,k)$-Datalog program (see also~\cite{Dalmau/Larose:2008:Maltsev}). Note that \eqref{m1} $\rightarrow$  \eqref{m2} is also obvious from the proof of Theorem~\ref{sym_obs_thm} below.
\begin{thm}\label{main_duality_thm}
For a finite structure $\bB$, TFAE:
\begin{enumerate}
\item There is a symmetric Datalog program that defines $\coCSP(\bB)$;\label{m1}
\item The canonical symmetric $(j,k)$-Datalog program defines $\coCSP(\bB)$;\label{m2}
\item $\bB$ has symmetric bounded pathwidth duality (for some parameters);\label{m3}
\item $\bB$ has piecewise symmetric bounded pathwidth duality (for some parameters).\label{m4}
\end{enumerate}
\end{thm}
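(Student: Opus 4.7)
The plan is to verify the four-way equivalence by closing a cycle of implications, offloading the bulk of the content to the two forthcoming dualities. The trivial links are $(2) \Rightarrow (1)$, which holds because every canonical symmetric $(j,k)$-Datalog program $(j,k)$-$\CS(\bB)$ is by construction a symmetric Datalog program, and $(3) \Rightarrow (4)$, which holds because SBPD is, as the author remarks, a special case of PSBPD. The substantive duality equivalences $(1) \Leftrightarrow (3)$ and $(4) \Rightarrow (1)$ will be supplied by Theorems~\ref{sym_obs_thm} and~\ref{piece_sym_obs_thm} respectively, which are the two main results of this section. Thus the only implication requiring any independent work in the present proof is the collapse $(1) \Rightarrow (2)$.

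To establish $(1) \Rightarrow (2)$, I would adapt the standard Feder--Vardi collapse argument to the symmetric setting. Assume $\cP$ is a symmetric Datalog program of width $(j,k)$ defining $\coCSP(\bB)$. For each IDB $I$ of $\cP$ of arity $r$, associate the relation $Q_I \subseteq B^r$ obtained by interpreting $\cP$ as a system of implications over $\bB$ and taking the least fixed point. Each rule $I_c(\mathbf{x}) \leftarrow I_d(\mathbf{y}) \wedge \bar{E}(\mathbf{z})$ of $\cP$ then makes the implication $Q_{I_c}(\mathbf{x}) \leftarrow Q_{I_d}(\mathbf{y}) \wedge \bar{E}(\mathbf{z})$ valid over $\bB$ for every instantiation of the variables, so the matching rule is in $(j,k)$-$\CS(\bB)$; by symmetry of $\cP$ the reverse implication is also valid, which is exactly the extra requirement for the rule to survive the passage from $\CL(\bB)$ to $\CS(\bB)$. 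A rule-by-rule induction then transfers any $\cP$-derivation on an input $\bA$ into a $(j,k)$-$\CS(\bB)$-derivation on $\bA$. Soundness of $(j,k)$-$\CS(\bB)$ in the other direction, i.e., that it accepts only structures $\bA \not\rightarrow \bB$, follows from its very definition, because its rules are implications that hold in $\bB$, so any homomorphism $\bA \to \bB$ would falsify the goal atom.

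The main obstacle, such as it is, is bookkeeping rather than genuine technical difficulty: one must verify that symmetry is preserved under the collapse so that both members of every symmetric pair of $\cP$ translate to rules of the \emph{symmetric} canonical program and not merely of the linear one. This is precisely where the hypothesis that $\cP$ is symmetric is used. As the author notes, this implication can also be read off from the proof of Theorem~\ref{sym_obs_thm}, since constructing a symmetric bounded-pathwidth obstruction set from $\cP$ and then turning that obstruction set back into a Datalog program naturally produces the canonical symmetric program, thereby giving $(1) \Rightarrow (2)$ for free once Theorem~\ref{sym_obs_thm} is in hand.
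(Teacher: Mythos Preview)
Your proposal is correct and follows essentially the same route as the paper: the equivalence of (1), (3), and (4) is delegated to Theorems~\ref{sym_obs_thm} and~\ref{piece_sym_obs_thm}, the implication $(2)\Rightarrow(1)$ is trivial, and $(1)\Rightarrow(2)$ is the Feder--Vardi collapse adapted to the symmetric setting, exactly as the paper indicates. One small attribution slip: Theorem~\ref{piece_sym_obs_thm} states $(3)\Leftrightarrow(4)$ (SBPD $\Leftrightarrow$ PSBPD), not $(4)\Rightarrow(1)$ directly; you get $(4)\Rightarrow(1)$ only by composing $(4)\Rightarrow(3)$ from Theorem~\ref{piece_sym_obs_thm} with $(3)\Rightarrow(1)$ from Theorem~\ref{sym_obs_thm}, which is of course harmless for the logic of your cycle.
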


\subsubsection{Symmetric Bounded Pathwidth Duality}

\begin{defn}[$(j,k)$-symmetric]\label{symmetric}
Assume that $\cO$ is a set of $(j,k)$-paths. Suppose furthermore that a $(j,k)$-path representation can be fixed for each structure in $\cO$ such that the following holds. For every $\bS \in \cO$ with representation $\sS$ of some length $n$, and every minimal oriented path $\bP$ of height $n$, it holds that $\xi(\sS,\bP) \in \cO$. Then $\cO$ is said to be \emph{$(j,k)$-symmetric}.
\end{defn}

\begin{defn}[SBPD]
A structure $\bB$ has \emph{$(j,k)$-symmetric bounded pathwidth duality ($(j,k)$-SBPD)} if there is an obstruction set $\cO$ for $\bB$ that consists of $(j,k)$-paths, and in addition, $\cO$ is $(j,k)$-symmetric.
\end{defn}

The following is our main duality theorem for symmetric Datalog:
\begin{thm}\label{sym_obs_thm}
For a finite structure $\bB$, $\coCSP(\bB)$ can be defined by a symmetric $(j,k)$-Datalog program if and only if $\bB$ has $(j,k)$-SBPD.
\end{thm}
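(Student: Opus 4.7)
The plan is to prove both implications using the canonical symmetric $(j,k)$-Datalog program $\CS(\bB)$ on the Datalog side and the class of structures extracted from free derivations on the obstruction side, following the template of the classical dualities for Datalog and linear Datalog but inserting a zigzag step to accommodate symmetry.

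\medskip
\noindent\textbf{Forward direction ($\Rightarrow$).} Assuming $\cP$ is a symmetric $(j,k)$-Datalog program defining $\coCSP(\bB)$, I would take $\cO := \{\Ex(\sD) : \sD \text{ is a free } \cP\text{-derivation}\}$ and verify three properties. First, $\cO$ is an obstruction set: composing the $\lambda_\ell$ of a free derivation $\sD$ with a homomorphism $\Ex(\sD) \to \bA$ yields a $\cP$-derivation for $\bA$, while conversely any $\bA \not\to \bB$ has a $\cP$-derivation whose free version furnishes some $\bS \in \cO$ mapping to $\bA$. Second, each $\Ex(\sD)$ is a $(j,k)$-path under the natural representation $\bS_\ell := \lambda_\ell(V_\ell)$, because each rule has at most $k$ variables and freeness confines the overlap of consecutive pieces to the image of the shared IDB of arity at most $j$. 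Third---the key step---$\cO$ is $(j,k)$-symmetric: given $\sS = (\bS_0, \ldots, \bS_{n-1})$ and a minimal oriented path $\bP$ of height $n$, I would build a derivation $\sD'$ that traces $\bP$ edge-by-edge, applying the original rule $\rho_{\ell+1}$ at each forward edge of level $\ell$ and its symmetric pair at each backward edge (available because $\cP$ is symmetric), with variable renaming performed exactly as in the zigzag gluing so that $\Ex(\sD') \cong \xi(\sS, \bP)$; minimality of $\bP$ ensures the first rule of $\sD'$ is still nonrecursive and the last rule is still a goal rule.

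\medskip
\noindent\textbf{Backward direction ($\Leftarrow$).} Assuming $\bB$ has $(j,k)$-SBPD witnessed by $\cO$, I would argue that $\CS(\bB)$ defines $\coCSP(\bB)$. Soundness is routine from the semantic definition of the canonical IDBs. For completeness, pick $\bA \not\to \bB$ and $\bS \in \cO$ with $\bS \overset{g}{\to} \bA$ and representation $\sS = (\bS_0, \ldots, \bS_{n-1})$. I would define relations $Q_i$ over $B$ encoding which partial assignments of the interface $S_i \cap S_{i+1}$ extend to a homomorphism $\bS_{[0,i]} \to \bB$, so that the forward implications $Q_{i+1} \leftarrow Q_i \wedge \bar{E}_{i+1}$ hold automatically and the associated rules already lie in $\CL(\bB)$; their concatenation along $\sS$, terminating in the goal because $\bS \not\to \bB$ forces $Q_{n-1}$ to be empty, yields a $\CL(\bB)$-derivation for $\bA$ via $g$.

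\medskip
\noindent\textbf{Main obstacle.} The crux is promoting this $\CL(\bB)$-derivation to a $\CS(\bB)$-derivation, i.e., showing that each canonical rule used also satisfies the reverse implication $Q_i \leftarrow Q_{i+1} \wedge \bar{E}_{i+1}$. My plan is a contradiction argument: if the reverse implication failed at some level $i$, I would parley the failure witness into the design of a specific minimal oriented path $\bP$ that oscillates across level $i$ in a carefully chosen pattern, so that $\xi(\sS,\bP)$ could not be mapped to $\bA$ compatibly with $g$; this would contradict $\xi(\sS,\bP) \in \cO$ (obtained from the $(j,k)$-symmetry of $\cO$) together with the obstruction property forcing every such structure to map to $\bA$. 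The bulk of the technical work lies in this combinatorial path design---translating an algebraic failure of a symmetric canonical rule into a failure of zigzag closure---which is precisely where the ``global'' flavor of SBPD emphasized in the introduction has to do its work.
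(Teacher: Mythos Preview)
Your forward direction matches the paper's: the obstruction set is exactly $\{\Ex(\sD):\sD\text{ free}\}$, and the paper also treats the verification as ``laborious but straightforward.''

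Your backward direction, however, has a genuine gap in the ``main obstacle'' paragraph. The contradiction you aim for cannot be achieved: for \emph{every} minimal oriented path $\bP$ there is always a homomorphism $\xi(\sS,\bP)\to\bS$ (each $\bS_{e_i}$ maps to its original $\bS_{\lev(e_i)}$ via $\varphi_{e_i}$, and these agree on overlaps by construction of the zigzag), so composing with $g$ gives $\xi(\sS,\bP)\to\bA$ automatically. You will never be able to design a $\bP$ for which $\xi(\sS,\bP)\not\to\bA$. Moreover, the obstruction property does not ``force every such structure to map to $\bA$'': it only guarantees that \emph{some} element of $\cO$ maps to $\bA$, not all of them. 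So neither the target of your contradiction nor the alleged contradicting fact is correct. The only useful consequence of $\xi(\sS,\bP)\in\cO$ is $\xi(\sS,\bP)\not\to\bB$, and it is this---not mapping to $\bA$---that must be exploited.

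The paper resolves this by abandoning the attempt to show that your straight-path relations $Q_i$ themselves yield symmetric rules (in general they do not). Instead, Lemma~\ref{main_symmetric_pathwidth} runs the $\CL(\bB)$-derivation construction for \emph{every} minimal oriented path $\bP$ of height $n$, obtaining relations $M_i^{\bP}$ along each, and then at each level $\ell$ takes the union $M_\ell=\bigcup_{\bP}\bigcup_{\lev(J_t)=\ell} M_t^{\bP}$. These unions are what label the IDBs in the final $\CS(\bB)$-derivation. Symmetry of the resulting rules falls out immediately because the family of minimal oriented paths is closed under local direction reversals, so any witness to a failed reverse implication is already absorbed into the union at the appropriate level. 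The goal predicate is reached because if any $M_q^{\bP}$ were nonempty one could read off a homomorphism $\xi(\sS,\bP)\to\bB$, contradicting $\xi(\sS,\bP)\in\cO$. Thus the key idea you are missing is to \emph{enlarge} the IDB relations by quantifying over all zigzags up front, rather than fixing the straight-path $Q_i$ and trying to argue symmetry after the fact.
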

We will use Lemma~\ref{easy_obs} in the proof of Theorem~\ref{sym_obs_thm}. Lemma~\ref{easy_obs} can be proved using the standard canonical Datalog program argument. Lemma~\ref{main_symmetric_pathwidth} is also used in the proof of Theorem~\ref{sym_obs_thm} and it is the main technical lemma of the section.
\begin{lem}\label{easy_obs}
 If $\CS(\bB)$ accepts a structure $\bA$, then $\bA \not \rightarrow \bB$.
\end{lem}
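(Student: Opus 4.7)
The plan is to argue by the standard canonical-program soundness argument. Note first that every rule of $\CS(\bB)$ is also a rule of $\CL(\bB)$: the extra condition for inclusion in $\CS(\bB)$ is more restrictive, never less. Hence it suffices to show that $\CL(\bB)$ is sound, i.e., that if $\CL(\bB)$ accepts $\bA$ then $\bA \not\to \bB$. I would carry this out directly, contrapositively: assume for contradiction that there is some homomorphism $h:\bA \to \bB$ and a $\CL(\bB)$-derivation $\sD = (\rho_1,\lambda_1),\dots,(\rho_q,\lambda_q)$ for $\bA$, and derive a contradiction from the goal rule.

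The key invariant, to be proved by induction on $\ell \in [q]$, is the following: if the head IDB of $\rho_\ell$ is $I_c$ with variables $\mathbf{x}$, then $h(\lambda_\ell(\mathbf{x})) \in Q_c$. For the base case ($\ell = 1$), $\rho_1$ is nonrecursive, so it has the form $I_c(\mathbf{x}) \leftarrow \bar E(\mathbf{z})$. Because this rule belongs to $\CL(\bB)$, the implication $Q_c(\mathbf{x}) \leftarrow \bar E(\mathbf{z})$ holds over $B$ for every instantiation. Since $\sD$ is a derivation for $\bA$, each EDB appearing in $\rho_1$ holds in $\tilde{\bA}$ under $\lambda_1$, and since $h$ is a homomorphism, $h(\lambda_1(\mathbf{z}))$ satisfies the corresponding EDBs in $\bB$. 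Therefore $h(\lambda_1(\mathbf{x})) \in Q_c$, as required. For the inductive step, $\rho_\ell$ has the form $I_c(\mathbf{x}) \leftarrow I_d(\mathbf{y}) \wedge \bar E(\mathbf{z})$; by the linkage condition in the definition of a derivation, $\lambda_\ell(\mathbf{y}) = \lambda_{\ell-1}(\mathbf{x}')$ where $\mathbf{x}'$ is the head-variable tuple of $\rho_{\ell-1}$. By the inductive hypothesis, $h(\lambda_{\ell-1}(\mathbf{x}')) \in Q_d$, hence $h(\lambda_\ell(\mathbf{y})) \in Q_d$; combining this with the EDB constraints (which $h$ preserves) and the defining implication of the canonical rule, we obtain $h(\lambda_\ell(\mathbf{x})) \in Q_c$.

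Finally, apply the invariant to $\ell = q$. The rule $\rho_q$ is the goal rule, so its head IDB is $I_g$, which is $0$-ary, and by definition of the canonical program $Q_g = \emptyset$. The invariant then gives $h(\lambda_q(\varnothing)) \in \emptyset$, an impossibility. This contradicts the assumed existence of $h$, proving $\bA \not\to \bB$.

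There is no significant obstacle here; the only mildly delicate point is bookkeeping the linkage condition between successive rules of a derivation so that the IDB-tuple carried forward is literally the same tuple under $\lambda$, which is exactly what Section~\ref{Datalog} guarantees via the requirement $\lambda_\ell(\mathbf{x}[i]) = \lambda_{\ell+1}(\mathbf{y}[i])$.
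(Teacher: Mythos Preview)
Your proof is correct and is essentially the same standard canonical-program soundness argument the paper invokes; the paper just presents it more tersely, factoring it as ``$\CS(\bB)$ does not accept $\bB$ (since a derivation would yield a valid chain of implications ending in $Q_g=\emptyset$), and acceptance is preserved under homomorphisms,'' whereas you unfold both observations into a single explicit induction along the derivation. Your added remark that every rule of $\CS(\bB)$ is already a rule of $\CL(\bB)$ is fine and makes the reduction to the linear canonical program clean.
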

\begin{proof}[Proof]
Structure $\bB$ is not accepted by $\CS(\bB)$ because a derivation could be translated into a valid chain of implications, which is not possible by the definition of $\CS(\bB)$. If $\CS(\bB)$ accepts $\bA$ and $\bA \rightarrow \bB$, then $\CS(\bB)$ accepts $\bB$, a contradiction.
\end{proof}

\begin{lem}\label{main_symmetric_pathwidth}
For any $\tau$-structures $\bA$ and $\bB$, if there exists a structure $\bS$ with a $(j,k)$-path representation $\mathscr{S}$ of some length $n$ such that $\bS \rightarrow \bA$, and for any minimal oriented path $\bP$ of height $n$, it holds that $\xi(\mathscr{S}, \bP) \not \rightarrow \bB$, then $(j,k)$-$\CS(\bB)$ accepts $\bA$.
\end{lem}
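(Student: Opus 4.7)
The plan is to build a canonical symmetric $(j,k)$-Datalog derivation for $\bA$ by walking through the path representation $\mathscr{S} = (\bS_0, \ldots, \bS_{n-1})$ and instantiating variables via the homomorphism $h : \bS \to \bA$ granted by the hypothesis. For each ``vertex level'' $m \in \{0, 1, \ldots, n\}$, set the interface $I_m := S_{m-1} \cap S_m$ (with the convention $I_0 := I_n := \emptyset$); each $I_m$ has size at most $j$ by the definition of a $(j,k)$-path representation. I would then define relations $Q_m \subseteq B^{|I_m|}$ as the smallest family satisfying $() \in Q_0$ and closed under the following \emph{bidirectional} step: for every $m \in \{0,\ldots,n-1\}$ and every homomorphism $\phi : \bS_m \to \bB$, the tuples $\phi|_{I_m}$ and $\phi|_{I_{m+1}}$ both belong to $Q_m$ and $Q_{m+1}$, or neither does.

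By construction, for each $m$, both implications $Q_{m+1}(\mathbf{y}) \leftarrow Q_m(\mathbf{x}) \wedge \bar{E}_m(\mathbf{z})$ and $Q_m(\mathbf{x}) \leftarrow Q_{m+1}(\mathbf{y}) \wedge \bar{E}_m(\mathbf{z})$ are valid over $\bB$, where $\bar{E}_m$ denotes the conjunction of atomic facts of $\bS_m$; hence the rule $I_{Q_{m+1}}(\mathbf{y}) \leftarrow I_{Q_m}(\mathbf{x}) \wedge \bar{E}_m(\mathbf{z})$ lies in $(j,k)$-$\CS(\bB)$. The intended derivation is $\rho_0, \rho_1, \ldots, \rho_n$: $\rho_0$ is the nonrecursive rule $I_{Q_0}() \leftarrow$ (in the program because $() \in Q_0$), and $\rho_m$ (for $m \geq 1$) is the rule above using piece $\bS_{m-1}$, with $\lambda_m$ sending each variable for an element $x \in S_{m-1}$ to $h(x) \in A$. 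The EDB atoms $\bar{E}_{m-1}(\lambda_m(\mathbf{z}))$ then hold in $\bA$ since $h$ is a homomorphism, and consecutive head--body IDBs match. For $\rho_n$ to be a goal rule I need $I_{Q_n}$ to coincide with the goal IDB $I_g$; since both are $0$-ary, this amounts to showing $Q_n = \emptyset$, which is the main technical step.

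To prove $Q_n = \emptyset$, I would first reformulate the closure definition as the following path characterization: $\mathbf{b} \in Q_m$ iff there exist an oriented path $\bP$ from a vertex at level $0$ to a vertex at level $m$ (with vertex levels all in $\{0,\ldots,n\}$) and a homomorphism $\psi : \xi(\mathscr{S}, \bP) \to \bB$ with $\psi$ equal to $\mathbf{b}$ on the end-vertex interface. Indeed, each forward (resp.\ backward) closure step corresponds to appending a forward (resp.\ backward) edge to $\bP$ and extending $\psi$ by $\phi$ on the newly added copy of $\bS_m$; the two agree precisely on the interface along which the new piece is glued. If $Q_n$ were nonempty, some path $\bP$ from level $0$ to level $n$ would admit a homomorphism $\psi$. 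Letting $v_t$ be the last vertex of $\bP$ at level $0$ and $v_s$ the first subsequent vertex at level $n$, the sub-path $\bP^*$ from $v_t$ to $v_s$ is minimal of height $n$; moreover, the ``natural'' map $\xi(\mathscr{S}, \bP^*) \to \xi(\mathscr{S}, \bP)$ that sends each fresh piece-copy in $\bP^*$ to its counterpart in $\bP$ is a homomorphism, so composing with $\psi$ yields $\xi(\mathscr{S}, \bP^*) \to \bB$, contradicting the hypothesis.

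The delicate step is verifying that the natural map in the trimming argument is well-defined and is indeed a homomorphism: one has to unfold the inductive zigzag construction to check that every identification of elements made when building $\xi(\mathscr{S}, \bP^*)$ (coming from gluings between consecutive pieces within $\bP^*$) is already made in $\xi(\mathscr{S}, \bP)$, while the extra identifications present in $\xi(\mathscr{S}, \bP)$ only further collapse elements coming from pieces outside $\bP^*$, so that sending fresh $\bP^*$-elements to their possibly-collapsed $\bP$-counterparts respects all tuples. Once this is in place, the remaining tasks---rule membership in $\CS(\bB)$, correct matching of IDB variables between consecutive rules, and the goal-rule status of $\rho_n$---are routine bookkeeping from the canonical program setup.
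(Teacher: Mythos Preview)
Your proposal is correct and follows essentially the same strategy as the paper's proof: both define, for each interface level, an IDB relation that is closed under forward and backward propagation through the pieces $\bS_m$, verify that this closure makes the corresponding rules symmetric (hence in $\CS(\bB)$), and show that the top-level relation is empty using the hypothesis that no zigzag of $\mathscr{S}$ by a minimal oriented path maps to $\bB$. The only cosmetic difference is packaging---the paper obtains the relations by explicitly enumerating all minimal oriented paths and taking unions of the forward-propagated values at each level, whereas you define them directly as the least bidirectionally closed family and then invoke the path characterization (with a trimming step to extract a minimal subpath); these are two descriptions of the same object.
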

To prove Lemma~\ref{main_symmetric_pathwidth} we need to define an additional concept related to the zigzag operator. Once the $(j,k)$-path $\xi(\mathscr{S},\bP) = (\bS_{e_0},\dots,\bS_{e_q})$ is defined, where $\bP$ is the path $e_0,\dots,e_q$, each pair $(\bS_{e_i}, \bS_{e_{i+1}})$, $\forall i \in \{0,\dots,q-1\}$ is assigned a \emph{level}: $\lev(\bS_{e_i}, \bS_{e_{i+1}})$ is the level of the vertex $v$ minus $1$, where $v$ is the vertex that $e_i$ and $e_{i+1}$ share (see Figure~\ref{zigzag}).

\begin{proof}[Proof of Lemma~\ref{main_symmetric_pathwidth}]
For the rest of this proof, let $\cC\cS$ denote $(j,k)$-$\CS(\bB)$, and $\cC\cL$ denote $(j,k)$-$\CL(\bB)$. If program $\cC\cS$ accepts structure $\bS$ then because $\bS \rightarrow \bA$, $\cC\cS$ also accepts $\bA$. So it is sufficient to show that program $\cC\cS$ accepts structure $\bS$.

First we specify how to associate a $\mathcal{CL}$-derivation with $\xi(\sS, \bP)$, where $\bP$ is a minimal oriented path of height $n$. Assume that $\xi(\sS, \bP) = \bS_0 \cup \dots \cup \bS_q$. For each \mbox{$i \in \{0,\dots,q-1\}$}, fix an arbitrary order on the elements of $S_i \cap S_{i+1}$. Assume that $|S_i \cap S_{i+1}| = j' (\leq j)$, and define the $j'$-tuple $\mathbf{s}_i$ such that $\mathbf{s}_i[\ell]$ is the $\ell$-th element of $S_i \cap S_{i+1}$. We define $\mathbf{s}_q$ to be the empty tuple. It is good to keep in mind that later, $\mathbf{s}_i$ will be associated with the IDB $J_i$.

The derivation will be $\mathscr{D}_{\cC\cL}(\xi(\sS,\bP)) = (\rho_0,\lambda_0), \dots, (\rho_q,\lambda_q)$. We specify $\rho_i$ as
\begin{align*}
J_i(\mathbf{x}_i) &\leftarrow J_{i-1}(\mathbf{x}_{i-1}) \wedge \bar{E}(\mathbf{y}_i) & J_0(\mathbf{x}_0) &\leftarrow \bar{E}(\mathbf{y}_0)\\
\text{if } i &\in [q] & \text{if } i &= 0.
\end{align*}
We begin with describing the EDBs of a rule $\rho_i$ together with their variables. Assume that $S_i = \{d_1,\dots,d_t\}$, and observe that $t \leq k$. The variables of $\rho_i$ are $v_1,\dots,v_t$. For every $R \in \tau$, and every tuple $(d_{f(1)},\dots,d_{f(r)}) \in R^{\bS_i}$, where $r = \ar(R)$, $R(v_{f(1)},\dots,v_{f(r)})$ is an EDB of $\rho_i$.

We describe the variables of the IDBs $J_{i-1}$ and $J_i$. Assume that $\mathbf{s}_{i-1} = (d_{g(1)},\dots,d_{g(j_1)})$ and $\mathbf{s}_i = (d_{h(1)},\dots,d_{h(j_2)})$. Then the IDB in the body of $\rho_i$ together with its variables is $J_{i-1}(v_{g(1)},\dots,v_{g(j_1)})$, and the head IDB together with its variables is $J_i(v_{h(1)},\dots,v_{h(j_2)})$. The function $\lambda_i$ simply assigns the value $d_g$ to the variable $v_g$, $\forall g \in [t]$.

It remains to specify the IDBs, i.e., which IDBs of $\cC\cL$ the $J_i$-s correspond to. For each $i \in \{0,\dots,q\}$, $I_i$ denotes $I_{M_i^{\bP}}$, where $M_i^{\bP}$ is a subset of $B^{j'}$ for some $j' \leq j$. We define the sequence $M_0^{\bP},M_1^{\bP},\dots,M_q^{\bP}$ inductively. To define $M_0^{\bP}$, consider the nonrecursive rule $J_0(\mathbf{x}_0) \leftarrow \bar{E}(\mathbf{y}_0)$. Assume that the arity of $J_0$ is $j'$, and that $\mathbf{y}_0$ contains $k'$ variables. (Note that the variables in $\mathbf{x}_0$ and $\mathbf{y}_0$ are not necessarily disjoint.) For all possible functions $\alpha : \mathbf{x}_0[1],\dots,\mathbf{x}_0[j'],\mathbf{y}_0[1],\dots, \mathbf{y}_0[k'] \rightarrow B$ such that the conjunction of EDBs $\bar{E}(\alpha(\mathbf{y}_0[1]),\dots,\alpha(\mathbf{y}_0[k']))$ is true, place the tuple $(\alpha(\mathbf{x}_0[1]),\dots,\alpha(\mathbf{x}_0[j']))$ into $M_0^{\bP}$.

Assume that $M_{i-1}^{\bP}$ is already defined. Then similarly to the base case, for each possible instantiation $\alpha$ of the variables of $\rho_i$ over $B$ with the restriction that $\alpha(\mb{x}_{i-1}) \in M_{i-1}^{\bP}$, if the conjunction of EDBs of $\rho_i$ is true, then add the tuple $\alpha(\mb{x}_i)$ to $M_i^{\bP}$. It is not difficult to see that if $M_q^{\bP} \neq \emptyset$, then we can construct a homomorphism from $\xi(\mathscr{P},\bP)$ to $\bB$ which would be a contradiction.

For each $i \in \{0,\dots,q-1\}$, assume that $(\bS_i,\bS_{i+1})$ has level $\ell_i$. Then we say that the IDB $J_i$ has \emph{level} $\ell_i$ and we write $\lev(J_i) = \ell_i$.

We proceed to construct a $\cC\cS$-derivation $\mathscr{D}_{\cC\cS}(\bS)$ for $\bS$. Let $\bQ$ be a directed path of height $n$. We construct $\mathscr{D}_{\cC\cS}(\bS)$ just like we would construct $\mathscr{D}_{\cC\cL}(\xi(\mathscr{S},\bQ))$ above, except that we will define the subscripts of the IDBs, $M_0^\bQ,\dots,M_{n-1}^\bQ$, differently, so that every rule of the resulting derivation belongs to $\cC\cS$. From now on we write $M_0,\dots,M_{n-1}$ instead of $M_0^\bQ,\dots,M_{n-1}^\bQ$.

To define $M_0,\dots,M_{n-1}$, let $\bP_0, \bP_1, \dots$ be an enumeration of all (finite) minimal oriented paths of height $n$. Intuitively, we will collect in $\cN^\ell_m$ all subscripts (recall that a subscript is a relation) of all those IDBs which have the same level $\ell$ in $\mathscr{D}_{\cL}(\xi(\mathscr{S},\bP_m))$. Formally, for each $\ell \in \{0,\dots,n-1\}$ define $\cN^\ell_m = \{M_t^{\bP_m} \; | \; \lev(J_t)=\ell\}$.
Then we collect the subscripts at a fixed level $\ell$ in $\cO_\ell$ over all derivations corresponding to $\bP_0,\bP_1,\dots$. Formally, for each $\ell \in \{0,\dots,n-1\}$, we define $\cO_\ell = \cN^\ell_0 \cup \cN^\ell_1,\dots$. We are ready to define $M_0,\dots,M_{n-1}$. For each $s \in \{0,\dots,n-1\}$, define
$M_s = \bigcup_{W \in \cO_s} W$.

It remains to show that every rule of the derivation we defined is in $\cS$ and that the last IDB is the goal IDB. If the last IDB is not the goal IDB of $\cS$, then $M_{n-1} \neq \emptyset$. By definition, it must be that for some minimal oriented path $\bP_m$ of height $n$ and length $q_m$, $\bM^{\bP_m}_{q_m-1} \neq \emptyset$ (note that the last IDB of $\mathscr{D}_{\cC\cL}(\xi(\mathscr{P}, \bP_m))$ has subscript $\bM^{\bP_m}_{q_m-1}$). As noted before, this would mean that $\xi(\mathscr{P},\bP_m) \rightarrow \bB$, a contradiction.

We show that each rule of $\mathscr{D}_{\cC\cS}(\bS)$ as defined above belongs to $\CS(\bB)$. Suppose $\mathscr{D}_{\cC\cS}(\bS)$ contains a rule $\rho$
\[
J_i(\mathbf{x}_i) \leftarrow J_{i-1}(\mathbf{x}_{i-1}) \wedge \bar{E}(\mathbf{y}_i)
\]
that is not in $\CS(\bB)$. By definition, there cannot be an instantiation $\alpha$ of variables of $\rho$ to elements of $B$ such that $\alpha(\mathbf{x}_{i-1}) \in M_{i-1}$, the conjunction of EDBs holds, but $\alpha(\mathbf{x}_i) \not \in M_i$. Assume then that there is an $\alpha$ such that $\alpha(\mathbf{x}_i) \in M_i$, the conjunction of EDBs holds, but $\alpha(\mathbf{x}_{i-1}) \not \in M_{i-1}$. It is also not difficult to see that this is not possible because we used \emph{all} minimal oriented paths in the construction of $\mathscr{D}_{\cC\cS}(\bS)$.
\end{proof}

\begin{proof}[Proof of Theorem~\ref{sym_obs_thm}]
If $\CSP(\bB)$ is defined by a symmetric $(j,k)$-Datalog program $\cP$, then using the symmetric property of $\cP$, it is laborious but straightforward to show that
\[\cO = \bigcup_{\substack{\text{$\sD$ is a free}\\ \text{derivation of $\cP$}}} \{\Ex(\sD)\}\]
is a $(j,k)$-symmetric obstruction set for $\bB$.

For the converse, assume that $\bB$ has $(j,k)$-SBPD. Let $\cO$ be a symmetric obstruction set of width $(j,k)$ (i.e., the path decomposition of every structure in $\cO$ has width $(j,k)$) for $\bB$. We claim that $(j,k)$-$\CS(\bB)$ defines $\CSP(\bB)$. Assume that $\bA \rightarrow \bB$. Then by Lemma~\ref{easy_obs}, $(j,k)$-$\CS(\bB)$ does not accept $\bA$. Suppose now that $\bA \not \rightarrow \bB$. Then by assumption, there exists a $(j,k)$-path $\bS \in \cO$ with a representation $\mathscr{S}$ of length $n$ such that $\bS \rightarrow \bA$. Furthermore, since $\cO$ is symmetric, for any minimal oriented path $\bP$ of height $n$, $\xi(\mathscr{S},\bP) \not \rightarrow \bB$. It follows from Lemma~\ref{main_symmetric_pathwidth} that $\CS(\bB)$ accepts $\bA$.
\end{proof}


\subsubsection{Piecewise Symmetric Bounded Pathwidth Duality}

Piecewise symmetric bounded pathwidth duality (PSBPD) for symmetric Datalog is less stringent than SBPD; however, the price is larger program width. Although the following definitions might seem technical, the general idea is simple: a piecewise symmetric obstruction set $\cO$ does not need to contain all $(j,k)$-paths obtained by ``zigzagging'' $(j,k)$-paths in $\cO$ in all possible ways. It is sufficient to zigzag a $(j,k)$-path $\bS$ using only oriented paths which ``avoid'' certain segments of $\bS$: some constants $c$ and $d$ are fixed for $\cO$, and there are at most $c$ fixed segments of $\bS$ that are avoided by the zigzag operator, each of size at most $d$. We give the formal definitions.

\begin{defn}[$(c,d)$-filter]
Let $\bS$ be a $(j,k)$-path with a representation $\sS = \bS_0 , \dots , \bS_{n-1}$. A \emph{$(c,d)$-filter} $\mathscr{F}$ for $\sS$ is a set of intervals $\{[s_1,t_1],[s_2,t_2],\dots,[s_{c'}, t_{c'}]\}$
such that
\begin{itemize}
\item $c' \leq c$; $0 \leq s_1$; $t_{c'} \leq n-1$; $s_i \leq t_i, \forall i \in[c']$; and $t_\ell + 2 \leq s_{\ell+1}, \forall \ell \in [c'-1]$;
\item $|\bigcup_{i \in [s_\ell, t_\ell]} S_i| \leq d, \forall \ell \in [c']$.
\end{itemize}
Elements of $\mathscr{F}$ are called \emph{delimiters}. An oriented path $\bP$ of height $n$ \emph{obeys} a $(c,d)$-filter $\sF$ if for any delimiter $[s_i,t_i] \in \sF$, the set of edges $e$ of $\bP$ such that $s_i \leq \lev(e) \leq t_i$ form a (single) directed path. A demonstration is given in Figure~\ref{filter}.
\end{defn}
\begin{figure}[h!]
\begin{center}
\includegraphics[scale=1]{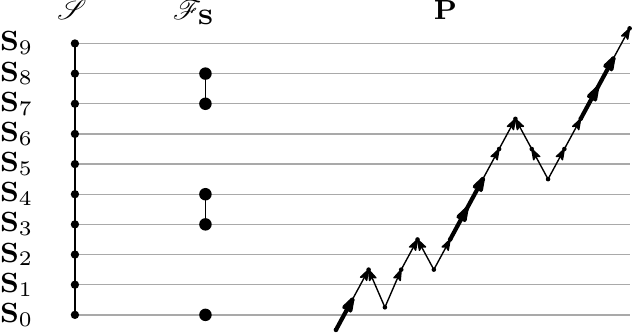}
\caption{$\mathscr{S}$ is a $(j,k)$-path representation of $\bS$. $\mathscr{F}_{\bS}$ is the $(3,2k)$-filter $\{[0,0],[3,4],[7,8]\}$ for $\mathscr{S}$. $\bP$ is an oriented path that obeys the filter. For example, observe that the edges at levels 3 and 4 form a directed subpath, and that ``zigzagging'' happens only at those parts of $\bP$ that do not fall into the intervals of the filter.}\label{filter}
\end{center}
\vspace{-0.5cm}
\end{figure}

\begin{defn}[Piecewise Symmetric]\label{piecewise_sym}
Assume that $\cO$ is a set of $(j,k)$-paths, and $c$ and $d$ are nonnegative integers. Suppose furthermore that for each $\bS \in \cO$, there is a $(j,k)$-path representation $\sS$, and a $(c,d)$-filter $\sF_{\bS}$ such that the following holds. For every $\bS \in \cO$ of some length $n$, and every minimal oriented path $\bP$ of height $n$ that obeys the filter $\sF_{\bS}$, it holds that $\xi(\sS,\bP) \in \cO$. Then $\cO$ is \emph{$(j,k,c,d)$-piecewise symmetric}.
\end{defn}
Roughly speaking, an oriented path $\bP$ is allowed to modulate only those segments of $\sS$ which do not correspond to any delimiters in $\sF_{\bS}$. Compare Definition~\ref{piecewise_sym} with Definition~\ref{symmetric}, and observe that the only difference is that in the piecewise case, the oriented paths must be of a restricted form. Therefore a set that is $(j,k)$-symmetric is also $(j,k,c,d)$-piecewise symmetric for any $c$ and $d$. We simply associate the empty $(c,d)$-filter with each structure.
\begin{defn}[PSBPD]
A structure $\bB$ has \emph{$(j,k,c,d)$-piecewise symmetric bounded pathwidth duality ($(j,k,c,d)$-PSBPD)} if there is an obstruction set $\cO$ for $\bB$ that consists of $(j,k)$-paths, and in addition, $\cO$ is $(j,k,c,d)$-piecewise symmetric.
\end{defn}
\begin{thm}\label{piece_sym_obs_thm}
For a finite structure $\bB$, $\bB$ has SBPD (for some parameters) if and only if $\bB$ has PSBPD (for some parameters).
\end{thm}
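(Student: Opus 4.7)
The easy direction (SBPD $\Rightarrow$ PSBPD) I would dispose of immediately by invoking the remark made right after Definition~\ref{piecewise_sym}: any $(j,k)$-symmetric obstruction set is trivially $(j,k,c,d)$-piecewise symmetric, by attaching the empty filter to every structure. My plan therefore focuses on the reverse implication: starting from a $(j,k,c,d)$-piecewise symmetric obstruction set $\cO$ for $\bB$, I would produce an SBPD obstruction set $\cO'$ for $\bB$ with enlarged width parameters.

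The strategy is to \emph{collapse} the delimiter intervals of each structure's filter into single fat bags, making the filter superfluous. For each $\bS\in\cO$ with its fixed representation $\sS_\bS = \bS_0,\dots,\bS_{n-1}$ and filter $\sF_\bS = \{[s_1,t_1],\dots,[s_{c'},t_{c'}]\}$, I would build a new representation $\sS'_\bS$ of the \emph{same structure} by replacing each block $\bS_{s_\ell},\dots,\bS_{t_\ell}$ with the single merged bag $\bS^*_\ell := \bS_{s_\ell}\cup\cdots\cup\bS_{t_\ell}$ and leaving non-delimiter bags intact. Verifying that $\sS'_\bS$ is a legitimate $(j,\max(k,d))$-path representation is routine bookkeeping: merged bags have size $\le d$ by the filter's definition; the interval property of path decompositions forces the intersection of $\bS^*_\ell$ with an adjacent non-delimiter bag $\bS_{s_\ell-1}$ (or $\bS_{t_\ell+1}$) to equal the corresponding original intersection $\bS_{s_\ell-1}\cap\bS_{s_\ell}$ (resp.\ $\bS_{t_\ell}\cap\bS_{t_\ell+1}$) and hence have size $\le j$; and the filter constraint $t_\ell+2\le s_{\ell+1}$ ensures no two merged bags end up adjacent. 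Let $\cO'$ consist of the same underlying structures as $\cO$, each paired with its new representation; this is automatically an obstruction set for $\bB$.

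The crux is proving that $\cO'$ is $(j,\max(k,d))$-symmetric. Given $\bS\in\cO'$ with representation $\sS'_\bS$ of length $n'$ and a minimal oriented path $\bP'$ of height $n'$, I would \emph{lift} $\bP'$ to a minimal oriented path $\bP$ of height $n$ in the obvious way: each edge of $\bP'$ at a non-delimiter level is transferred to $\bP$ at its corresponding original level; each edge of $\bP'$ at the (single) new level of a merged bag $\bS^*_\ell$ is replaced in $\bP$ by a monotone directed subpath traversing old levels $s_\ell, s_\ell+1, \dots, t_\ell$ (going up for a forward edge, down for a backward edge). By construction $\bP$ has height $n$, remains minimal (the unique extremal vertices of $\bP'$ stay extremal after lifting), and obeys $\sF_\bS$, because within each delimiter interval the induced edges form a single directed path.

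The step I expect to require the most care is the isomorphism $\xi(\sS'_\bS,\bP')\cong\xi(\sS_\bS,\bP)$. Unwinding the zigzag construction, each monotone directed subpath of $\bP$ spanning a delimiter interval produces a sequence of copies of $\bS_{s_\ell},\dots,\bS_{t_\ell}$ glued precisely along their original successive intersections, and this glued sequence is visibly isomorphic to $\bS^*_\ell$ --- which is exactly the single bag-copy that the corresponding edge of $\bP'$ produces under $\sS'_\bS$. The boundary identifications with neighboring non-delimiter copies also match, since the path decomposition interval property pins any shared element to $\bS_{s_\ell}$ or $\bS_{t_\ell}$. Non-merged portions of $\bP$ and $\bP'$ contribute identically on either side. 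With the isomorphism in hand, piecewise symmetry of $\cO$ gives $\xi(\sS_\bS,\bP)\in\cO$, hence $\xi(\sS'_\bS,\bP')\in\cO'$, completing the proof that $\bB$ has $(j,\max(k,d))$-SBPD.
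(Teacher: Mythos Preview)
Your argument has a genuine gap at the lifting step. You claim that the lifted path $\bP$ obeys the filter $\sF_\bS$ because ``within each delimiter interval the induced edges form a single directed path.'' But this is only true for a \emph{single} edge of $\bP'$ at the merged level; nothing prevents $\bP'$ from having several edges at that level. Concretely, if a delimiter $[s_\ell,t_\ell]$ collapses to a single new level $L$, and $\bP'$ has (say) a forward edge at level $L$ followed by a backward edge at level $L$ (a peak at $L{+}1$), then your lift replaces these by a forward directed subpath through levels $s_\ell,\dots,t_\ell$ followed by a backward directed subpath through $t_\ell,\dots,s_\ell$. The edges of $\bP$ at levels in $[s_\ell,t_\ell]$ then form an up--down zigzag, not a single directed path, so $\bP$ does \emph{not} obey $\sF_\bS$ and you cannot invoke the piecewise symmetry of $\cO$ to conclude $\xi(\sS_\bS,\bP)\in\cO$. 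Your isomorphism $\xi(\sS'_\bS,\bP')\cong\xi(\sS_\bS,\bP)$ may well hold, but it buys you nothing.

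The paper avoids exactly this problem with a more elaborate regrouping: \emph{all} delimiter material is merged into the single bag $\bT_0$ at position $0$, and the non-delimiter intervals are laid out \emph{in parallel} in the remaining bags $\bT_1,\dots,\bT_m$ (bag $\bT_\ell$ contains the $\ell$-th piece of \emph{every} complement interval simultaneously). The point is that a minimal oriented path has a unique edge at level $0$, so in any zigzag of the new representation the delimiter material is copied exactly once. The paper then does \emph{not} claim that $\cO$ itself is symmetric under the new representations; instead it closes $\cO$ under zigzagging to get $\cO_{sym}$, and shows this closure is still an obstruction set by exhibiting, for each $\bT\in\cO_{sym}$, a filter-obeying path $\bQ$ with $\xi(\sS,\bQ)\to\bT$ (a homomorphism, not an isomorphism). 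The parallel layout is what makes such a filter-obeying $\bQ$ constructible. Your in-place collapsing is too naive to support either kind of argument.
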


We need the corollary of the following lemma in the proof of the above theorem.
\begin{lem}\label{combine_ori}
Let $\bP$ be a minimal oriented path $e_0,\dots,e_{n-1}$ with the $(1,2)$-path representation $\sP = (e_0,\dots,e_{n-1})$, where we think of $e_i$ as a structure with two domain elements and a binary relation that contains the tuple $e_i$. Let $\bQ$ be a minimal oriented path $f_0,\dots,f_m$ with $n$ edge levels. Then the oriented path $\xi(\sP, \bQ)$ is minimal and has the same height as $\bP$.
\end{lem}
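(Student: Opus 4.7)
The plan is to directly unpack the zigzag construction and verify its structural properties. Three easy observations anchor the argument. Since $\sP$ is a $(1,2)$-representation, each $\bS_{f_i}$ is a copy of the single directed edge $e_{\lev(f_i)}$ of $\bP$ (two vertices, one directed edge), and the gluing step can identify at most one vertex between $\bS_{f_i}$ and $\bS_{f_{i-1}}$ because $|S_\ell \cap S_{\ell \pm 1}| = 1$. Minimality of $\bQ$ forces $f_0$ and $f_m$ to be forward, and in fact forces $f_0$ to be the unique edge of $\bQ$ at level $0$ (the bottom vertex is unique with indegree $0$, outdegree $1$) and $f_m$ to be the unique edge at level $n-1$. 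Finally, since $\bQ$ is connected and reaches every vertex level from $0$ to $n$, every edge level $0, 1, \dots, n-1$ is visited by some $f_i$, so every edge $e_\ell$ of $\bP$ gets at least one copy in $\xi(\sP, \bQ)$.

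The main technical step is to verify that $\xi(\sP, \bQ)$ is in fact an oriented path. I would proceed by induction on $i$, maintaining the invariant that the vertex of $\bS_{f_{i-1}}$ shared with $\bS_{f_i}$ is distinct from the vertex of $\bS_{f_{i-1}}$ shared with $\bS_{f_{i-2}}$. This reduces to a small case analysis on the four orientation patterns of $(f_{i-1}, f_i)$ (FF, FB, BF, BB); in each case one computes which of the two vertices of $e_{\lev(f_{i-1})}$ lies in $S_{\lev(f_i)} \cap S_{\lev(f_i)\pm 1}$ and checks that it is the opposite of the one picked at the previous step. The invariant immediately implies that the vertex of $\bS_{f_{i-1}}$ glued to $\bS_{f_i}$ is the ``fresh'' vertex introduced when $\bS_{f_{i-1}}$ itself was created, so $\bS_{f_i}$ shares no vertex with any $\bS_{f_j}$ for $j < i-1$, making $\xi(\sP, \bQ)$ a chain of single edges glued endpoint-to-endpoint, i.e., an oriented path. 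To then read off the height I would introduce the natural projection $\psi \colon V_{\xi(\sP,\bQ)} \to V_\bP$ sending each vertex to its $\bP$-origin via the maps $\varphi_{f_i}$; these were defined to agree on shared vertices, so $\psi$ is well-defined, and since each $\bS_{f_i}$ is an isomorphic copy of an edge of $\bP$ with matching orientation, $\psi$ is a digraph homomorphism. Then $\lev_\bP \circ \psi$ is a valid level function on $\xi(\sP, \bQ)$, and the third anchor observation shows $\psi$ is surjective onto $V_\bP$, so this function attains both $0$ and $\hei(\bP)$, giving $\hei(\xi(\sP, \bQ)) = \hei(\bP)$. Minimality follows at once: $p_0$ belongs only to $e_0$, which is copied only in $\bS_{f_0}$, so $\psi^{-1}(p_0)$ is the unique vertex at level $0$ and inherits indegree $0$, outdegree $1$ from the forward edge $e_0$; the symmetric argument at the top yields a unique top vertex with indegree $1$, outdegree $0$.

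I expect the main obstacle to be the case-analysis invariant in the second paragraph: one must carefully disentangle three related but distinct notions of ``level''---the $\bQ$-vertex level of an endpoint of $f_i$, the index $\lev(f_i)$ into $\sP$, and the $\bP$-vertex level of the corresponding endpoint of $e_{\lev(f_i)}$---whose correspondence flips with the orientation of the edge. Once the bookkeeping is set up, each of the four orientation cases reduces to a one-line verification, and the rest of the proof follows mechanically.
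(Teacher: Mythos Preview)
Your proposal is correct and follows essentially the same route as the paper: both arguments hinge on the natural projection $\psi$ from $\xi(\sP,\bQ)$ back to $\bP$ (sending each vertex to its original via the isomorphisms $\varphi_{f_i}$), observe that it is a homomorphism preserving edge levels, and deduce height and minimality from that. The paper's proof is a terse sketch that declares the oriented-path claim ``obvious'' and minimality ``straightforward''; your write-up simply fills in those two gaps---the inductive invariant distinguishing the two endpoints of $\bS_{f_{i-1}}$, and the uniqueness of $\psi^{-1}(p_0)$ and of the top-level preimage---so there is no genuine methodological difference.
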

\begin{proof}
It is obvious that $\xi(\sP, \bQ)$ is an oriented path. Furthermore the map that assigns every vertex of $\xi(\sP, \bQ)$ to its original in $\bP$ is a homomorphism. It is easy to check that this homomorphism maps the edges of $\xi(\sP, \bQ)$ back to their originals and the level of an edge in $\xi(\sP, \bQ)$ is the same as the level of the original of that edge. Checking the minimality of $\xi(\sP, \bQ)$ is also straightforward.
\end{proof}

\begin{coro}\label{just_once}
Let $\cO$ be a set of $(j,k)$-paths, where a $(j,k)$-representation is fixed for each path. Let $\cO'$ be the set that contains all $(j,k)$-paths that can be obtained from a $(j,k)$-path in $\cO$ by applying some zigzag operator. Then $\cO'$ is $(j,k)$-symmetric.
\end{coro}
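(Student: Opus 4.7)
The plan is to reduce the claim to Lemma~\ref{combine_ori} by establishing an associativity-style identity for the zigzag operator. Fix a structure $\bS' \in \cO'$ and, by definition of $\cO'$, write $\bS' = \xi(\sS,\bP)$ for some $\bS \in \cO$ with its fixed representation $\sS = (\bS_0,\dots,\bS_{n-1})$ and some minimal oriented path $\bP = e_0,\dots,e_q$ of height $n$. The representation of $\bS'$ produced by the zigzag construction is $\sS' = (\bS_{e_0},\dots,\bS_{e_q})$, which has length $q+1$. To prove that $\cO'$ is $(j,k)$-symmetric, one must show that for every minimal oriented path $\bQ$ of height $q+1$, the structure $\xi(\sS',\bQ)$ again lies in $\cO'$.

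The key step is to prove the compositional identity
\[
\xi(\sS',\bQ) \;=\; \xi\!\bigl(\sS,\ \xi(\sP,\bQ)\bigr),
\]
where $\sP = (e_0,\dots,e_q)$ is the $(1,2)$-path representation of $\bP$ used in Lemma~\ref{combine_ori}. Intuitively this says: ``modulating $\sS$ by $\bP$ and then modulating the resulting representation by $\bQ$'' coincides with ``modulating $\sS$ directly by the oriented path obtained from modulating $\bP$ by $\bQ$''. Once this identity is established, the corollary follows quickly: by Lemma~\ref{combine_ori}, the oriented path $\xi(\sP,\bQ)$ is again minimal and has height $n$, so it is an admissible input for zigzagging $\sS$; hence $\xi(\sS,\xi(\sP,\bQ)) \in \cO'$ by the very definition of $\cO'$.

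I expect the main obstacle to be verifying the identity above, since the zigzag operator is defined by an inductive renaming procedure that glues fresh copies along shared elements. The plan is to unfold both sides inductively along the edges of $\bQ = f_0,\dots,f_r$. On each side, the $i$-th block is an isomorphic copy of some $\bS_\ell$, determined by $\lev(f_i)$ and the combined orientation pattern. The renaming that glues block $i$ to block $i-1$ is governed, on the right-hand side, by $S_\ell \cap S_{\ell'}$ with $\ell,\ell'$ computed by a two-step look-up (first through $\bQ$ into $\bP$, then through $\bP$ into $\sS$); on the left-hand side, it is governed by $S_{e_i} \cap S_{e_{i'}}$ computed in one step inside $\sS'$. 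A routine but careful bookkeeping, using the isomorphisms $\varphi_{e_i}$ from the zigzag construction and the analogous isomorphisms produced by Lemma~\ref{combine_ori} on the oriented-path side, shows that these two renaming schemes agree level by level.

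Finally, applying this identity closes the argument: $\xi(\sS',\bQ) = \xi(\sS,\xi(\sP,\bQ))$ belongs to $\cO'$ because $\sS$ has length $n$ and $\xi(\sP,\bQ)$ is a minimal oriented path of height $n$. Since $\bS' \in \cO'$ and $\bQ$ were arbitrary, the set $\cO'$ (equipped with the representations $\sS'$ coming from its defining zigzags) satisfies Definition~\ref{symmetric}, so $\cO'$ is $(j,k)$-symmetric, as required.
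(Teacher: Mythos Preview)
Your proposal is correct and follows essentially the same approach as the paper: both argue that a double application of the zigzag operator to $\sS$ can be replaced by a single zigzag via the composite oriented path, invoking Lemma~\ref{combine_ori} to ensure the composite is again minimal of the right height. You are simply more explicit than the paper in writing down the compositional identity $\xi(\sS',\bQ)=\xi(\sS,\xi(\sP,\bQ))$ and in flagging the bookkeeping needed for the gluing; the paper states this step tersely as ``we can replace these two zigzag operators by a single one''.
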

\noindent \emph{Remark:} A similar statement holds in the piecewise symmetric case.
\begin{proof}
Let $\bS'$ be an element of $\cO'$. If we can show that applying an arbitrary zigzag operator to $\bS'$ yields a $(j,k)$-path in $\cO'$, then we are clearly done. So assume that $\bS'$ was obtained from $\bS \in \cO$ by applying a zigzag operator. The $(j,k)$-path $\bS'$ inherits the $(j,k)$-representation of $\bS$ in a natural way. Then we apply any zigzag operator to $\bS'$ to obtain $\bS''$, and we need to show that $\bS''$ is in $\cO'$.

We get from $\bS$ to $\bS'$ using a zigzag operator and from $\bS'$ to $\bS''$ another zigzag operator. Using Lemma~\ref{just_once}, we can see that we can replace these two zigzag operators by a single one to obtain $\bS''$ from $\bS$ directly.
\end{proof}

\begin{proof}[Proof of Theorem~\ref{piece_sym_obs_thm}]
Let $\cO$ be a $(j,k)$-symmetric obstruction set for $\bB$. As observed above, for any $c$ and $d$, $\cO$ is also $(j,k,c,d)$-piecewise symmetric.

For the converse, let $\cO$ be a $(j,k,c,d)$-piecewise symmetric obstruction set. Our goal is to construct a $(j',k')$-symmetric obstruction set $\cO_{sym}$ for $\bB$ as follows. For each structure $\bS \in \cO$, let $\sS = \bS_0 \cup \dots \cup \bS_{n-1}$ be the corresponding $(j,k)$-path representation. Using the filter for $\bS$, we ``regroup'' $\bS_0,\dots,\bS_{n-1}$ to obtain $(j',k')$-path representation $\sS' = \bT_0 \cup \dots \cup \bT_m$ of $\bS$. We add each $\bS$ together with its new representation to $\cO_{sym}$, and also add every structure that is needed to ensure that $\cO_{sym}$ is symmetric. Finally, we show that $\cO_{sym}$ is a symmetric obstruction set for $\bB$. We begin with the regrouping procedure.

Let $\bS \in \cO$, $\sS = \bS_0 \cup \dots \cup \bS_{n-1}$ be the corresponding $(j,k)$-path representation, and $\{[s_1,t_1],[s_2,t_2],\dots,[s_{c'}, t_{c'}]\}$ be the $(c,d)$-filter $\sF_{\bS}$. The regrouping procedure is quite pictorial and it is demonstrated in Figure~\ref{regrouping}.
\begin{figure}[h!]
\begin{center}
\includegraphics[scale=1]{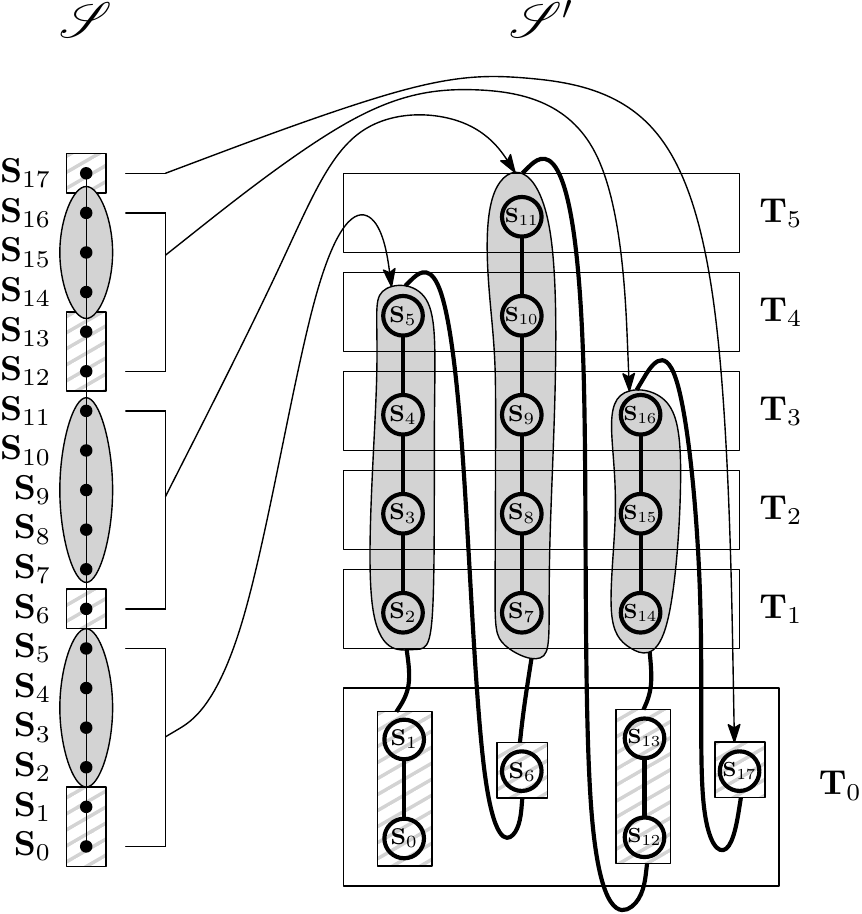}
\caption{An example regrouping for the proof of Theorem~\ref{piece_sym_obs_thm}. The filter $\cF_{\bS} = \{[0,1],[6],[12,13],[17]\}$. The structures corresponding to the filter are laying inside the rectangles with lines. The complement of the filter is $\bar{\cF}_{\bS} = \{[2,3,4],[7,8,9,10,11],[14,15,16]\}$. The structure corresponding to $\bar{\cF}_{\bS}$ lay in the gray ovals. The new $(j',k')$-path representation $\sS'$ of $\bS$ is on the right. Notice the following pattern: the segments of $\sS$ determined by $\sF_{\bS}$ are placed next to each other in $\sS'$.}\label{regrouping}
\end{center}
\end{figure}
We define
\[
\bT_0 = \bigcup_{\substack{\ell \in [a,b] : \\ [a,b] \in \sF_{\bS}}} \bS_\ell.
\]
This places all substructures in $\sS$ which correspond to delimiters of $\sF_{\bS}$ into one big initial structure. Note though that $|T_0| \leq c \cdot d$.
Define the \emph{complement} of $\sF_{\bS}$ as $\bar{\sF}_{\bS} =$
\[\{[0,s_1-1],[t_1+1,s_2-1],[t_2+1,s_3-1], \dots, [t_{c'},n-1]\},\]
and set
\[m = \max_{[a,b] \in \bar{\sF}_{\bS}} (b-a).\] Intuitively, $m$ is the length of the longest interval in $\sS$ between any two delimiters.

We define $\bT'_{\ell}$ as follows. For each interval $[a,b] \in \bar{\sF}_{\bS}$ take the $(\ell-1)$-th structure $\bS_{a+\ell-1}$ in that interval and define $\bT'_{\ell}$ to be the union of these structures. Formally, for every $\ell \in \{1, \dots, m\}$, set \[\bT_\ell' = \bigcup_{\substack{i = a + \ell - 1 \leq b: \\ [a,b] \in \bar{\sF}_{\bS}}} \bS_i.\]
Observe that $|T'_\ell| \leq k \cdot (c+1)$. We need to ensure property~\ref{inherit} in Definition~\ref{path_decomp}, so we need to place some additional elements into the domains of the $\bT_\ell'$.

Let $[x,y] \in \sF_{\bS}$ and $[z,w] \in \bar{\sF}_{\bS}$ be such that $z = y+1$. Then the set of elements $S_x \cup \dots \cup S_w$ is called a \emph{column}. (For the beginning and end of $\sS$ a column is defined in the natural ``truncated'' way.) Because $\sS$ is a $(j,k)$-path representation, it follows from the definition that the intersection of any pair of columns has size at most $j$. Let $C_1,\dots,C_r$ be an enumeration of all the columns. Set $D = \bigcup_{\ell \neq \ell'} C_\ell \cap C_{\ell'}$ and observe that $|D| \leq j \cdot \binom{r}{2}$. We add $D$ to the domain of $\bT_0$, and also to the domain of $\bT_i'$ to obtain $\bT_i$, $\forall i \in \{1,\dots,m\}$. It is straightforward to see that the new representation $\mathscr{T} = (\bT_0, \dots, \bT_m)$ satisfies property~\ref{inherit} of Definition~\ref{path_decomp}. Using the remarks about the sizes of the sets, we observe that $\mathscr{T}$ is a $(j',k')$-path decomposition of $\bS$, where $j'$ and $k'$ are functions of $j,k,c$ and $d$.

We place all structures $\bS \in \cO$ into $\cO_{sym}$ but we associate the new representation with $\bS$. For a structure $\bS \in \cO_{sym}$, we also apply all valid zigzag operators to $\bS$ (with respect to the new representation) and add all these structure to $\cO_{sym}$. By Lemma~\ref{just_once}, $\cO_{ps}$ is a $(j',k')$-symmetric set. We need to establish that $\cO_{ps}$ is an obstruction set. Because $\cO \subseteq \cO_{sym}$, it is sufficient to show that no structure in $\cO_{sym}$ maps to $\bB$. To do that we show that for any structure in $\cO_{ps}$, there is a structure in $\cO$ that homomorphically maps to it.

Giving a formal proof would lead to unnecessary notational complications and therefore we give an example that is easier to follow and straightforward to generalize. The example is represented in Figure~\ref{piecewise}. Let $\bS \in \cO_{ps}$ such that $\bS$ is also in $\cO$. Assume that the $(j',k')$-representation of $\bS$ in $\cO_{ps}$ is $\sT$. We consider $\xi(\sT, \bP)$ for some minimal oriented path and show how to find a minimal oriented path $\bQ$ such that $\xi(\sS, \bQ) \rightarrow \xi(\sT, \bP)$.
\begin{figure}
\begin{center}
\includegraphics[scale=0.75]{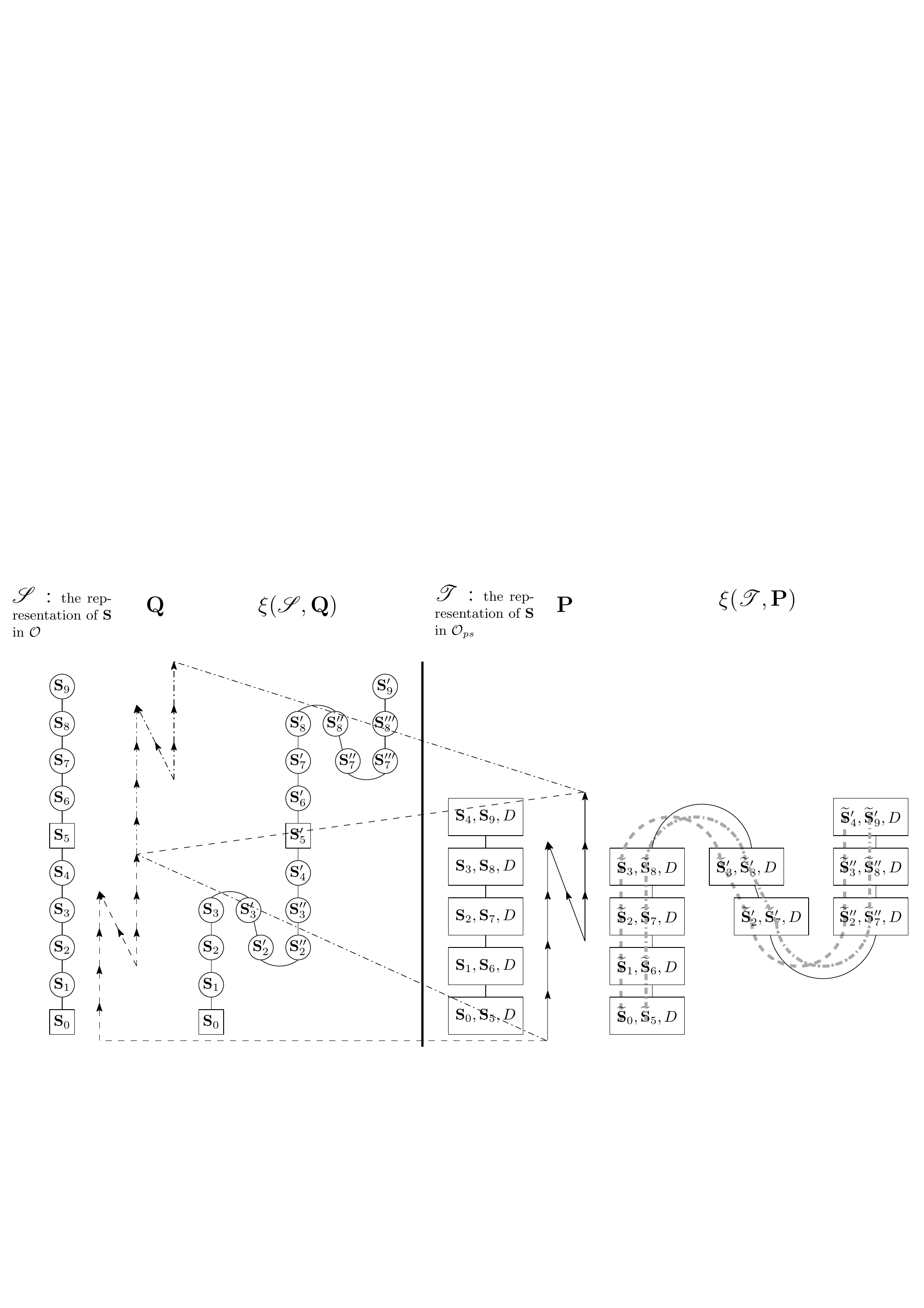}
\caption{Example in the proof of Theorem~\ref{piece_sym_obs_thm}. See the proof for details.}\label{piecewise}
\end{center}
\end{figure}
To construct $\bQ$, we make a copy of $\bP$ aligned with $\bS_0,\bS_1,\bS_2,\bS_3,\bS_4$ in $\sS$. This is represented by the dashed lines in Figure~\ref{piecewise}. We also make a copy of $\bP$ aligned with $\bS_5,\bS_6,\bS_7,\bS_8,\bS_9$. This is represented with the dash dotted lines. Note that the resulting minimal oriented path respects the delimiters, i.e., the zigzag operator will not ``zigzag'' $\bS_0$ and $\bS_5$. (In general, we never need to ``zigzag'' structures that were placed into $\bT_0$, i.e., the structures that correspond to the delimiters, because $\bP$ is minimal.)

In $\xi(\sT,\bP)$ we denote the copies of the $\bS_i$ with $\widetilde{\bS}_i$ and primed $\widetilde{\bS}_i$. Using the definition of the zigzag operator, it follows that the function $f$ that maps an element of $\bS_0 \cup \bS_1 \cup \bS_2 \cup \bS_3 \cup \bS_3' \cup \bS_2' \cup \bS_3'' \cup \bS_4'$ in $\xi(\sS,\bQ)$ to the corresponding element in $\widetilde{\bS}_0 \cup \widetilde{\bS}_1 \cup \widetilde{\bS}_2 \cup \widetilde{\bS}_3 \cup \widetilde{\bS}_3' \cup \widetilde{\bS}_2' \cup \widetilde{\bS}_3'' \cup \widetilde{\bS}_4'$ is a homomorphism. We similarly define a homomorphism $h$ from $\bS_5' \cup \bS_6' \cup \bS_7' \cup \bS_8' \cup \bS_8'' \cup \bS_7'' \cup \bS_7''' \cup \bS_8''' \cup \bS_9'$ in $\xi(\sS,\bQ)$ to $\widetilde{\bS}_5' \cup \widetilde{\bS}_6' \cup \widetilde{\bS}_7' \cup \widetilde{\bS}_8' \cup \widetilde{\bS}_8'' \cup \widetilde{\bS}_7'' \cup \widetilde{\bS}_7''' \cup \widetilde{\bS}_8''' \cup \widetilde{\bS}_9'$ in $\xi(\sT,\bP)$.

If we can make sure that if an element $x$ is in the domain of both $f$ and $h$, and both homomorphisms map $x$ to the same element then we have the desired homomorphism. Assume for example that the element $x$ appears in $\bS_2$ and also in $\bS_8'''$ in $\xi(\sS,\bQ)$, and suppose that $f(x)=y$ and $h(x)=y'$. Let the originals of $y$ and $y'$ be $z$ and $z'$ in $\sT$, respectively. We also identify $z$ and $z'$ in $\bS_2$ and $\bS_8$ in $\sS$. Observe that $x$ in $\bS_2$ in $\xi(\sS,\bQ)$ is a copy of $z$ and $x$ in $\bS_8'''$ in $\xi(\sS,\bQ)$ is a copy of $z'$. If $z \neq z'$ (in $\sS$) then $x$ could not appear both in $\bS_2$ and $\bS_8'''$ by the definition of the zigzag operator. Therefore $z = z'$, $z \in D$, and by definition, $z$ is in every bag of $\sT$. The elements $y$ and $y'$ are copies of $z$, and because $z$ appears in every ``bag'' of $\sT$, all copies of $z$ in $\xi(\sT,\bP)$ are identified to be the same element. In particular, $f(x) = y = y' = h(x)$.
\end{proof}

\subsection{Applications}

\subsubsection{Datalog + Maltsev $\Rightarrow$ Symmetric Datalog}
Using SBPD, we give a short and simple re-proof of the main result of \cite{Dalmau/Larose:2008:Maltsev}:
\begin{thm}[\cite{Dalmau/Larose:2008:Maltsev}]\label{MalDat}
Let $\bB$ be a finite core structure. If $\bB$ is invariant under a Maltsev operation and $\coCSP(\bB)$ is definable in Datalog, then $\coCSP(\bB)$ is definable in symmetric Datalog (and therefore $\CSP(\bB)$ is in $\ccL$ by \cite{Egri/Larose/Tesson:07:Symmetric}).
\end{thm}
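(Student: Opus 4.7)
The plan is to establish $(j,k)$-SBPD for $\bB$ for some $j\le k$ and conclude by Theorem~\ref{sym_obs_thm}, after which $\CSP(\bB)\in\ccL$ is immediate from \cite{Egri/Larose/Tesson:07:Symmetric}. My candidate obstruction set is $\cO := \{\bS : \bS$ is a $(j,k)$-path with $\bS\not\to\bB\}$, each member equipped with a chosen $(j,k)$-path representation. Two things have to be checked: (i) $\cO$ is an obstruction set for $\bB$, and (ii) $\cO$ is $(j,k)$-symmetric.

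For (i), the easy direction is closure under composition. The hard direction is exactly bounded pathwidth duality: since $\coCSP(\bB)$ is definable in Datalog, \cite{Feder/Vardi:1999:Computational} provides a bounded treewidth obstruction set, and the Maltsev polymorphism $f$ upgrades this to a bounded pathwidth obstruction set by combining triples of partial homomorphisms at cut-vertices of the tree decomposition, via the structural theory of Maltsev algebras. This fixes appropriate $(j,k)$ and gives (i).

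For (ii), fix $\bS\in\cO$ with representation $\sS=(\bS_0,\dots,\bS_{n-1})$ and any minimal oriented path $\bP$ of height $n$. I argue the contrapositive: a homomorphism $h\colon\xi(\sS,\bP)\to\bB$ yields a homomorphism $\bS\to\bB$, contradicting $\bS\in\cO$. I induct on the combinatorial size of $\bP$, with a directed $\bP$ as the trivial base case (then $\xi(\sS,\bP)\cong\bS$). The cleanest reduction step applies to an innermost depth-one spike, i.e.\ a top bend at level $\ell+1$ immediately followed by a bottom bend at level $\ell$. In that configuration, reading off the zigzag geometry shows that $\xi(\sS,\bP)$ contains three consecutive copies $\bS_\ell^{(1)},\bS_\ell^{(2)},\bS_\ell^{(3)}$ of the single bag $\bS_\ell$, glued along $S_\ell\cap S_{\ell+1}$ and $S_{\ell-1}\cap S_\ell$ in alternation; the restrictions $h_\ell^{(i)}\colon\bS_\ell\to\bB$ therefore satisfy $h_\ell^{(1)}=h_\ell^{(2)}$ on one overlap and $h_\ell^{(2)}=h_\ell^{(3)}$ on the other. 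Setting $h_\ell^\ast=f(h_\ell^{(1)},h_\ell^{(2)},h_\ell^{(3)})$ pointwise produces a homomorphism $\bS_\ell\to\bB$ (since $f$ is a polymorphism), and the Maltsev identities $f(x,x,y)=y$ and $f(x,y,y)=x$ force it to agree with $h_\ell^{(1)}$ and with $h_\ell^{(3)}$ on the respective overlaps. Collapsing the three copies to one yields a homomorphism $\xi(\sS,\bP')\to\bB$ where $\bP'$ is a minimal oriented path of height $n$ with strictly fewer bends, completing the inductive step.

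The main obstacle I expect is handling spikes of depth greater than one, where the three copies of the bottom-bend bag are no longer adjacent in $\xi(\sS,\bP)$. In that case I would propagate the Maltsev combination along the spike by applying $f$ jointly across levels, using the path-decomposition gluings to carry the agreement between three parallel ``scans'' of $h$; alternatively, one may first reduce deep spikes to depth-one spikes by auxiliary manipulations. Either route feeds back into the induction, and once (ii) is established $\bB$ has $(j,k)$-SBPD, so Theorem~\ref{sym_obs_thm} delivers a symmetric $(j,k)$-Datalog program defining $\coCSP(\bB)$.
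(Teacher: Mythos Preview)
Your overall strategy---take a bounded pathwidth obstruction set, close it under zigzags, and use the Maltsev polymorphism to show that the closure still avoids $\bB$---is exactly the paper's route (Lemma~\ref{reproof} via Lemma~\ref{rect_lemma} and Corollary~\ref{just_once}). Two points deserve correction.

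\textbf{Step (i).} Your argument for passing from bounded treewidth duality to bounded pathwidth duality (``combining triples of partial homomorphisms at cut-vertices of the tree decomposition'') is not a known result and, as stated, does not work: there is no direct Maltsev trick that linearises a tree decomposition. The paper instead quotes the chain Datalog $+$ Maltsev $\Rightarrow$ majority polymorphism $\Rightarrow$ linear Datalog (the last step by \cite{Dalmau/Krokhin:2008:Majority}) $\Rightarrow$ bounded pathwidth duality (by \cite{Dalmau:02:Constraint}). You should cite this reduction rather than invent one.

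\textbf{Step (ii).} Your depth-one spike argument is fine but, as you note, does not cover all minimal $\bP$: a path can go up five levels and down three with no depth-one spike anywhere. The paper's fix is exactly your first proposed remedy, done cleanly: any minimal oriented path that is not directed contains an $N$-digraph of some size $s$ (Proposition~\ref{N_in_minimal}), i.e.\ $s$ forward edges, $s$ backward, $s$ forward. In $\xi(\sS,\bP)$ this $N$ produces, for each level $\ell$ in its range, three copies of $\bS_\ell$; define the new homomorphism on the collapsed directed segment by applying $m$ pointwise to the three copies of each element, across all $s$ levels simultaneously. The Maltsev identities make this agree with the given $h$ on both ends of the $N$, and invariance of $\bB$ under $m$ makes it a homomorphism. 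One such collapse strictly shortens $\bP$, and iterating finishes the induction. This is precisely Lemma~\ref{rect_lemma}; your sketch was missing only the observation that an $N$ of \emph{some} size always exists and can be handled in a single Maltsev step rather than level by level.
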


We only need to show that if $\coCSP(\bB)$ is in linear Datalog and $\bB$ is preserved by a Maltsev operation, then $\coCSP(\bB)$ is in symmetric Datalog. The ``jump'' from Datalog to linear Datalog essentially follows from already established results, as observed in \cite{Dalmau/Larose:2008:Maltsev}. For the sake of completeness, we give an approximate outline of the argument without being too technical.\footnote{The interested reader can consult Lemma~6 (originally in \cite{Pixley:63:Distributivity}) and Lemma~9 in \cite{Dalmau/Larose:2008:Maltsev}. For Lemma~9, note that if $\bB$ has a Maltsev polymorphism, then $\cV(\bbA(\bB))$ is congruence permutable, see \cite{Burris/et_al:1981:Course}.} If $\coCSP(\bB)$ is definable in Datalog and $\bB$ has a Maltsev polymorphism, then $\bB$ also has a majority polymorphism. If $\bB$ has a majority polymorphism, then $\coCSP(\bB)$ is definable in linear Datalog \cite{Dalmau/Krokhin:2008:Majority}. Hence, to re-prove Theorem~\ref{MalDat}, it is sufficient to prove Lemma~\ref{reproof}. Our proof relies on the notion of SBPD.

\begin{lem}\label{reproof}
If $\coCSP(\bB)$ is definable by a linear Datalog program and $\bB$ is invariant under a Maltsev operation $m$, then $\coCSP(\bB)$ is definable by a symmetric Datalog program.
\end{lem}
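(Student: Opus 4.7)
The plan is to apply Theorem~\ref{sym_obs_thm}: since $\coCSP(\bB)$ is definable in linear Datalog, $\bB$ has $(j,k)$-bounded pathwidth duality for some $(j,k)$, so fix an obstruction set $\cO$ of $(j,k)$-paths with a chosen $(j,k)$-path representation for each. Define
\[
\cO^{*} = \{\xi(\sS,\bP) : \bS \in \cO, \ \bP \text{ is a minimal oriented path of height } |\sS|\}.
\]
By Corollary~\ref{just_once} the set $\cO^{*}$ is $(j,k)$-symmetric, and since $\cO \subseteq \cO^{*}$, any structure not mapping to $\bB$ is still obstructed by a member of $\cO^{*}$; so it suffices to prove that no structure in $\cO^{*}$ maps to $\bB$, whereupon Theorem~\ref{sym_obs_thm} yields a symmetric $(j,k)$-Datalog program defining $\coCSP(\bB)$.

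The core of the argument is the following claim: for every $\bS \in \cO$ (so $\bS \not\to \bB$) and every minimal oriented path $\bP$ of height $|\sS|$, $\xi(\sS,\bP) \not\to \bB$. Suppose for contradiction that $\psi : \xi(\sS,\bP) \to \bB$ is a homomorphism. For each edge $e$ of $\bP$ at edge level $\ell$, let $\psi_e = \psi \circ \varphi_e^{-1} : \bS_\ell \to \bB$; these are all homomorphisms, and for any two consecutive edges $e, e'$ of $\bP$ the zigzag gluing forces $\psi_e$ and $\psi_{e'}$ to agree on the overlap of $\sS$ dictated by the transition (peak, valley, or straight). The Maltsev operation $m$ will be used to combine these $\psi_e$'s into a single homomorphism $\bar\psi : \bS \to \bB$, contradicting $\bS \in \cO$.

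I will do this by induction on the length of $\bP$. The base case, in which $\bP$ is the directed path of length $n = |\sS|$, is immediate because $\xi(\sS,\bP) \cong \bS$. For the cleanest inductive step, locate an FBF triple $e_{i-1}, e_i, e_{i+1}$ of edges all at a common edge level $\ell$---a peak at vertex level $\ell+1$ immediately followed by a valley at vertex level $\ell$---and set $\psi^{*} = m(\psi_{e_{i-1}}, \psi_{e_i}, \psi_{e_{i+1}})$. Since $m$ is a polymorphism of $\bB$, $\psi^{*}$ is a homomorphism $\bS_\ell \to \bB$; by the Maltsev identities $m(x,y,y) = m(y,y,x) = x$ together with $\psi_{e_{i-1}} = \psi_{e_i}$ on $S_\ell \cap S_{\ell+1}$ and $\psi_{e_i} = \psi_{e_{i+1}}$ on $S_\ell \cap S_{\ell-1}$, the map $\psi^{*}$ agrees with $\psi_{e_{i+1}}$ on the peak overlap and with $\psi_{e_{i-1}}$ on the valley overlap. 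These agreements let $\psi^{*}$ be glued with the remainder of $\psi$ along a shorter minimal oriented path $\bP'$ of the same height (obtained by replacing the FBF triple with a single forward edge), producing a homomorphism $\xi(\sS,\bP') \to \bB$ to which the inductive hypothesis applies.

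The main obstacle is that some minimal oriented paths with zigzags---for example FFFBBFFF when $n = 4$---contain no FBF or mirror BFB triple, so the simple local reduction above does not directly apply. For such $\bP$ one applies $m$ to longer chains of the $\psi_e$'s: choose three ``parallel walks'' through the edge levels of $\bP$, one edge per level, with consecutive levels' edges joined by a peak, valley, FF, or BB transition of $\bP$, so that Maltsev applied coordinate-wise yields maps $\bar\psi_\ell : \bS_\ell \to \bB$ whose restrictions match on every overlap $S_\ell \cap S_{\ell+1}$. Verifying that such walks always exist and give a consistent $\bar\psi : \bS \to \bB$---using the identities $m(x,y,y) = m(y,y,x) = x$ together with the combinatorial structure of minimal oriented paths---is the main technical content of the argument, and once achieved it yields the desired homomorphism $\bS \to \bB$ and completes the proof.
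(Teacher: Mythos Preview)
Your overall strategy is exactly the paper's: start from a $(j,k)$-path obstruction set $\cO$ (given by bounded pathwidth duality), close it under zigzag to get $\cO^{*}$, invoke Corollary~\ref{just_once} for symmetry, and reduce everything to showing that $\xi(\sS,\bP)\to\bB$ implies $\bS\to\bB$. Your FBF reduction is correct and is precisely the size-$1$ case of the paper's key step.

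The gap is exactly where you flag it: general minimal oriented paths need not contain an FBF (or BFB) triple of consecutive edges, and your proposed fix via ``three parallel walks through all edge levels'' is left unverified. The paper's resolution is simpler than a global walk construction. Call an \emph{$N$-digraph of size $s$} a subpath consisting of $s$ forward edges, then $s$ backward edges, then $s$ forward edges (so your FBF is the case $s=1$). The elementary combinatorial fact used (Proposition~\ref{N_in_minimal}) is that \emph{every} minimal oriented path that is not already a directed path contains an $N$-digraph of some size $s$; your example FFFBBFFF, for instance, contains one with $s=2$. One then replaces that $N$-subpath by a single directed path of length $s$, obtaining a strictly shorter minimal $\bP'$ of the same height, and defines a homomorphism $\xi(\sS,\bP')\to\bB$ by keeping $\psi$ outside the replaced block and, inside it, mapping each element to $m$ applied to the images of its three copies in the three parallel segments of the $N$-digraph. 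The Maltsev identities make this well defined at both junctions of the block, and invariance of $\bB$ under $m$ makes it a homomorphism. Iterating yields $\bS\to\bB$.

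So your FBF step was the right idea; you only needed to generalise ``three consecutive edges at one level'' to ``three consecutive runs of $s$ edges spanning the same $s$ levels'', after which existence of such a configuration is automatic and no global walk-selection argument is required.
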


To get ready for the proof of Lemma~\ref{reproof}, we define an $N$-digraph of size $s$ as an oriented path that consists of $s$ forward edges, followed by $s$ backward edges, followed by another $s$ forward edges. Proposition~\ref{N_in_minimal} is easy to prove, and the Maltsev properties are used in Lemma~\ref{rect_lemma}.
\begin{prop}\label{N_in_minimal}
A minimal oriented path is either a directed path, or it contains a subpath which is an $N$-digraph.
\end{prop}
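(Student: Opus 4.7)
The plan is to read off the forward/backward type of each edge of $\bP$ along the unique vertex ordering from $u$ to $v$, and to locate a consecutive block of the form $F^s B^s F^s$ for some $s \ge 1$; by the definition of an $N$-digraph, such a block is exactly an $N$-digraph of size $s$ appearing as a subpath of $\bP$.

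The one place where the minimality hypothesis enters is the observation that the first and last edges of $\bP$ (ordered from $u$ to $v$) are both forward. Indeed, minimality forces $u$ to be the unique vertex at level $0$, so the vertex adjacent to $u$ in $\bP$ must lie at level $1$, and the edge joining them is therefore forward. The symmetric argument at $v$ (the unique vertex at level $\hei(\bP)$) shows the last edge is forward as well.

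Assuming $\bP$ is not a directed path, at least one edge is backward. I would then let $a$ be the position of the first backward edge, let $b$ be the length of the maximal consecutive backward run beginning at position $a$, and let $c$ be the length of the maximal forward run starting at position $a+b$. The initial observation gives $a \ge 1$ (the first edge is forward) and $c \ge 1$ (because the last edge is forward, so the backward run cannot extend to the end of the sequence); trivially $b \ge 1$. Setting $s := \min(a,b,c) \ge 1$, the $3s$ consecutive edges at positions $a-s,\dots,a+b+s-1$ realize the pattern $F^s B^s F^s$, giving the desired $N$-digraph subpath.

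I do not foresee a real obstacle; the proof is a short combinatorial parsing of the forward/backward string of edge types. The only subtle point is the invocation of minimality to guarantee that both ends of this string are forward, without which the argument would fail (for instance, the non-minimal path with edge sequence $FFB$ contains no $FBF$ subpath).
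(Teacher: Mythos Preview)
There is a genuine gap. Your claimed window $a-s,\dots,a+b+s-1$ contains $b+2s$ edges, not $3s$; its type sequence is $F^{s}B^{b}F^{s}$, which is an $N$-digraph only when $s=b$, i.e.\ when $b\le a$ and $b\le c$. You never argue $b\le c$, and it can fail even under the minimality hypothesis.

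Concretely, take the minimal oriented path with edge pattern $F^{4}B^{2}F^{1}B^{2}F^{4}$ (vertex levels $0,1,2,3,4,3,2,3,2,1,2,3,4,5$). Here $a=4$, $b=2$, $c=1$, so $s=1$, and a direct scan shows there is \emph{no} consecutive block of type $F^{s}B^{s}F^{s}$ for any $s\ge 1$. The $N$-digraph this path does contain sits at edge positions $5,6,7$ with types $B,F,B$: it is an $N$-digraph of size $1$ only when that subpath is read from its other endpoint. So looking only for $F^{s}B^{s}F^{s}$ in $\bP$'s own left-to-right direction, and only around the first backward run, is insufficient.

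The fix is to search for an interior local minimum of the full run-length sequence $a_{1},b_{1},a_{2},\dots,b_{m-1},a_{m}$. Minimality of $\bP$ forces $b_{1}<a_{1}$ (the level after $F^{a_{1}}B^{b_{1}}$ must remain above $0$) and symmetrically $b_{m-1}<a_{m}$, so neither endpoint $a_{1}$ nor $a_{m}$ is a local minimum, and hence an interior one exists. If the interior local minimum is some $b_{i}$, you obtain a consecutive $F^{b_{i}}B^{b_{i}}F^{b_{i}}$; if it is some $a_{i}$ with $2\le i\le m-1$, you obtain a consecutive $B^{a_{i}}F^{a_{i}}B^{a_{i}}$, which is an $N$-digraph once the subpath is read from its other end. (The paper itself omits the proof as ``easy'', so there is nothing further to compare against.)
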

\begin{lem}\label{rect_lemma}
Let $\bB$ be a structure invariant under a Maltsev operation $m$, $\bS$ be a $(j,k)$-path with a $(j,k)$-representation $\mathscr{S} = (\bS_0, \dots, \bS_{n-1})$, and $\bP = e_0,\dots,e_q$ be a minimal oriented path of height $n$. If $\xi(\mathscr{S},\bP) \rightarrow \bB$, then $\bS \rightarrow \bB$.
\end{lem}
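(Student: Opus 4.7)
The plan is to induct on the number of reversals in $\bP$, i.e., adjacent pairs $e_j, e_{j+1}$ of oppositely oriented edges. In the base case $\bP$ has no reversals, so it is a directed forward path of height $n$; unwinding the definition of the zigzag operator then shows $\xi(\mathscr{S},\bP) \cong \bS$, and $\bS \to \bB$ follows immediately. For the inductive step, $\bP$ has at least one reversal, and Proposition~\ref{N_in_minimal} provides an $N$-digraph subpath at edges $e_i,\dots,e_{i+3s-1}$ for some $s \geq 1$. Minimality of $\bP$ forces the valley of this subpath to sit at some vertex level $L \geq 1$ and the peak at level $L+s \leq n-1$: the peak has in-degree $2$ and the valley out-degree $2$ in $\bP$, so neither can coincide with the unique top or bottom vertex of $\bP$.

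Next, I would unpack what the $N$-digraph contributes to $\xi(\mathscr{S},\bP)$. Its three monotone segments produce three copies $\bT^{(1)}, \bT^{(2)}, \bT^{(3)}$ of $\bR := \bS_{[L,L+s-1]}$, with the following identifications coming from the zigzag renaming rule: $\bT^{(1)}$ and $\bT^{(2)}$ are glued along (isomorphic copies of) the ``top separator'' $S_{L+s-1} \cap S_{L+s}$, while $\bT^{(2)}$ and $\bT^{(3)}$ are glued along the ``bottom separator'' $S_{L-1} \cap S_{L}$; moreover $\bT^{(1)}$ attaches to the prefix of $\xi(\mathscr{S},\bP)$ along the bottom separator and $\bT^{(3)}$ attaches to the suffix along the top. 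Given $h : \xi(\mathscr{S},\bP) \to \bB$, restricting $h$ to each $\bT^{(j)}$ and composing with the canonical isomorphism to $\bR$ yields homomorphisms $h_1, h_2, h_3 : \bR \to \bB$ with $h_1 = h_2$ on the top separator and $h_2 = h_3$ on the bottom separator. Setting $h'(x) := m(h_1(x), h_2(x), h_3(x))$, the fact that $m$ is a polymorphism of $\bB$ makes $h'$ a homomorphism $\bR \to \bB$, and the Maltsev identities $m(y,y,x) = m(x,y,y) = x$ give $h' = h_3$ on the top separator and $h' = h_1$ on the bottom.

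To close the induction, let $\bP'$ be the oriented path obtained from $\bP$ by replacing the $N$-digraph subpath with a single directed forward segment of length $s$. The bounds $L \geq 1$ and $L+s \leq n-1$ ensure $\bP'$ is still minimal of height $n$ with strictly fewer reversals than $\bP$, while $\xi(\mathscr{S},\bP')$ is obtained from $\xi(\mathscr{S},\bP)$ by collapsing $\bT^{(1)} \cup \bT^{(2)} \cup \bT^{(3)}$ into a single copy $\bT'$ of $\bR$ (glued to the prefix at the bottom separator and to the suffix at the top). Using $h'$ on $\bT'$ and $h$ everywhere else yields a homomorphism $\xi(\mathscr{S},\bP') \to \bB$, since $h'$ agrees with $h$ on both separators, and the inductive hypothesis then delivers $\bS \to \bB$. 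The hard part will be the bookkeeping in this last step: one must trace carefully through the renaming rule in the definition of $\xi$ to pin down exactly which elements are identified under the three gluings, and verify that the Maltsev-combined function really matches the surrounding $h$ along both separators so that the piecewise homomorphism on $\xi(\mathscr{S},\bP')$ is well-defined.
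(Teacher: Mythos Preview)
Your proposal is correct and follows essentially the same approach as the paper's proof: both find an $N$-digraph subpath via Proposition~\ref{N_in_minimal}, replace it by a single directed segment to obtain $\bP'$, and use the Maltsev operation $m$ applied to the three copies to build a homomorphism $\xi(\mathscr{S},\bP')\to\bB$, then iterate. Your framing as an explicit induction on the number of reversals, together with the separator bookkeeping and the verification that $L\ge 1$ and $L+s\le n-1$, is slightly more detailed than the paper's ``repeating the argument sufficiently many times'' but the underlying argument is identical.
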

\begin{proof}
Using Proposition~\ref{N_in_minimal}, there is an index $t$ such that $\bQ = e_t,e_{t+1},\dots,e_{t+(3s-1)}$ is an $N$-digraph of size $s$ in $\bP$. Assume that the first and last vertices of $\bQ$ are $v$ and $w$, respectively. Let $\bP'$ be the oriented path obtained from $\bP$ by removing $\bQ$, and adding a directed path $\bQ' = f_t,f_{t+1},\dots,f_{t+(s-1)}$ of length $s$ from $v$ to $w$. We claim that there is a homomorphism $\gamma$ from $\xi(\mathscr{S},\bP')$ to $\bB$. Once this is established, repeating the argument sufficiently many times clearly yields that $\bS \rightarrow \bB$.

Let $\xi(\mathscr{S},\bP) = (\bS_{e_0},\dots,\bS_{e_q})$, and $\varphi_{e_0},\dots,\varphi_{e_q}$ be the corresponding isomorphisms (recall the zigzag operator definition in Section~\ref{L_defs}). Similarly, let $\xi(\mathscr{S},\bP') = (\bS_{f_0},\dots,\bS_{f_{q-2s}})$, and $\psi_{f_0},\dots,\psi_{f_{q-2s}}$ be the corresponding isomorphisms. Because $\bS_{[e_0,e_{t-1}]}$ and $\bS_{[e_{t+3s},e_q]}$ are isomorphic to $\bS_{[f_0,f_{t-1}]}$ and $\bS_{[f_{t+s},f_{q-2s}]}$, respectively, $\gamma$ for elements in $S_{[f_0,f_{t-1}]} \cup S_{[f_{t+s},e_{q-2s}]}$ is defined in the natural way. It remains to define $\gamma$ for every $d \in S_{[f_t,f_{t+(s-1)}]}$.

Assume that $d \in S_{f_{t+\ell}}$ for some $\ell \in \{0,\dots,s-1\}$. Find the original of $d$ in $\bS$ and let it be $d_o$, i.e., $d_o = \psi_{f_{t+\ell}}(d)$. Then we find the three copies $d_1,d_2,d_3$ of $d_o$ in $\bS_{[f_t,f_{t+(3s-1)}]}$. That is, first we find the three edges $e_{\ell_1}, e_{\ell_2}, e_{\ell_3}$ of $\bQ$ which have the same level as $f_{t+\ell}$ (all levels are with respect to $\bP$ and $\bP'$). Then $d_i = \varphi_{e_{\ell_i}}^{-1}(d_o)$, $i \in [3]$. We define $\gamma(d) = m(d_1,d_2,d_3)$. By the Maltsev properties of $m$, $\gamma$ is well-defined. As $\bB$ is invariant under $m$, $\xi(\mathscr{S},\bP') \overset{\gamma}{\longrightarrow} \bB$.
\end{proof}

\begin{proof}[Proof of Lemma~\ref{reproof}]
If $\coCSP(\bB)$ can be defined by a linear $(j,k)$-Datalog program, then there is an obstruction set $\cO$ for $\bB$ in which every structure is a $(j,k)$-path by \cite{Dalmau:02:Constraint}. We construct a symmetric obstruction set $\cO_{sym}$ for $\bB$ as follows. For every $(j,k)$-path $\bS$ with a $(j,k)$-representation $\mathscr{S} = \bS_0, \dots, \bS_{n-1}$ in $\cO$ and for every minimal oriented path $\mathbf{P}$ of height $n$, place $\xi(\mathscr{S}, \bP)$ into $\cO_{sym}$. By Corollary~\ref{just_once}, $\cO_{sym}$ is $(j,k)$-symmetric.

Observe that $\cO \subseteq \cO_{sym}$, so it remains to show that no element of $\cO_{sym}$ maps to $\bB$. But if $\bT \in \cO_{sym}$, then $\bT = \xi(\sS,\bP)$ for some $\bS \in \cO$ and $\bP$. By Lemma~\ref{rect_lemma}, if $\xi(\sS,\bP) \rightarrow \bB$, then $\bS \rightarrow \bB$. This contradicts the assumption that $\cO$ is an obstruction set for $\bB$.
\end{proof}

\subsubsection{A class of oriented paths for which the CSP is in L, and a class for which the CSP is NL-complete}

In this section we define a class $\cC$ of oriented paths such that if $\bB \in \cC$ then $\coCSP(\bB)$ is in symmetric Datalog. Our strategy is to find an obstruction set $\cO$ for $\bB \in \cC$, and then to show that our obstruction set is piecewise symmetric. We need some notation.

We say that a directed path is \emph{forward} to mean that its first and last vertices are the vertices with indegree zero and outdegree zero, respectively. Let $\bP$ be an oriented path with first vertex $v$ and last vertex $w$. Then the \emph{reverse} of $\bP$, denoted by $\bar{\bP}$, is a copy of the oriented path $\bP$ in the reverse direction, i.e., the first vertex of $\bar{\bP}$ is a copy of $w$ and its last vertex is a copy of $v$. Let $\bQ$ be another oriented path. The \emph{concatenation} of $\bP$ and $\bQ$ is the oriented path $\bP \bQ$ in which the last vertex of $\bP$ is identified with the first vertex of $\bQ$. For a nonnegative integer $r$, $\bP^r$ denotes $\bP_1 \bP_2 \cdots \bP_r$, where the $\bP_\ell$ are disjoint copies of $\bP$. Given two vertices $v$ and $w$, we denote the presence of an edge from $v$ to $w$ with $v \rightarrow w$.
\begin{defn}[Wave]
If an oriented path $\bQ$ can be expressed as $\bE_1 (\bP \bar{\bP})^r \bP \bE_2$, where $\bE_i$ ($i \in [2]$) denotes the forward directed path that is a single edge, $\bP$ is a forward directed path of length $\ell$, and $r \geq 0$, then $\bQ$ is called an \emph{$r$-wave}. A $2$-wave is shown in Figure~\ref{2_wave}, \emph{1}.
\end{defn}

\begin{thm}\label{wave_theorem}
Let $\bQ$ be a wave. Then $\bQ$ has PSBPD, $\coCSP(\bQ)$ is definable in symmetric Datalog, and $\CSP(\bQ)$ is in $\ccL$.
\end{thm}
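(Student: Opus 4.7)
The plan is to reduce the theorem to its first clause. Once PSBPD is established for $\bQ$, the definability of $\coCSP(\bQ)$ in symmetric Datalog follows from the equivalence \eqref{m1}$\Leftrightarrow$\eqref{m4} of Theorem~\ref{main_duality_thm}, and $\CSP(\bQ)\in\ccL$ then follows from the fact (recalled in the introduction) that symmetric Datalog programs evaluate in logarithmic space. Writing $h = \ell + 2$ for the height of $\bQ = \bE_1(\bP\bar{\bP})^r \bP \bE_2$, I would take as obstruction set
\[
  \cO \;=\; \{\bS : \bS \text{ is a minimal oriented path with } \bS \not\to \bQ\}.
\]
That $\cO$ is indeed an obstruction set for $\bQ$ is standard for oriented-path targets: any $\bA \not\to \bQ$ contains a minimal oriented sub-path that already fails to map. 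Moreover, every oriented path is naturally a $(1,2)$-path---take $\bS_i$ to be the $i$-th edge together with its endpoints---so the pathwidth hypothesis is immediate.

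For each $\bS \in \cO$ with representation $\sS = (\bS_0, \ldots, \bS_{n-1})$, I would attach a $(c,d)$-filter $\sF_\bS$ with $c = 2$ and $d$ a constant depending only on $\ell$. The wave $\bQ$ has precisely two \emph{rigid} portions: the edge $\bE_1$ (the unique up-edge at vertex level $0$) and $\bE_2$ (the unique up-edge at vertex level $\ell+1$). These force the extremal vertices of $\bS$ to map in a unique way. Accordingly, the two delimiters of $\sF_\bS$ cover $O(\ell)$ consecutive bags at each end of $\sS$; they ``freeze'' the endpoint regions of $\bS$, while the interior (which under any homomorphism to $\bQ$ must land in the middle levels $[1,\ell+1]$) is left unrestricted, so that the zigzag operator may act freely there.

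The heart of the argument is to show that for every minimal oriented path $\bP'$ of height $n$ obeying $\sF_\bS$ one still has $\xi(\sS,\bP')\not\to\bQ$. By an edgewise application of Lemma~\ref{combine_ori}, $\xi(\sS,\bP')$ is itself a minimal oriented path whose endpoint neighbourhoods, guarded by the filter, are isomorphic to those of $\bS$; only its interior is modulated by $\bP'$. A hypothetical homomorphism $\xi(\sS,\bP')\to\bQ$ would thus send the endpoints as $\bS$ would, so its interior would live entirely in the oscillating middle $(\bP\bar{\bP})^r\bP$ of $\bQ$. I would then \emph{collapse} the interior zigzags back to produce a homomorphism $\bS\to\bQ$, contradicting $\bS\in\cO$. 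The main obstacle is making this collapsing precise: unlike Lemma~\ref{rect_lemma}, no Maltsev polymorphism is available, so one must exploit the explicit combinatorics of $\bQ$'s middle---any detour of height at most $\ell$ confined to middle levels can be rerouted through one of the ramps of $(\bP\bar{\bP})^r\bP$---together with the delimiters being wide enough to keep such detours away from the rigid endpoints. Once this routing argument is verified, $\cO$ equipped with $\{\sF_\bS\}_{\bS\in\cO}$ is $(1,2,2,O(\ell))$-piecewise symmetric, and Theorem~\ref{main_duality_thm} delivers the three conclusions at once.
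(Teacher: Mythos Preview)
Your overall reduction (PSBPD $\Rightarrow$ symmetric Datalog $\Rightarrow$ $\ccL$) is correct, and using the edgewise $(1,2)$-representation of oriented paths is exactly what the paper does. The obstruction set of minimal oriented paths not mapping to $\bQ$ is also essentially right, invoking path duality for oriented-path templates.

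However, the heart of your proposal---the ``collapsing'' step---is a genuine gap, and your filter design is different from the paper's in a way that matters. You want to argue that if $\xi(\sS,\bP')\to\bQ$ then $\bS\to\bQ$, using that the middle $(\bP\bar\bP)^r\bP$ is flexible enough to ``reroute detours''. But there is a natural homomorphism $\xi(\sS,\bP')\to\bS$, not the other way around, so a map $\xi(\sS,\bP')\to\bQ$ does not factor through $\bS$ in any obvious sense; you would need, for every vertex $v$ of $\bS$ and all of its preimages $v_1,\dots,v_k$ in $\xi(\sS,\bP')$, to find a single common image in $\bQ$ compatible with all adjacent edges. You give no mechanism for this, and you yourself flag it as ``the main obstacle''. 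Flexibility of the middle helps show the \emph{opposite} implication ($\bS\to\bQ\Rightarrow\xi(\sS,\bP')\to\bQ$), which is not what you need.

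The paper's proof avoids this entirely by taking a different route. It first classifies the height-$(h{+}2)$ minimal obstructions into a small number of explicit \emph{forms} (e.g., for $r=2$ there are Form~1 and Form~2), each characterised by the positions of a few internal ``pivot'' edges---edges like $(w_3\to w_4)$ where the path makes a forced transition between levels $h$ and $h{+}1$. The filter then has a delimiter \emph{at each such pivot edge}, not just at the endpoints: for Form~1 with $r=2$ one gets a $(3,6)$-filter $\{[0,0],[i,i],[n{-}2,n{-}2]\}$, and more delimiters for Form~2 (and still more for larger $r$). With these internal delimiters in place, the verification that $\xi(\sS,\bP')$ remains an obstruction is a direct syntactic check that the form is preserved---no collapsing argument is needed. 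Your endpoint-only $(2,O(\ell))$-filter does not protect these internal pivots, and the paper's need for them is strong evidence that your simpler filter is insufficient (or at least that making it work would require essentially reproducing the form analysis anyway).
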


\begin{proof}
We prove the case when $\bQ$ is an $r$-wave for $r = 2$. For larger $r$-s, the proof generalizes in a straightforward manner. Let $\bP$ be a directed path of length $h$, $\bP_1, \bP_3, \bP_5$ be disjoint copies of $\bP$, and $\bP_2, \bP_4$ be copies of the reverse of $\bP$. Let $\bE_1$ and $\bE_2$ be forward edges. Assume the $2$-wave $\bQ$ is $\bE_1 \bP_1 \bP_2 \bP_3 \bP_4 \bP_5 \bE_2$ (Figure~\ref{two_wave_for_thm}).
\begin{figure}
\begin{center}
\includegraphics[scale=1]{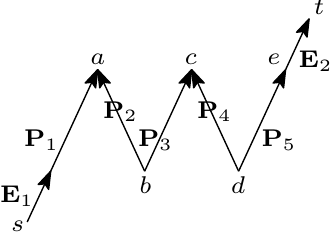}
\end{center}
\caption{2-wave in the proof of Theorem~\ref{wave_theorem}.}\label{two_wave_for_thm}
\end{figure}
We will provide a piecewise symmetric obstruction set $\cO_{ps}$ for $\bQ$, such that every element of $\cO_{ps}$ is an oriented path. To do this, first we observe that by \cite{Hell/Zhou:94:Homomorphisms}, $\bQ$ has path duality, i.e., we can assume that the set $\cO$ of all oriented paths that do not homomorphically map to $\bQ$ form an obstruction set for $\bQ$. To construct $\cO_{ps}$ from $\cO$, we will place certain elements of $\cO$ into $\cO_{ps}$ such that $\cO_{ps}$ is still an obstruction set for $\bQ$.

We begin with some simple observations. Any oriented path that has height at most $h+1$ maps to $\bQ$, so these oriented paths can be neither in $\cO$ nor in $\cO_{ps}$. Any oriented path that has height strictly larger than $h+2$ obviously does not map to $\bQ$, so all such paths are in $\cO$ and we also place these paths into $\cO_{ps}$. Assume that $\bP \in \cO$ has height exactly $h+2$. It is easy to see that if $\bP$ is not minimal, then it contains a minimal subpath that does not map to $\bQ$. Therefore, it is sufficient to place only those oriented paths from $\cO$ of height $h+2$ into $\cO_{ps}$ which are minimal.

Let $\bP \in \cO_{ps}$ of height $h+2$ (then $\bP$ is minimal). Intuitively, any attempt to homomorphically map the vertices of $\bP$ to $\bQ$ starting by first mapping the first vertex of $\bP$ to the first vertex of $\bQ$ and then progressively finding the image of the vertices of $\bP$ from left to right would get stuck at $a$ or $c$.

Formally, assume that the vertices of $\bP$ are $v_1,\dots,v_n$. Let $\bP_{[i]}$ denote the subpath of $\bP$ on the first $i$ vertices. Choose $i$ to be the largest index such that $\bP_{[i]} \overset{\varphi}{\longrightarrow} \bQ$ and $\varphi(v_1) = s$. Then $\varphi$ cannot be extended to $v_{i+1}$ for one of the following reasons. Clearly, $\varphi$ must map $v_i$ to a source or a sink other than $s$ or $t$, i.e., to $a$,$b$,$c$ or $d$. Furthermore, we can assume that $v_i$ is not mapped to $b$ or $d$. This is because if $v_i$ is mapped to $b$ or $d$, then $\lev(v_i)=1$, so the edge between $v_i$ and $v_{i+1}$ is from $v_i$ to $v_{i+1}$, and therefore $\varphi$ can be extended. So we can assume that $v_i$ is mapped to $a$ or $c$. Because we cannot extend $\varphi$, $v_{i+1}$ must be at level $\ell+2$, so it must be that $v_{i+1}$ is the last vertex $v_n$ of $\bP$. Because $\bP \not \rightarrow \bQ$, $\bP_{[n-1]}$ must be an oriented path such that any homomorphism $\varphi$ from $\bP_{[n-1]}$ to $\bQ$ such that $\varphi(v_1) = s$ maps $v_{n-1}$ to $a$ or $c$ but not to $e$.

We assume first that any homomorphism $\varphi$ from $\bP_{[n-1]}$ to $\bQ$ maps $v_{n-1}$ to $a$. We follow the vertices of $\bP_{[n-1]}$ from left to right. Let $w_a$ be the first vertex that is at level $h+1$. If there is a vertex to the right of $w_a$ at level $1$, then because $\bP_{[n-1]}$ will have to reach level $h+1$ again, we will be able to map $v_{n-1}$ to $c$, and that is not possible by assumption. So $\bP$ must have the following form (Form 1): $(w_1 \rightarrow w_2) \bX (w_3 \rightarrow w_4) \bY (w_5 \rightarrow w_6)$, where $\bX$ is any oriented path of height $h-1$ with first vertex at the bottom and last vertex at the top level of $\bX$, and $\bY$ is any oriented path of height $h-1$ with both its first and last vertices being in the top level of $\bY$. See Figure~\ref{2_wave_obs}, left.

For the second case, we assume that $\bP_{[n-1]}$ is such that $v_{n-1}$ can be mapped to $c$. Again, we follow the vertices of $\bP_{[n-1]}$ from left to right. Let $w_a$ be the first vertex that is at level $h+1$. We must have a vertex going back to level $1$ (otherwise we could not ``pass'' $b$ and could not map $v_{n-1}$ to $c$). Let $w_b$ be the first such vertex. We will have to go back to level $h+1$ again, so let $w_c$ be the first vertex at that level. Finally, we cannot go back to level $1$ again, since then the last vertex of $\bP_{[n-1]}$ can be mapped to $e$. We can ``go down'' to at most level $2$ of $\bP_{[n-1]}$. So $\bP$ must have the form (Form 2) $(w_1 \rightarrow w_2) \bX (w_3 \rightarrow w_4) \bY (w_5 \leftarrow w_6) \bZ (w_7 \rightarrow w_8) \bW (w_9 \rightarrow w_{10})$, where $\bX$ ($\bZ$) is any oriented path of height $h-1$ with first vertex at the bottom and last vertex at the top level of $\bX$ ($\bZ$), $\bY$ is any oriented path of height $h-1$ with first vertex at the top and last vertex at the bottom level of $\bY$, and $\bW$ is any oriented path of height $h-1$ with both its first and last vertices being in the top level of $\bW$. See Figure~\ref{2_wave_obs}, right.
\begin{figure}[h!]
\begin{center}
\includegraphics[scale=1]{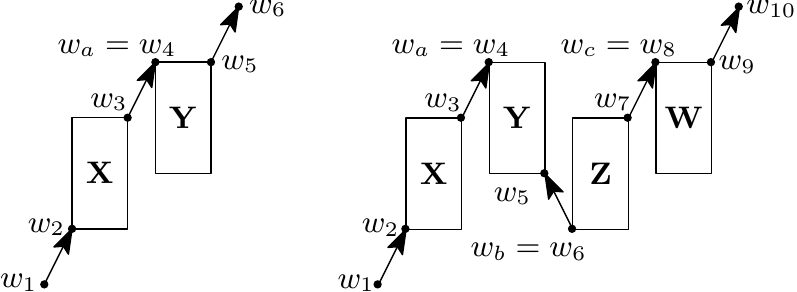}
 \caption{Obstructions of height $h+2$ for a $2$-wave.} \label{2_wave_obs}
\end{center}
\end{figure}

Because $\cO_{ps} \subseteq \cO$ and for any structure $\bS \in \cO$, there is a structure $\bS' \in \cO_{ps}$ such that $\bS' \rightarrow \bS$, $\cO_{ps}$ is an obstruction set for $\bQ$.

It remains to show that $\cO_{ps}$ is piecewise symmetric. Let $\bS$ be an oriented path of height more than $h+2$, and assume the vertex set of $\bS$ is $v_1,\dots,v_n$. We need to define a representation $\sS$, and a filter $\sF_{\bS}$ for $\bS$. The representation $(\bS_0,\bS_1,\dots,\bS_{n-2})$ is $(v_1,v_2), (v_2,v_3), \dots, (v_{n-1},v_n)$ (width $(1,2)$). The filter $\sF_{\bS}$ is the empty filter. Note that if we apply a zigzag operation to $\bS$, we get an oriented path of the same height as $\bS$, so $\cO_{ps}$ is closed under zigzagging of obstructions of height greater than $h+2$.

Let $\bS$ be an oriented path of height $h+2$ of Form 1, and assume the vertex set of $\bS$ is $v_1,\dots,v_n$. The representation $\sS = (\bS_0,\bS_1,\dots,\bS_{n-2})$ is constructed as in the previous paragraph. We specify $\sF_{\bS}$ to be the following $(3,6)$-filter. Assume that the edge $(w_3,w_4)$ is structure $\bS_i$. Then $\cF_{\bS} = \{[0,0],[i,i],[n-2,n-2]\}$. Using the definitions it is easy to see that if $\bP$ obeys the filter $\sF_{\bS}$, then $\xi(\sS,\bP)$ is also an oriented path of Form 1. Therefore $\cO_{ps}$ is closed under zigzagging of obstructions of Form 1. Obstructions of Form 2 can be handled similarly.
\end{proof}

We state the following generalization of waves.

\begin{defn}[Staircase]
A \emph{monotone wave} is an oriented path of the form $(\bar{\bP} \bP)^r \bar{\bP}$, where $\bP$ is a forward directed path and $r \geq 0$. We call the vertices of a monotone wave in the topmost level \emph{peaks}, and the vertices in the bottommost level \emph{troughs}.

If a minimal oriented path $\bQ$ can be expressed as $\bP_1 \bW_1 \bP_2 \bW_2 \dots \bP_{n-1} \bW_{n-1} \bP_n$, where $\bP_1,\dots,\bP_n$ are forward directed paths, $\bW_1,\dots,\bW_{n-1}$ are monotone waves, and for any $i \in [n-1]$, the troughs of $\bW_i$ are in a level strictly below the level of the troughs of $\bW_{i+1}$, and also, the peaks of $\bW_i$ are in a level strictly below the level of the peaks of $\bW_{i+1}$, then $\bQ$ is called a \emph{staircase}. An example is given in Figure~\ref{2_wave}, \emph{2}.
\end{defn}
\begin{thm}\label{stair_ps}
Let $\bQ$ be a staircase. Then $\bQ$ has PSBPD, $\coCSP(\bQ)$ is definable in symmetric Datalog, and $\CSP(\bQ)$ is in $\ccL$.
\end{thm}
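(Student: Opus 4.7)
The plan is to mirror the proof of Theorem~\ref{wave_theorem}, extending the analysis from a single wave to a staircase of $n$ monotone waves sitting at successively higher levels. Since $\bQ$ is an oriented path, by Hell and Zhou it has path duality, so the collection $\cO$ of all oriented paths that do not homomorphically map to $\bQ$ is an obstruction set; I will identify a subset $\cO_{ps}\subseteq\cO$ and supply $(j,k)$-path representations together with filters witnessing piecewise symmetry. Once $\cO_{ps}$ is shown to be a piecewise symmetric obstruction set for $\bQ$, Theorem~\ref{piece_sym_obs_thm} gives SBPD, Theorem~\ref{sym_obs_thm} yields definability of $\coCSP(\bQ)$ in symmetric Datalog, and hence $\CSP(\bQ)\in\ccL$ follows from the result of \cite{Egri/Larose/Tesson:07:Symmetric}.

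As in the wave proof, any oriented path of height strictly greater than $\hei(\bQ)$ enters $\cO_{ps}$ with the trivial $(1,2)$-representation consisting of its consecutive edges and an empty filter; zigzagging such a path produces another oriented path of the same height, which still fails to map to $\bQ$. The substantive task is to describe the minimal oriented paths of height exactly $\hei(\bQ)$ that belong to $\cO$. For these I propose to simulate a left-to-right homomorphism attempt into $\bQ$, exactly as in the wave proof: walking along the candidate path from its first vertex, we track the rightmost image in $\bQ$ that any partial homomorphism sending the start vertex to the start of $\bQ$ can reach. The attempt must eventually fail, and the failure must occur at one of the internal peaks or troughs of some monotone wave $\bW_i$, because the two external sources and sinks of $\bQ$ absorb any single extra level. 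Each failure determines a \emph{form} for the obstruction, namely a specific alternating pattern of forced climbs and descents whose extrema match the profile of $\bW_1,\dots,\bW_i$ up to the stuck monotone wave.

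I would enumerate these forms by induction on $n$: either the left-to-right attempt already fails inside the first $n-1$ monotone waves, in which case the obstruction is also an obstruction for the shorter staircase $\bP_1\bW_1\cdots\bW_{n-2}\bP_{n-1}$ and the inductive description of $\cO_{ps}$ applies verbatim, or the attempt succeeds past $\bW_{n-2}$ and gets stuck inside $\bW_{n-1}$ (or climbing $\bP_n$), which is handled by finitely many templates produced by the same local analysis that gave Form~1 and Form~2 in the proof of Theorem~\ref{wave_theorem}. For every obstruction of a given form, the $(1,2)$-path representation is again the listing of its edges, and the filter is a $(c,d)$-filter whose delimiters are exactly those single edges that witness the forced direction changes prescribed by the form (the first and last edges, plus one edge per required peak-to-trough or trough-to-peak transition). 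The constants $c$ and $d$ depend only on $\bQ$, not on the particular obstruction. A zigzag operator obeying the filter keeps every forced direction change in place and does not alter the levels at which extrema occur, so the resulting path sits in the same form, hence in $\cO_{ps}$.

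The main obstacle is the bookkeeping required to enumerate the obstruction forms cleanly and then check, for each, that the chosen filter actually forces the zigzagged path to remain in the form. Unlike the single-wave case, where only two templates appear, a staircase produces a combinatorially richer family of stuck patterns indexed by which internal peak or trough of which $\bW_i$ serves as the blocking point and by the local shape of the walk around it. Induction on $n$ combined with a careful reuse of the wave analysis at the top monotone wave should keep the case analysis manageable, but writing down the explicit delimiters and verifying piecewise symmetry for each form is where the technical work concentrates.
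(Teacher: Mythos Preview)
Your plan misses an entire class of obstructions. In the wave proof, every oriented path of height at most $h+1$ maps to the wave, so only heights $\ge h+2$ matter. That fails for a general staircase: an oriented path $\bP$ with $\hei(\bP)=h'$ strictly less than $\hei(\bQ)$ can still fail to map to $\bQ$, because every height-$h'$ window of $\bQ$ may itself be a nontrivial staircase that $\bP$ avoids. The paper handles exactly this by a third class of obstructions (heights $h'$ with $\ell<h'<\hei(\bQ)$, where $\ell$ is the longest directed subpath), and this is where the real induction lives: the induction is on the \emph{height} of the staircase, using that the core of each height-$h'$ subpath of $\bQ$ is again a staircase of smaller height, for which a piecewise symmetric obstruction set is available by hypothesis. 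Your proposal has no analogue of this case.

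Your induction on $n$ (the number of monotone waves) also does not work as stated. If the left-to-right attempt ``fails inside the first $n-1$ waves'', you say the path is an obstruction for $\bP_1\bW_1\cdots\bW_{n-2}\bP_{n-1}$ and invoke the inductive description. But that shorter staircase has strictly smaller height than $\bQ$, so a path of height exactly $\hei(\bQ)$ lands in the trivial ``too tall'' class there, which tells you nothing about its form or filter relative to $\bQ$. The paper avoids this by treating the height-$=\hei(\bQ)$ case directly and locally: it identifies a single wave segment $\bW_i$ for which the path contains a sub-obstruction $\bR\in\cO_i$ (the wave obstruction set), and builds the global obstruction as $\bB_1\bR\bB_2$ with arbitrary padding---no recursion on $n$ is needed at this height. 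The completeness of this description (that some such $\bB_1\bR\bB_2$ maps into any height-$\hei(\bQ)$ non-mapper) is the content of the Claim in the paper's proof and is not obvious; your left-to-right walk heuristic does not establish it.
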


\begin{proof}
Assume that the height $\bQ$ is $h$. As for waves, we use \cite{Hell/Zhou:94:Homomorphisms} to conclude that $\bQ$ has path duality. We will construct a piecewise symmetric obstruction set $\cO_{ps}$ for $\bQ$ by placing three classes of oriented paths into $\cO_{ps}$. First, $\cO_{ps}$ contains all oriented paths which have height strictly greater than $h$. These oriented paths obviously do not map to $\bQ$.

The next class of oriented paths we place into $\cO_{ps}$ are those which have height precisely $h$. Recall that $\bQ$ consists of waves patched together with directed paths in between. Let the wave subpaths of $\bQ$ be $\bW_1,\dots,\bW_n$, from left to right. For each $\bW_i$, we construct a class of oriented paths. Assume that $\bW_i$ has height $h_i$ and let $\cO_i$ be the set of minimal oriented paths of height $h_i$ which do not map to $\bW_i$. For each $\bR \in \cO_i$, we construct $\bC = \bB_1 \bR \bB_2$, where $\bB_1$ and $\bB_2$ are oriented paths (possibly empty) such that $\bC$ has height $h$, and the level of $\bR$ in $\bC$ matches the level of $\bW_i$. Observe that there cannot be a homomorphism from $\bC$ to $\bQ$. We place all such constructed $\bC$ into $\cO_{ps}$.

Let $\ell$ be the length of the longest directed subpath of $\bQ$. The third class of oriented paths are those that have height $h'$, where $\ell < h' < h$. For every such $h'$, we produce a set of obstructions. (\emph{Remark:} we set $\ell < h'$ because any oriented path of length $\ell$ or less maps to $\bQ$.)

Assume inductively (the base case is trivial) that we already have a piecewise symmetric obstruction set for every staircase of height strictly less than $h$. Consider every subpath $\bQ_1,\dots,\bQ_m$ of $\bQ$ of height $h'$. Notice that $\core(\bQ_i)$ is a staircase which is not a directed path. By the inductive hypothesis we have a piecewise symmetric obstruction set $\cU_i$ for $\bQ_i$. We keep only those oriented paths in $\cU_i$ which have height at most $h'$; observe that $\cU_i \neq \emptyset$. Construct $\bD = \bB_1 \bT_1 \cdots \bB_m \bT_m \bB_{m+1}$, where $(\bT_1, \dots, \bT_m) \in \cU_1 \times \dots \times \cU_m$ and the $\bB_j$ are arbitrary oriented paths such that the height of $\bD$ is $h'$. Place all these $\bD$-s into $\cO_{ps}$.

Notice that $\bD$ does not map to $\bQ$ for the following. Assume for contradiction that $\bD$ maps to a subpath $\bS$ of $\bQ$. Then $\bD$ also maps to the core of $\bS$ which is a staircase. But by construction $\bD$ contains a subpath that does not map to $\bS$.

We show that $\cO_{ps}$ is an obstruction set for $\bQ$. If an structure $\bZ \in \cO_{ps}$ homomorphically maps to an input structure $\bA$, then obviously, there cannot be a homomorphism from $\bA$ to $\bQ$. Assume for
contradiction that no structure in $\cO_{ps}$ maps to $\bA$ but $\bA$ does not map to $\bQ$. Then $\cO$ contains an oriented path $\bP$ that maps to $\bA$. So if we show the following claim then we are done.
\begin{inclaim}
For any oriented path $\bP$ that does not homomorphically map to $\bQ$, there is an oriented path $\bZ \in \cO_{ps}$ that homomorphically maps to $\bP$.
\end{inclaim}
\begin{proof}[Proof of Claim]
Assume that $\bP$ has height precisely $h$. We show that there exists $\bZ \in \cO_{ps}$ of height $h$ such that $\bZ \rightarrow \bP$. Assume for contradiction that none of the oriented paths of height $h$ in $\cO_{ps}$ map to $\bP$. As before, let $\bW_1,\dots,\bW_n$ be the wave segments of $\bQ$, from left to right, and assume without loss of generality that none of the $\bW_i$ is a directed path. Let the initial and final vertices of $\bW_i$ be $a_i$ and $b_i$ respectively, $i \in [n]$. For each $i \in [n]$, find the minimal oriented subpaths of $\bP$ whose initial vertices have the same level as $a_i$, and final vertices have the same level as $b_i$, or vice versa (note that because of the structure of $\bQ$, no such oriented path could contain another as a subpath, however, these oriented paths could overlap). For any such subpath $\bR$ of $\bP$ associated with $\bW_i$, map the lowest vertex of $\bR$ to $a_i$, and the highest vertex of $\bR$ to $b_i$. \emph{Remark 1: In fact there is no other choice.} The rest of the vertices of $\bR$ can be mapped to $\bQ$ as follows. If $\bR$ does not map to $\bW_i$ with first and last vertices matched then by definition, $\bP$ is in $\cO_{ps}$ and we have a contradiction. Therefore let the homomorphism for $\bR$ be $\varphi_\bR$. \emph{Remark 2: Also observe that $\varphi_\bR$ maps the inner vertices of $\bR$ to vertices of the staircase which are between $a_i$ and $b_i$.}

We show that the partial homomorphisms $\varphi_{\bR}$ map the same vertex of $\bP$ to the same vertex in $\bQ$, and furthermore we can also map those vertices of $\bP$ to an element of $\bQ$ that are not mapped anywhere by the $\varphi_{\bR}$. This way we obtain a homomorphism from $\bP$ to $\bQ$ and this would be a contradiction.

First, any vertex $v$ is assigned to a vertex of $\bQ$ by at most two homomorphisms which correspond to consecutive wave segments of $\bQ$. This is because in $\bQ$, $\bW_i$ and $\bW_j$ are disjoint unless $j=i+1$. Using Remarks 1 and 2, we can see that if a vertex $v$ of $\bP$ is in the domain of two ``non-consecutive'' homomorphisms, then because those homomorphisms could not agree on where to map $v$, it is not possible that $\bP \rightarrow \bQ$. This is a contradiction.

Let $\varphi_{\bR_1}$ and $\varphi_{\bR_2}$ (assume without loss of generality that $\bR_1$ and $\bR_2$ correspond to $\bW_1$ and $\bW_2$, respectively) be two partial homomorphisms such that their domains overlap. Then the markers $a_1,b_1,a_2,b_2$ appear in the order $a_1,a_2,b_1,b_2$ when traversing $\bP$ from left to right. The vertices that are in the domain of both homomorphisms are the ones from $a_2$ to $b_1$. By the choice of $a_1,b_1,a_2,b_2$, the segment of $\bP$ from $a_2$ to $b_1$ is a minimal oriented path. Checking the images of the vertices going back from $b_1$ to $a_2$ under the map $\varphi_{\bR_1}$, we see that these vertices are mapped to the rightmost directed path segment of $\bW_1$. Similarly, the image of these vertices under $\varphi_{\bR_1}$ is the leftmost directed path of the $\bW_2$. That is, the two homomorphisms coincide for the vertices from $a_2$ to $b_1$.

Furthermore, some vertices of $\bP$ are not in the domain of any partial homomorphisms. Consider the two minimal oriented paths $\bS$ and $\bS'$ on the two sides of such a maximal continuous sequence of vertices in $\bP$. There are two cases. First, assume that $\bS$ and $\bS'$ both correspond to the same $\bW_i$. Let the markers for $\bS$ be $a$ and $b$ an the markers for $\bS'$ be $a'$ and $b'$. Then following $\bP$ from left to right, the markers appear in the order $a,b,b',a'$. The images of the vertices from $b$ to $b'$ are not defined. (Observe that $b$ and $b'$ are mapped to the same vertex.) Consider the last directed path segment of $\bW_i$ together with the first directed path segment of $\bW_{i+1}$ (or just the last edges of $\bQ$ if $i=n$). Observe that the vertices from $b$ to $b'$ can be mapped to this directed path. The case when $\bS$ and $\bS'$ correspond to different waves of $\bQ$ is handled similarly.

Suppose lastly that $\bP$ has height $h'<h$. Because $\bP$ does not map to any of the subpaths of $\bQ$ of height $h'$, for each subpath $\bQ_1,\dots,\bQ_m$ of $\bQ$ of height $h'$, $\bP$ contains a subpath $\bS_i$ such that $\bS_i \not \rightarrow \bQ_i$, $i \in [m]$. If $\bS_i \not \rightarrow \bQ_i$ then $\bS_i \not \rightarrow core(\bQ_i)$. Recall that core($\bQ_i$) is a staircase and by definition, $\cU_i$ contains an oriented path $\bS_i'$ such that $\bS_i' \rightarrow \bS_i$. It is clear that we can choose oriented paths $\bB_1,\dots,\bB_{m+1}$ such that $\bB_1 \bS_1' \bB_2\dots\bB_m\bS_m' \bB_{m+1} \rightarrow \bP$.
\end{proof}
Finally, it is not hard to see from the construction how to associate filters with the elements of $\cO_{ps}$ to establish that $\cO_{ps}$ is piecewise symmetric.
\end{proof}

We also give a large class of oriented paths for which the $\CSP$ is $\ccNL$-complete. We need the following propositions to prove Theorem~\ref{NL_compl_thm}.

\begin{prop}\label{path_maps}
Let $\bP_1$ and $\bP_2$ be two minimal oriented paths of the same height $h$. Then there is a minimal oriented path $\bQ$ of height $h$ such that $\bQ \rightarrow \bP_1, \bP_2$.
\end{prop}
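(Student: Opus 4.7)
The plan is to construct $\bQ$ as a suitable simple oriented sub-path inside the categorical product $\bP_1 \times \bP_2$, whose vertex set is $P_1 \times P_2$ and whose edges are the pairs $((v,w),(v',w'))$ satisfying $(v,v') \in E^{\bP_1}$ and $(w,w') \in E^{\bP_2}$. Because both coordinates advance or retreat together along any product edge, $(v,w) \mapsto \lev(v)$ is a valid level function on the product, and within the connected component $C$ of $(u_1,u_2)$---where $u_i$ is the unique bottom-level vertex of $\bP_i$---the identity $\lev(v) = \lev(w)$ holds on every vertex.

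The crux is to show that $C$ reaches level $h$. By the minimality of the $\bP_i$, the only vertex of the product at level $h$ is $(v_1,v_2)$, where $v_i$ is the unique top-level vertex of $\bP_i$, so it suffices to produce a walk in the product from $(u_1,u_2)$ to $(v_1,v_2)$. Such a walk is equivalent to a pair of walks in $\bP_1$ and $\bP_2$ sharing a common signature of forward and backward edges and going from source to sink in their respective paths. Since both paths share height $h$ (hence the same algebraic length from source to sink), one can inductively synchronize the two traversals: whenever one coordinate would need to advance but is stalled at an interior sink, first execute a common back-and-forth detour to realign the levels, which is possible because any interior sink in a minimal oriented path lies at a level strictly below $h$ and is reachable via a compatible down-move in the other coordinate.

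Once $(v_1,v_2)\in C$ is established, I would extract any simple oriented path $\bQ$ inside $C$ with endpoints $(u_1,u_2)$ and $(v_1,v_2)$; this exists by elementary graph theory, as $C$ is a finite connected sub-digraph of a graph of bounded degree (each product vertex has in- and out-degree at most four). The height of $\bQ$ is $h$ by construction. It is minimal because $(u_1,u_2)$ and $(v_1,v_2)$ are respectively the unique vertices of the entire product at levels $0$ and $h$, so no internal vertex of $\bQ$ can lie in either extremal level; the required in/out-degree conditions on the endpoints of $\bQ$ are inherited from those of the $u_i$ and $v_i$ inside each $\bP_i$. The two coordinate projections $(v,w) \mapsto v$ and $(v,w) \mapsto w$ then restrict to the desired homomorphisms $\bQ \to \bP_1$ and $\bQ \to \bP_2$.

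The main obstacle is the synchronization step, that is, producing a walk in the product from $(u_1,u_2)$ to $(v_1,v_2)$: a naive ``ascend whenever possible'' strategy can strand the walk at a pair of mismatched interior peaks. The minimality of both paths together with their common height is precisely what guarantees that a suitable pair of compatible forward/backward move sequences exists, and this is the only delicate combinatorial content of the proposition.
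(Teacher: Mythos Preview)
The paper gives no proof of its own here, only a pointer to H\"aggkvist et~al., so there is no detailed argument to compare against; your product-based construction is exactly the standard route taken in that literature, and the extraction, height, minimality, and projection steps are all correct as you describe them.

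The one genuine gap is in the synchronization step. Your local observation is right: if, during the walk, one coordinate sits at an interior sink while you want to go up, the other coordinate cannot simultaneously be at an interior source, since such a product vertex is isolated and therefore unreachable from $(u_1,u_2)$; hence a common down-move is always available. What you have not argued is why iterating such detours makes global progress rather than cycling indefinitely --- the phrase ``inductively synchronize'' hides exactly the induction that needs to be supplied. One clean way to complete it: induct on $|E(\bP_1)|+|E(\bP_2)|$. If, say, $\bP_1$ is not a directed path, pick consecutive edges $a_{i-1}\to a_i\leftarrow a_{i+1}$ (a peak; minimality forces $1<i<|E(\bP_1)|$), delete $a_i$, and identify $a_{i-1}$ with $a_{i+1}$ to obtain a shorter minimal oriented path $\bP_1'$ of the same height. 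By the induction hypothesis there is a walk in $\bP_1'\times\bP_2$ from $(u_1,u_2)$ to $(v_1,v_2)$. This walk lifts to $\bP_1\times\bP_2$: whenever it passes through the merged vertex from the $a_{i-2}$-side to the $a_{i+2}$-side along a product edge using some $\bP_2$-edge $b\to b'$, replace that single step by the three steps $(a_{i-1},b)\to(a_i,b')\leftarrow(a_{i+1},b)\to(a_{i+2},b')$, re-using the same $\bP_2$-edge. With connectivity established this way, the rest of your outline goes through unchanged.
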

\begin{proof}
Not hard, see e.g.\ \cite{Haggkvist/et_al:88:Multiplicative}.
\end{proof}

\begin{prop}\label{core_ori_is_rigid}
A core oriented path has a single automorphism, i.e., it is rigid.
\end{prop}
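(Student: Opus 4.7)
The plan is to observe that any automorphism of an oriented path $\bP$ with vertex sequence $v_0,v_1,\dots,v_N$ is, in particular, an automorphism of the underlying undirected path. Since a simple undirected path admits only two automorphisms (the identity and the reversal $\sigma\colon v_i \mapsto v_{N-i}$), it suffices to show that if $\bP$ is a core with $N \geq 1$ then $\sigma$ is not an automorphism of $\bP$.

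The case $N$ odd is immediate: the middle edge $e_{(N-1)/2}$ joins the two vertices $v_{(N-1)/2}$ and $v_{(N+1)/2}$ that $\sigma$ interchanges, so $\sigma$ would be forced to send this edge to itself with reversed orientation, impossible in a simple digraph.

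For the main case, suppose $N = 2m \geq 2$ and that $\sigma$ is an automorphism; tracking where $\sigma$ sends each edge, this is equivalent to the symmetry condition that for every $i$, the edge $e_i$ is forward (i.e., oriented from $v_i$ to $v_{i+1}$) if and only if $e_{N-1-i}$ is backward. I then define the folding endomorphism $f\colon P \to P$ by $f(v_i) = v_i$ for $i \leq m$ and $f(v_i) = v_{N-i}$ for $i > m$, and verify that $f$ preserves edges: an edge $e_i$ with $i < m$ is fixed pointwise; the edge $e_m$ is sent to the underlying edge between $v_m$ and $v_{m-1}$ (namely $e_{m-1}$); and an edge $e_i$ with $i > m$ is sent to the underlying edge between $v_{N-i}$ and $v_{N-i-1}$ (namely $e_{N-1-i}$). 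In each case the orientations match, precisely because of the symmetry condition above. Since the image of $f$ lies in the proper induced substructure on $\{v_0,\dots,v_m\}\subsetneq V(\bP)$, this contradicts $\bP$ being a core.

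The only step that requires care is the case analysis confirming that the folding map respects edge orientations; everything else follows from the elementary observation that the automorphism group of an undirected path has order at most two.
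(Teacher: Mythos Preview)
Your proof is correct and follows essentially the same approach as the paper's. Both arguments first observe that the only candidate nontrivial automorphism is the reversal of the underlying undirected path, and then show that if this reversal is an automorphism, $\bP$ folds onto its first half and hence is not a core. The paper phrases the symmetry as ``$\bP$ has the form $\bQ\bar{\bQ}$'' (using the level function to force the edge count to be even) and leaves the folding as ``clearly not a core,'' whereas you handle the odd-length case by a direct middle-edge argument and write out the folding map $f$ explicitly; these are presentational differences, not a genuinely different route.
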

\begin{proof}
Let $\bP$ be a core oriented path and $\bP$ be an isomorphic copy of $\bP'$. There are at most two isomorphisms from $\bP'$ to $\bP$ (because a vertex with indegree $0$ must be mapped to a vertex with indegree $0$, and similarly for a vertex with outdegree $0$). One possibility is to  map the first vertex of $\bP'$ to the first vertex of $\bP$ and the last vertex of $\bP'$ to the last vertex of $\bP$. For contradiction, assume that the second possibility happens, i.e., there is an isomorphism $\varphi$ that maps the first vertex of $\bP'$ to the last vertex of $\bP$ and the last vertex of $\bP'$ to the first vertex of $\bP$. Assume that both the first vertex $v$ and last vertex $w$ of $\bP'$ have indegree zero (the other case is similar). Then the $\lev(v) = \lev(w)$. This implies that the number of forward and backward edges in $\bP$ is the same, so $\bP$ has $2q$ edges. By the existence of $\varphi$, $\bP$ must have the form $\bQ \bar{\bQ}$, and such an oriented path is clearly not a core.
\end{proof}

\begin{thm}\label{NL_compl_thm}
Let $\bB$ be a core oriented path that contains a subpath $\bP_1 \bP_2 \bP_3$ of some height $h$ with the following properties: $\bP_1, \bP_2$ and $\bP_3$ are minimal oriented paths, they all have height $h$, and there is a minimal oriented path $\bQ$ of height $h$ such that $\bQ \rightarrow \bP_1$, $\bQ \rightarrow \bP_3$ but $\bQ \not \rightarrow \bP_2$. Then $\CSP(\bB)$ is $\ccNL$-complete.
\end{thm}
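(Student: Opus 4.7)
The plan is to establish $\ccNL$-completeness in two parts. $\ccNL$-membership follows from the combined cited results \cite{Feder:01:Classification,Dalmau:02:Constraint,Dalmau/Krokhin:2008:Majority}, which place every oriented-path CSP in $\ccNL$. So the work is the $\ccNL$-hardness, which I would obtain by a logspace reduction from directed $s$-$t$ reachability to $\coCSP(\bB)$; since $\ccNL=\mathrm{co}\ccNL$, this yields $\ccNL$-hardness of $\CSP(\bB)$.

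The reduction takes an instance $(G,s,t)$, $G=(V,E)$, and constructs $\bA$ from three ingredients: (i) a variable $x_v$ for each $v\in V$; (ii) an edge gadget for each $(u,v)\in E$, which is a fresh copy of $\bQ$ whose source is identified with $x_u$ and whose sink with $x_v$; and (iii) a rigid anchor, which is a copy $\bB'$ of $\bB$ sitting inside $\bA$ with the copies of the two landmark vertices $u_1$ (the endpoint shared by $\bP_1$ and $\bP_2$) and $u_2$ (the endpoint shared by $\bP_2$ and $\bP_3$) identified with $x_s$ and $x_t$, respectively. Since $\bB$ is a core oriented path, Proposition~\ref{core_ori_is_rigid} makes $\bB$ rigid, so the restriction of any homomorphism $\varphi:\bA\to\bB$ to $\bB'$ is the unique isomorphism, pinning $\varphi(x_s)=u_1$ and $\varphi(x_t)=u_2$.

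The forward direction of the equivalence ``$G$ has an $s$-$t$ path $\iff \bA\not\to\bB$'' is a quick level count. Because $\bQ$ is minimal of height $h$, propagation of levels along its edges forces $\lev(\varphi(\text{sink}))=\lev(\varphi(\text{source}))+h$ in every edge gadget. A directed $s$-$t$ path of length $k\ge 1$ in $G$ would therefore force $\lev(u_2)-\lev(u_1)=kh$, whereas the zigzag placement of $\bP_1\bP_2\bP_3$ within a height-$h$ subpath of $\bB$ gives $\lev(u_2)-\lev(u_1)=-h$; contradiction, so $\bA\not\to\bB$.

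The converse direction, where one has to exhibit an explicit homomorphism $\bA\to\bB$ under the assumption that $t$ is unreachable from $s$, is the main obstacle and is where the $\bQ\not\to\bP_2$ hypothesis is used structurally. My plan is to partition $V=R\sqcup NR$ along reachability from $s$, send the $x_v$ for $v\in R$ into the $\bP_1$-region and those for $v\in NR$ into the $\bP_3$-region of $\bB$; then $\bQ\to\bP_1$ handles within-$R$ gadgets, $\bQ\to\bP_3$ handles within-$NR$ gadgets, and no gadget needs to straddle $\bP_2$, which is precluded by $\bQ\not\to\bP_2$. The delicate points are the edges from $NR$ to $R$ (the only possible cross-part edges, since no edge can leave $R$) and the rigid ``$+h$'' level accumulation along the edge gadgets, which forces the $x_v$ to be placed in a globally consistent way; I would handle this by a logspace preprocessing of $G$ into a layered form and by exploiting the additional structure of the core oriented path $\bB$ surrounding $\bP_1\bP_2\bP_3$.
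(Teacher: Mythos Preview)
Your forward direction is fine, but the converse cannot be made to work with this edge gadget. Identifying the source and sink of a fresh copy of $\bQ$ with $x_u$ and $x_v$ forces $\lev(\varphi(x_v))=\lev(\varphi(x_u))+h$ for every arc $(u,v)\in E$ under any homomorphism $\varphi:\bA\to\bB$. Hence any directed cycle in $G$ already yields $\bA\not\to\bB$ irrespective of $s$--$t$ reachability, and even in a layered DAG a directed path of length $L$ forces a level spread of $Lh$ inside the fixed-height structure $\bB$. Preprocessing cannot help: reachability restricted to graphs whose longest directed path is bounded is not $\ccNL$-hard, and the theorem statement grants you no ``additional structure of $\bB$ surrounding $\bP_1\bP_2\bP_3$'' to exploit. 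Concretely, your plan to send every $x_v$ with $v\in R$ into the $\bP_1$-region is already inconsistent for a single arc $(u,v)$ with $u,v\in R$: the gadget forces $\varphi(x_u)$ and $\varphi(x_v)$ to sit exactly $h$ levels apart, so after two steps inside $R$ you have left the height-$h$ window occupied by $\bP_1$.

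The paper avoids this accumulation by pp-defining over $\bB$ the \emph{non-accumulating} order relation $\bR_{\leq}=\{(0,0),(0,1),(1,1)\}$ together with the constants $0,1$ (the first vertices of $\bP_1$ and $\bP_3$, which lie at the same level). The gadget $\bG$ realising $\bR_{\leq}$ is a rigid copy $\bB'$ of $\bB$ together with three minimal height-$h$ paths whose topmost vertices are identified: the path $\bQ$ (first vertex renamed $x$); a path $\bP_{123}$, obtained via Proposition~\ref{path_maps}, that maps to each of $\bP_1,\bP_2,\bP_3$ (first vertex renamed $y$); and a path $\bP_{23}$ mapping to both $\bP_2$ and $\bP_3$, whose first vertex is identified with the vertex of $\bB'$ shared by $\bP_2'$ and $\bP_3'$. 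Rigidity (Proposition~\ref{core_ori_is_rigid}) pins the anchor, and one checks that $\{(h(x),h(y)):\bG\overset{h}{\longrightarrow}\bB\}=\bR_{\leq}$; the hypothesis $\bQ\not\to\bP_2$ is exactly what excludes the pair $(1,0)$. Since $\CSP(\{\bR_{\leq},\{0\},\{1\}\})$ is directed $st$-connectivity, $\ccNL$-hardness follows via the standard pp-reduction. The essential difference from your attempt is that $\bR_{\leq}$ is transitive on a two-element domain, so composing edge gadgets along an arbitrary walk in $G$ never pushes the variables outside $\{0,1\}$.
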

An example is given in Figure~\ref{2_wave}, \emph{3} and \emph{4}.

\begin{figure}[h!]
\begin{center}
\includegraphics[scale=1]{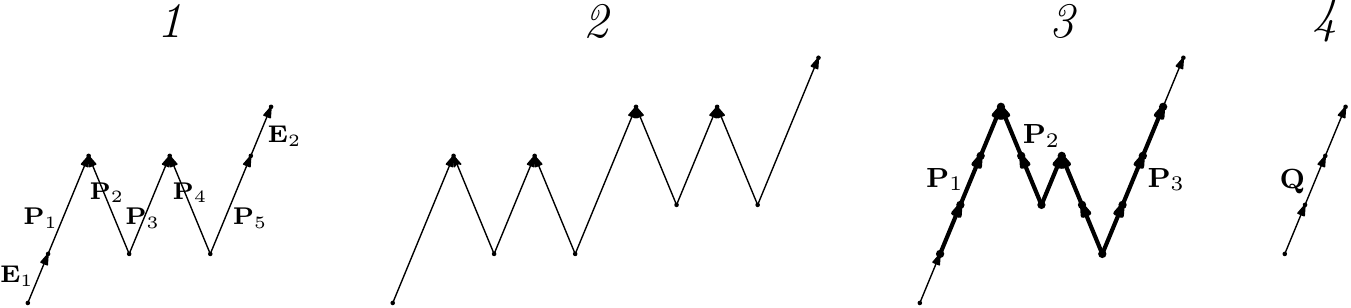}
 \caption{\emph{1:} A $2$-wave. \emph{2:} A staircase. \emph{3:} An example oriented path for which the $\CSP$ is $\ccNL$-complete. \emph{4:} The oriented path $\bQ$ in Theorem~\ref{NL_compl_thm} corresponding to the oriented path in \emph{3}.}\label{2_wave}
\end{center}
\vspace{-0.5cm}
\end{figure}

\begin{proof}[Proof of Theorem~\ref{NL_compl_thm}]
We show that the less-than-or-equal-to relation on two elements, $\bR_{\leq} = \{(0,0),(0,1),(1,1)\}$, and the relations $\{0\}$ and $\{1\}$ can be expressed from $\bP$ using a primitive positive (pp) formula (i.e., a first order formula with only existential quantification, conjunction and equality). It is easy to see and well known that $\CSP(\{\bR_{\leq},\{0\},\{1\}\})$ is equivalent to the $\ccNL$-complete directed $st$-\textsc{Conn} problem.

Since $\bP$ is a core, it is rigid by Proposition~\ref{core_ori_is_rigid}. Assume that the first vertex of $\bP_1$ is in a level lower than the level of the last vertex of $\bP_1$ (the other case can be handled similarly). See the illustration in Figure~\ref{NL_compl}. Assume that the first vertex of $\bP_1$ is $0$ and the first vertex of $\bP_3$ is $1$. We construct a structure $\bG$ with two special vertices $x$ and $y$ such that $\{(h(x),h(y)) \;|\; \bG \overset{h}{\longrightarrow} \bP\} = \bR_{\leq}$. It is well known and easy to show that then $\bR_{\leq}$ can also be expressed from $\bP$ using a pp-formula.
\begin{figure}[h!]
\begin{center}
\includegraphics[scale=1]{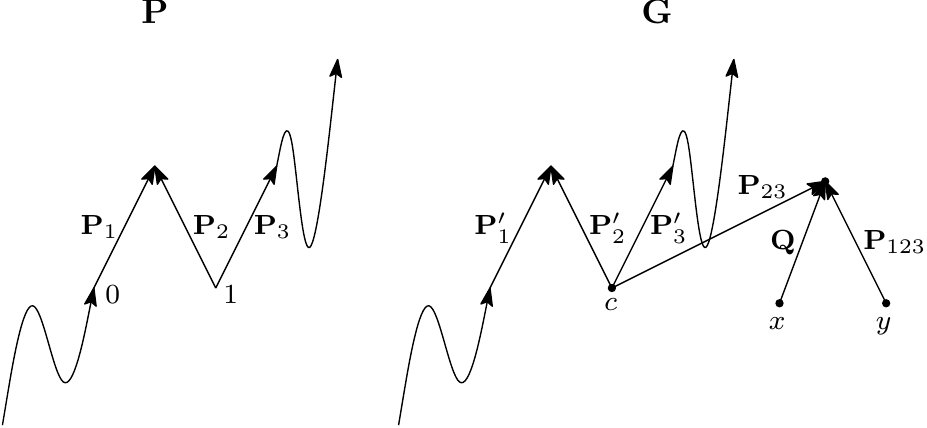}
 \caption{Construction of the gadget $\bG$.} \label{NL_compl}
\end{center}
\end{figure}
Let $\bP'$ be an isomorphic copy of $\bP$. We refer to copies of $\bP_1,\bP_2,\bP_3$ as $\bP_1',\bP_2',\bP_3'$, respectively. Using Proposition~\ref{path_maps}, we find a minimal oriented path $\bP_{23}$ of height $h$ that maps to both $\bP_2$ and $\bP_3$. Similarly, we find a minimal oriented path $\bP_{123}$ that maps to each of $\bP_1,\bP_2,\bP_3$. We rename the first vertex of $\bQ$ to $x$, and the first vertex of $\bP_{123}$ $y$ to $y$. To construct $\bG$, we identify the topmost vertices of the oriented paths $\bP_{23},\bQ$ and $\bP_{123}$. Then we identify the first vertex of $\bP_{23}$ with the vertex $c$ of $\bP'$ that is shared by $\bP_2'$ and $\bP_3'$. Observe that any homomorphism from $\bG$ to $\bP$, must map $c$ to $1$. It is straightforward to verify that $\{(h(x),h(y)) \;|\; \bG \overset{h}{\longrightarrow} \bP\} = \bR_{\leq}$.

Because $\bP$ is rigid, any relation of the form $\{v\}$ where $v \in P$ can be expressed by a pp-formula.
\end{proof}

\section{On CSPs in NL}\label{NL_section}

\subsection{Definitions}

Let $\tau$ be a vocabulary. A \emph{successor $\tau$-structure} $\bS$ is a relational structure with vocabulary $\tau \cup \{\one, \n, \suc\}$, where \one\ and \n\ are unary symbols and \suc\ is a binary symbol. Without loss of generality, the domain $S$ is defined as $\{1,\dots,n\}$, $\one^\bS = \{1\}$, $\n^\bS = \{n\}$, and $\suc^\bS$ contains all pairs $(i,i+1)$, $i \in [n-1]$. Because $\one^\bS, \n^\bS$ and $\suc^\bS$ depend only on $n$, they are called \emph{built-in} relations. When we say that a class of successor structures is homomorphism/isomorphism-closed, all structures under consideration are successor structures, and we understand that homomorphism/isomorphism closure, respectively, is required only for non-built-in relations.

\begin{defn}[Split Operation]
A \emph{split operation} produces a $\tau$-structure $\bA'$ from a $\tau$-structure $\bA$ as follows. For an element $a \in A$ let $T_a$ be defined as
\[
T_a = \{(\mathbf{t}, R, i) \;|\; \mathbf{t} = (t_1,\dots,t_r) \in R^\bA \text{ where } R \in \tau \text{, and } t_i = a\}.
\]
If $|T_a| \leq 1$ for every $a \in A$, then no  split operation can be applied. Otherwise we choose a strict nonempty subset $T$ of $T_a$, and for each triple $(\mathbf{t}, R, i) \in T$, we replace $\mathbf{t} = (t_1,\dots,t_r)$ in $R^\bA$ with $(t_1, \dots, t_{i-1}, a' ,t_{i+1}, \dots, t_r)$ to obtain $\bA'$ (and $A' = A \cup \{a'\}$).
\end{defn}
\begin{defn}[Split-Minimal, Critical]
Let $\cC$ be a class of structures over the same vocabulary. We say that a structure $\bA \in \cC$ is \emph{split-minimal in $\cC$} if for every possible nonempty sequence of split operations applied to $\bA$, the resulting structure is not in $\cC$. We say that a structure $\bA \in \cC$ is \emph{critical in $\cC$} if no proper substructure of $\bA$ is in $\cC$.

For a class of successor $\tau$-structures, criticality and split-minimality is meant only with respect to non-built-in relations.
\end{defn}

\begin{defn}[Read-Once Datalog]
Let $\cP$ be a (linear, symmetric) Datalog program that defines a class of structures $\cC$. If for every critical and split-minimal element of $\cC$ there is a $\cP$-derivation that is read-once, then we say that $\cP$ is \emph{read-once}.
\end{defn}

\begin{defn}[Read-Once mnBP1]
A \emph{monotone nondeterministic branching program} (mnBP) $H$ with variables $X = \{x_1,\dots,x_n\}$ computes a Boolean function $f_H : \{0,1\}^n \rightarrow \{0,1\}$. $H$ is a directed graph with distinguished nodes $s$ and $t$ and some arcs are labeled with variables from $X$ (not all arcs must be labeled). An assignment $\sigma$ to the variables in $X$ defines a subgraph $H_{\sigma}$ of $H$ as follows: an arc $a$ belongs to $H_{\sigma}$ if $\sigma(x) = 1$, where $x$ is the label of $a$, or if $a$ has no label. The function $f_H$ is defined as $f_H(\sigma) = 1$ if and only if there is a directed path in $H_{\sigma}$ from $s$ to $t$ (an \emph{accepting path}). The size of an mnBP is $|V_H|$.
\end{defn}

Let $\tau$ be a vocabulary and $n \geq 1$. We assume without loss of generality that any relational structure whose domain has size $n$ has domain $\{1,2,\dots,n\}$.  Let $(R_1,\mathbf{t}_1), (R_2,\mathbf{t}_2), \dots, (R_q,\mathbf{t}_q)$ be an enumeration of all pairs such that $R_i \in \tau$ and $\mathbf{t}_i \in \{1,2,\dots,n\}^{ar(R_i)}$. We associate a variable $x_i$ with $(R_i,\mathbf{t}_i)$, for each $i = 1,2,\dots,q$. Then if all labels of a branching program $H_n$ are among $x_1,x_2,\dots,x_q$, we say that $H_n$ is over the vocabulary $\tau$ for input size $n$. We say that a family of branching programs $\cF$ defines a class of $\tau$-structures $\cC$, if for each $n \geq 1$, $\cF$ contains precisely one branching program $H_n$ over $\tau$ for input size $n$ such that $f_{H_n}(x_1,x_2,\dots,x_q) = 1$ if and only if the tuple structure with domain $\{1,2,\dots,n\}$ and containing precisely those pairs $(R_i,\mathbf{t}_i)$ for which $x_i = 1$ is in $\cC$.

Let $\cF$ be a family of mnBP1s that contains precisely one branching program for each $n \geq 1$. We say that $\cF$ is a poly-size family if there is a polynomial $p$ such that for each $n \geq 1$, $|V(H_n)| \leq p(n)$. Such a family is denoted by mnBP1(poly). If for every $n$ and every structure of domain size $n$ in $\cC$, $H_n$ contains an accepting path $P$ such that any label on $P$ is associated with at most one arc of $P$, then we say that $\cF$ is \emph{read-once}. (This read-once condition can be made a bit weaker.)

\subsection{Examples}

We give some examples of problems definable by a 1-linDat(\suc) program or by an mnBP1(poly). The program in Section~\ref{Datalog}, Figure~\ref{example} without rule 3 is a read-once linear Datalog(\suc) program that defines the problem directed $st$-\textsc{Conn}. To see that this program $\cP_{st-\textsc{Conn}}$ is read-once, let $\bG$ be any input that is accepted (we do not even need $\bG$ to be critical and split-minimal). Then we find a directed path in $E^\bG$ connecting an element of $S^\bG$ to an element of $T^\bG$ without repeated edges. We build a $\cP_{st-\textsc{Conn}}$-derivation for this path in the obvious way.

For this section, by a clique we mean an ordinary undirected clique but each vertex may or may not have a self-loop. Let \textsc{EvenCliques} be the class of cliques of even size. The read-once linear Datalog(\suc) program $\cP_{\textsc{EC}}$ below defines \textsc{EvenCliques}. The goal predicate of $\cP_{\textsc{EC}}$ is $G_2$, and $E$ is the symbol for the edge relation of the input. The first part of $\cP_{\textsc{EC}}$ checks if the domain size $n$ of the input is even. The second part goes through all pairs $(x,y) \in [n]^2$, and at the same time, checks if $(x,y)$ is an edge in $E$. This is achieved by accessing the order on the domain. Program $\cP_{\textsc{EC}}$ goes through every pair of vertices precisely once, so every $\cP_{\textsc{EC}}$-derivation is read-once, and therefore $\cP_{\textsc{EC}}$ is read-once.

\begin{figure}[h!]
\begin{center}
\begin{eqnarray*}[rcl]
I(y) &\leftarrow& \one(x) \wedge \suc(x,y)\\
I(z) &\leftarrow& I(x) \wedge \suc(x,y) \wedge \suc(y,z) \\
G_1 &\leftarrow& I(x) \wedge \n(x) \\
J(x,y) &\leftarrow& G_1 \wedge \one(x) \wedge \one(y) \\
J(x,z) &\leftarrow& J(x,y) \wedge \suc(y,z) \wedge E(x,z) \wedge E(z,x) \\
J(z,w) &\leftarrow& J(x,y) \wedge \n(y) \wedge \suc(x,z) \wedge \suc(z,w) \wedge \\
&& E(z,w) \wedge E(w,z) \\
G_2 &\leftarrow& J(x,y) \wedge \suc(x,y) \wedge \n(y).\\
\end{eqnarray*}
\end{center}
\caption{The read-once linear Datalog(\suc) program $\cP_{\textsc{EC}}$ for \textsc{EvenCliques}.}
\end{figure}

In fact, we can easily test much more complicated arithmetic properties than the property of being even (e.g., being a power of $k$) with a 1-linDat(\suc) program. However, linear Datalog cannot define any set of cliques with a non-trivial domain size property in the following sense. Let $\bK$ be a clique of size $n$, and $\bK'$ be the clique obtained by identifying any two vertices of $\bK$. Then $\bK$ homomorphically maps to $\bK'$, and therefore if a linear Datalog program accepts $\bK$, then it also accepts $\bK'$. Therefore \textsc{EvenCliques} or, in fact, any set of cliques that contains a clique of size $n$ but no clique of size $n-1$ cannot be defined by a linear Datalog program. Since it is not difficult to convert a 1-linDat(\suc) program into an mnBP1(poly), the aforementioned problems can also be defined with an mnBP1(poly).

The additional power the successor relation gives to 1-linDat is at least twofold. For example, read-once linear Datalog(\suc) can do some arithmetic, as demonstrated above. In addition, let's define the \emph{density} of a graph to be the number of edges divided by the number of vertices. The density of an $n$-clique is $\binom{n}{2} / n = \theta(n)$. As demonstrated above, access to an order allows read-once linear Datalog(\suc) to accept only structures of linear density. On the other hand, any linear Datalog program $\cP$ accepts structures of arbitrary low density. For let $\bS$ be a structure accepted by $\cP$. Then adding sufficiently many new elements to the domain of $\bS$ yields a structure $\bS'$ whose density is arbitrarily close to $0$, and $\bS'$ is still accepted by $\cP$. One consequence of Corollary~\ref{main_coro1} is that if a read-once linear Datalog(\suc) defines $\coCSP(\bB)$, then both aforementioned additional abilities are of no use.

\subsection{Main Results}

We begin with stating the results for 1-linDat(\suc) and poly-size families of mnBP1s discussed in the Introduction.

\begin{thm}\label{main_thm1}
Let $\cC$ be a homomorphism-closed class of successor $\tau$-structures. If $\cC$ can be defined by a 1-linDat(\suc) program of width $(j,k)$, then every critical and split-minimal element of $\cC$ has a $(j,k+j)$-path decomposition.
\end{thm}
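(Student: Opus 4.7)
The strategy is to read the path decomposition off a single read-once $\cP$-derivation witnessing $\bA \in \cC$. Since $\bA$ is critical and split-minimal and $\cP$ is read-once, the definition of read-once Datalog guarantees the existence of a read-once $\cP$-derivation $\sD = (\rho_1,\lambda_1),\dots,(\rho_q,\lambda_q)$ for $\bA$. Writing $V_\ell$ for the set of variables of $\rho_\ell$, I take the $\ell$-th candidate bag to be $S_\ell = \lambda_\ell(V_\ell)$, which has size at most $k$; up to $j$ further elements may need to be added to each $S_\ell$ to cover tuples involving the built-in EDBs, bringing the bag size up to $k+j$.

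The heart of the proof is a structural lemma derived from split-minimality: for every $a \in A$, all indexed non-built-in variables $x^\ell$ with $\lambda_\ell(x) = a$ lie in a single $\Eq(\sD)$-class. Suppose for contradiction that the occurrences of $a$ spread over at least two $\Eq(\sD)$-classes. The read-once property implies that each non-built-in tuple of $\bA$ is witnessed by a unique rule of $\sD$, so the set $T_a$ of tuple-position triples partitions cleanly according to the $\Eq(\sD)$-class of the corresponding indexed variable. Choosing a strict, non-empty union of such classes yields a valid split operation producing a structure $\bA'$; re-instantiating $\sD$ so that the selected indexed variables map to the new element $a'$ while all others still map to $a$ produces a valid $\cP$-derivation for $\bA'$. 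Hence $\bA' \in \cC$, contradicting split-minimality.

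Once this key lemma is established, the three path decomposition axioms fall into place. Because $\Eq(\sD)$-classes are, by their graph-theoretic definition, contiguous chains of indexed variables across consecutive rules, each element $a$ appears in $S_\ell$ for a single interval of indices $\ell$, immediately giving inheritance. For the intersection bound, any $a \in S_\ell \cap S_{\ell+1}$ must have its unique $\Eq(\sD)$-class crossing the interface through the shared IDB $I_\ell$, which has arity at most $j$; hence $|S_\ell \cap S_{\ell+1}| \leq j$. Criticality of $\bA$ supplies the covering axiom for non-built-in tuples, since any unused tuple could be deleted while still admitting $\sD$ as a derivation, contradicting the fact that $\bA$ has no proper substructure in $\cC$.

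The main obstacle I anticipate is the careful handling of the built-in EDBs \suc, \one, \n. Because these are explicitly exempted from the read-once restriction, an element of $A$ may occur in successor-related atoms of several rules without those occurrences merging into a single $\Eq(\sD)$-class, so the naive bag choice $S_\ell = \lambda_\ell(V_\ell)$ need neither cover every built-in tuple nor satisfy inheritance for built-in elements. I expect this to be the source of the extra $j$ in the bag size: augmenting each bag with up to $j$ auxiliary elements tracking the successor endpoints currently bridged between consecutive rules, with the bound $j$ again coming from the IDB arity that controls what information the program can carry between steps. Verifying that this augmentation still yields intersection size $\leq j$ and preserves inheritance for these built-in elements will be the most delicate part of the argument.
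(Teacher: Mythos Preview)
Your central lemma---that re-instantiating the derivation after a split yields a valid $\cP$-derivation for $\bA'$---does not go through, and this is not a detail that can be patched by the augmentation you describe at the end. The obstruction is exactly the built-in atoms, but at the level of the lemma itself rather than the decomposition. Suppose a variable $x$ in rule $\rho_\ell$ with $\lambda_\ell(x)=a$ occurs both in a non-built-in atom $R(\ldots,x,\ldots)$ and in a built-in atom, say $\suc(x,z)$. Your re-instantiation must change $\lambda_\ell(x)$ to the fresh element $a'$; but then the derivation asserts $\suc(a',\lambda_\ell(z))$, which is false in any successor structure extending $\bA$ by a fresh element (and relabelling the domain to $\{1,\dots,n+1\}$ destroys other successor atoms). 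So you cannot produce a derivation witnessing $\bA'\in\cC$, and split-minimality gives you nothing. Note also that homomorphism closure is not used anywhere in your argument, which is a warning sign: $\bA'\to\bA$ goes the wrong way for that hypothesis to help directly.

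The paper takes a completely different route that sidesteps this problem. Rather than splitting a single derivation, it fixes a large $n$, embeds $\bM$ into $\{1,\dots,n\}$ in $\Theta(n^{|M|})$ many ways, and collects a read-once derivation for each embedding. All of these live over the \emph{same} successor structure on $\{1,\dots,n\}$, so built-in atoms are never disturbed. After pigeonholing on the ``prototype'' tuple distribution and on the instantiated head IDB at a well-chosen cut point (this costs a factor $O(n^{j})$), the general Erd\H{o}s--Ko--Rado theorem forces two of the surviving derivations to share at most $j$ domain elements; splicing them at the cut point yields a valid derivation whose extracted structure is a genuine split of $\bM$. The case analysis that triggers this argument is exactly the failure of the inheritance bound $|X_g\cap Y_{g+1}|\le j$ on the prototype bags, and in the complementary case one reads off the decomposition directly---so the extra $j$ in $k+j$ comes from padding each prototype bag by $X_{\ell-1}\cap Y_{\ell+1}$ to restore inheritance, not from the built-in predicates.
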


\begin{coro}\label{main_coro1}
If $\coCSP(\bB)$ can be defined by a 1-linDat(\suc) program of width $(j,k)$, then $\coCSP(\bB)$ can also be defined by a linear Datalog program of width $(j,k+j)$.
\end{coro}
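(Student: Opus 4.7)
My plan is to apply Theorem~\ref{main_thm1} to the class $\cC$ of successor $\tau$-structures whose $\tau$-reduct does not homomorphically map to $\bB$, carve out of the critical and split-minimal members of $\cC$ a $(j,k+j)$-path obstruction set for $\bB$, and then invoke Dalmau's bounded pathwidth duality characterization to obtain a linear $(j,k+j)$-Datalog program.

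First I would verify the hypotheses of Theorem~\ref{main_thm1}. The class $\cC$ is defined by the given 1-linDat(\suc) program of width $(j,k)$, and it is homomorphism-closed on non-built-in relations: if $\bA \in \cC$ and $\bA|_\tau \to \bA'|_\tau$, then any hypothetical $\bA'|_\tau \to \bB$ would compose to give $\bA|_\tau \to \bB$, contradicting $\bA \in \cC$. Theorem~\ref{main_thm1} then guarantees that every critical, split-minimal $\bS \in \cC$ has a $(j,k+j)$-path decomposition on its non-built-in relations, and hence so does the plain $\tau$-reduct $\bS|_\tau$.

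Next I would set $\cO := \{\bS|_\tau : \bS \in \cC \text{ is critical and split-minimal}\}$ and argue that $\cO$ is an obstruction set for $\bB$. By construction every $\bS \in \cO$ already satisfies $\bS \not\to \bB$. For the converse, given an arbitrary $\tau$-structure $\bA$ with $\bA \not\to \bB$, I would equip it with the canonical successor relations to place it in $\cC$ and then run a two-stage reduction to produce the desired witness. Stage one: repeatedly apply single-element splits (those with $|T|=1$) that keep the result in $\cC$; each such split admits a backward homomorphism on non-built-in relations (collapsing the fresh copy $a'$ back to $a$) and strictly decreases the potential $\Phi(\bA) := \sum_{a \in A}|T_a|^2$, since replacing a summand $k^2$ by $(k-1)^2 + 1$ is a drop whenever $k \geq 2$. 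The process therefore terminates at a split-minimal $\bA^* \in \cC$ with $\bA^*|_\tau \to \bA$. Stage two: take any substructure of $\bA^*$ minimal among those still in $\cC$, which is automatically critical and inherits split-minimality, because in any split-minimal structure each element sits in at most one (tuple, position) pair, a property preserved under passage to substructures. The resulting $\bS$ is the required element of $\cO$.

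Finally, having exhibited a $(j,k+j)$-path obstruction set for $\bB$, Dalmau's theorem \cite{Dalmau:02:Constraint} supplies a linear $(j,k+j)$-Datalog program defining $\coCSP(\bB)$, finishing the proof. The only place where I expect any real care is in stage one of the obstruction-set construction: one must restrict to single-element splits (rather than arbitrary nonempty $T$) to obtain a monotone progress measure, and one must explicitly observe that criticality is compatible with split-minimality because substructures of split-minimal structures are split-minimal; both points follow directly from the definitions but are the sole nontrivial bookkeeping in the argument.
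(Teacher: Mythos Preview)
Your approach matches the paper's: apply Theorem~\ref{main_thm1} to get that critical split-minimal members of $\coCSP(\bB)$ have pathwidth $(j,k+j)$, take these as an obstruction set, and invoke Dalmau. The paper simply interleaves splits and proper-substructure passes until neither operation stays in $\cC$, then cites Proposition~\ref{hom_for_d} for the backward homomorphism; your separation into two stages and the potential $\Phi$ are a pleasant explicit termination argument that the paper leaves implicit.

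There is, however, a genuine error in your Stage-two justification. You assert that in any split-minimal structure every element sits in at most one (tuple, position) pair, but split-minimality in $\cC$ only says that no split \emph{remains in} $\cC$, not that no split is possible at all. For instance, if $\bB$ has domain $\{0,1\}$ with $E^{\bB}=\{(0,1)\}$, then the single self-loop $(a,a)$ is split-minimal in $\coCSP(\bB)$ (its only splits yield a single non-loop edge, which maps to $\bB$), yet $|T_a|=2$. Your conclusion that a critical substructure $\bS\subseteq\bA^*$ is still split-minimal is nevertheless correct, for a different reason: any split of $\bS$ via $T\subsetneq T_a^{\bS}\subseteq T_a^{\bA^*}$ is also a legal split of $\bA^*$; the resulting $\bA^{*'}$ maps to $\bB$ by split-minimality of $\bA^*$, and the split $\bS'$ is a substructure of $\bA^{*'}$, whence $\bS'\to\bB$ as well. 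A smaller point in the same vein: to conclude that Stage one really terminates at a split-minimal structure you should observe that any split factors through a $|T|=1$ split with a backward homomorphism to it, so by homomorphism-closure of $\cC$ it suffices to rule out $|T|=1$ splits.
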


\begin{thm}\label{main_thm2}
Let $\cC$ be a homomorphism-closed class of successor $\tau$-structures. If $\cC$ can be defined by a family of mnBP1s of size $O(n^j)$, then every critical and split-minimal element of $\cC$ has a $(j,r+j)$-path decomposition, where $r$ is the maximum arity of the symbols in $\tau$.
\end{thm}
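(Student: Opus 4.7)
The plan is to adapt the proof of Theorem~\ref{main_thm1}, replacing the role of a $1$-linDat$(\suc)$ derivation by that of an accepting path in the branching program. Let $\cF=\{H_n\}_{n\ge 1}$ be the defining family with $|V(H_n)|\le c\cdot n^j$, and fix a critical and split-minimal $\bA\in\cC$ with $|A|=n$. The read-once hypothesis (in its weaker form, valid on critical and split-minimal inputs) supplies an accepting path $P=v_0\to v_1\to\cdots\to v_L$ in $H_n$ from source to sink on which no label repeats. Criticality then forces every tuple of $\bA$ to label some arc of $P$: if $(R,\mathbf{t})\in\bA$ labeled no arc, $P$ would remain accepting for the smaller substructure $\bA\setminus\{(R,\mathbf{t})\}$, contradicting criticality.

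Next I would extract a path decomposition from the ``interface'' sets along $P$. For each $v_i$, set
\[
\sigma(v_i)=\{\,a\in A:a\text{ appears in a label before }v_i\text{ and in a label after }v_i\,\},
\]
and let $\mathbf{t}_i$ denote the (possibly empty) tuple labeling the arc $(v_{i-1},v_i)$. Define $S_i=\sigma(v_{i-1})\cup\sigma(v_i)\cup\{\text{elements of }\mathbf{t}_i\}$. A short verification establishes that (i) every tuple of $\bA$ lies in its bag; (ii) an element can leave or enter the interface only on an arc that reads it, so $\sigma(v_{i-1})\,\triangle\,\sigma(v_i)\subseteq\mathbf{t}_i$, which gives $S_i=(\sigma(v_{i-1})\cap\sigma(v_i))\cup\mathbf{t}_i$ of size at most $|\sigma(v_i)|+r$; (iii) $S_i\cap S_{i+1}=\sigma(v_i)$, since any element common to $\mathbf{t}_i$ and $\mathbf{t}_{i+1}$ automatically belongs to $\sigma(v_i)$; and (iv) the inheritance condition of Definition~\ref{path_decomp} follows directly from the definition of $\sigma$. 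Provided $|\sigma(v_i)|\le j$ for every $i$, the sequence $(S_0,\ldots,S_{L-1})$ is the desired $(j,r+j)$-path decomposition of $\bA$.

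The main obstacle is exactly this cardinality bound $|\sigma(v_i)|\le j$, and I would handle it via the Erd\H{o}s--Ko--Rado counting argument used in the proof of Theorem~\ref{main_thm1}, now transplanted from derivation steps to branching-program vertices. Assume for contradiction that $|\sigma(v_{i^*})|\ge j+1$ at some index $i^*$. The vertex $v_{i^*}$ must ``remember'' which elements of $A$ constitute the current interface; but the polynomial bound $|V(H_n)|=O(n^j)$ permits only $O(n^j)$ distinguishable states, while the number of potential $(j+1)$-element interface choices drawn from $[n]$ grows as $\binom{n}{j+1}=\Theta(n^{j+1})$. A pigeonhole step against the vertex count, combined with EKR on the resulting family of $(j+1)$-subsets, produces two accepting runs that pass through a common vertex $v^*$ with interface sets differing in at least one element. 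Splicing the prefix of one run with the suffix of the other through $v^*$ yields an accepting path for a structure obtained from $\bA$ either by taking a proper substructure (since a tuple ``drops out'' in the splice) or by a nonempty sequence of split operations on the element where the interfaces disagree, contradicting criticality or split-minimality respectively. The technical heart of the argument is verifying that the splicing-then-pushing-through-$\cC$ step really produces a witness against one of the two minimality assumptions; once this is established---exactly as in Theorem~\ref{main_thm1}---the bound $|\sigma(v_i)|\le j$ follows, completing the proof.
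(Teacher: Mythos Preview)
Your interface-set construction in the first half is a clean alternative to the paper's prototype-based extraction: defining $\sigma(v_i)$ and the bags $S_i$ directly from a single accepting read-once path, and verifying $S_i\cap S_{i+1}=\sigma(v_i)$, is correct and yields a $(j,r+j)$-path decomposition whenever all $|\sigma(v_i)|\le j$. The paper's intended adaptation of Theorem~\ref{main_thm1} would instead group accepting paths by their pulled-back ``tuple distribution'' prototype and read the decomposition off the prototype; your route is more direct.

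The gap is in the second half. You fix $|A|=n$ and a \emph{single} accepting path $P$ in $H_n$, and then invoke ``a pigeonhole step against the vertex count, combined with EKR'' to produce \emph{two} accepting runs through a common vertex. But nothing in your setup generates a second run: there is one structure, one branching program, one path, and hence one interface set $\sigma(v_{i^*})$. The comparison of $\binom{n}{j+1}$ potential interfaces against $O(n^j)$ vertices is only heuristic---none of those other interfaces is actually realized. The paper's proof of Theorem~\ref{main_thm1}, which you are meant to adapt, gets its many runs by embedding the fixed structure $\bM$ (with $|M|=s$) into a much larger domain $[n]$ via $(n/s)^s$ embedders $\varphi$; each $\bM_\varphi$ is accepted by $H_n$, yielding that many accepting paths. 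After grouping by prototype (a constant number of classes, since $s$ is fixed) and pigeonholing on the $O(n^j)$ vertices at the cut, EKR on the $s$-element images $\varphi(M)\subseteq[n]$ supplies two runs whose images intersect in at most $j$ points. Your identification $|A|=n$ leaves no room for this: $s$ and $n$ coincide, the prototype count is no longer a constant, and the whole counting collapses.

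There is a second, related issue. Even if you somehow obtained two accepting runs for the \emph{same} $\bA$ through a common vertex, the spliced path would only witness a (not necessarily proper) substructure of $\bA$, never a split: every label on either run is already a tuple of $\bA$. The split in the paper's argument arises precisely because the two runs come from \emph{different} embeddings $\varphi_{\ell_a},\varphi_{\ell_b}$ of $\bM$, so an element $m\in M$ can appear as $\varphi_{\ell_a}(m)\neq\varphi_{\ell_b}(m)$ on the two halves of the splice. To repair your argument you must reinstate the embedding-into-a-larger-domain step and the prototype grouping (or an equivalent device that aligns the cut across many runs); once that is done, either your interface-set extraction or the paper's prototype extraction finishes the proof.
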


\begin{coro}\label{main_coro2}
If $\coCSP(\bB)$ can be defined by a family of mnBP1s of size $O(n^j)$, then $\coCSP(\bB)$ can also be defined by a linear Datalog program of width $(j,r+j)$, where $r$ is the maximum arity of the relation symbols in the vocabulary of $\bB$.
\end{coro}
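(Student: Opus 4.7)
The plan is to combine Theorem~\ref{main_thm2} with the well-known bounded-pathwidth duality characterization of linear Datalog from \cite{Dalmau:02:Constraint}, using critical and split-minimal elements of $\coCSP(\bB)$ as the required obstruction set. First I would note that $\coCSP(\bB)$ is homomorphism-closed (if $\bA \not\to \bB$ and $\bA \to \bC$ then $\bC \not\to \bB$ by composition), and that the given mnBP1(poly) family defining $\coCSP(\bB)$ can equivalently be viewed as defining $\coCSP(\bB)$ augmented with the built-in successor relations, simply by ignoring those input variables. Theorem~\ref{main_thm2} then immediately yields that every critical and split-minimal element $\bS \in \coCSP(\bB)$ admits a $(j, r+j)$-path decomposition with respect to the $\tau$-relations, where $r$ is the maximum arity of symbols in the vocabulary of $\bB$.

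The crux is then to show that the set $\cO$ of all critical and split-minimal elements of $\coCSP(\bB)$ forms an obstruction set for $\bB$. The forward direction is immediate: if $\bS \in \cO$ and $\bS \to \bA$, then $\bA \not\to \bB$ follows from homomorphism-closure. For the converse, given $\bA \not\to \bB$, I would produce the desired $\bS \in \cO$ with $\bS \to \bA$ via a monovariant argument. Specifically, consider the set $\cX$ of structures $\bT \in \coCSP(\bB)$ with no isolated domain elements such that $\bT \to \bA$, and pick $\bS \in \cX$ that lexicographically minimizes the pair (total number of tuples, negated domain size). The set $\cX$ is nonempty --- strip any isolated elements from $\bA$ itself, which preserves both membership in $\coCSP(\bB)$ and the trivial homomorphism --- and once the tuple count is fixed the domain size is bounded by $r$ times the tuple count for structures without isolated elements, so the minimum is well-defined.

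The main step, though more notational than conceptual, will be verifying that the minimizer $\bS$ is simultaneously critical and split-minimal. Criticality: any proper substructure of $\bS$ lying in $\coCSP(\bB)$ must drop at least one tuple (since shrinking the domain forces dropping a tuple when no element is isolated, and shrinking only the relations trivially reduces tuple count), contradicting the primary minimization. Split-minimality: any split of $\bS$ lying in $\coCSP(\bB)$ preserves tuple count, introduces no new isolated elements (both the element $a$ and its duplicate $a'$ retain a nonempty set of tuple-positions, since in the definition $T$ is a strict nonempty subset of $T_a$), and strictly enlarges the domain, contradicting the secondary maximization. Once $\cO$ is shown to be an obstruction set all of whose members have $(j, r+j)$-path decompositions, $\bB$ has $(j, r+j)$-bounded pathwidth duality, and by the standard result \cite{Dalmau:02:Constraint} recalled in the introduction, $\coCSP(\bB)$ is definable by a linear $(j, r+j)$-Datalog program, as required.
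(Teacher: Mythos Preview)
Your proposal is correct and follows essentially the same approach as the paper: apply Theorem~\ref{main_thm2} to conclude that critical, split-minimal members of $\coCSP(\bB)$ have $(j,r+j)$-path decompositions, show these form an obstruction set, and invoke Dalmau's bounded pathwidth duality characterization. The only cosmetic difference is in verifying the obstruction-set property: the paper argues by an iterative process (repeatedly apply splits and pass to substructures while remaining in $\coCSP(\bB)$, then cite Proposition~\ref{hom_for_d} for the homomorphism back to the original), whereas you phrase the same idea as a lexicographic extremal argument on (tuple count, $-$domain size); your version has the minor advantage of making termination explicit.
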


As discussed before, a wide class of $\CSP$s--$\CSP$s whose associated \emph{variety admits the unary, affine or semilattice types}--does not have bounded pathwidth duality \cite{Larose/Tesson:09:Universal}. It follows that all these $\CSP$s are not definable by any 1-linDat(\suc) program, or with any mnBP1 of poly-size. An example of such a $\CSP$ is the $\ccP$-complete $\CSP$ \textsc{Horn-3Sat}.

After some definitions, we give a high-level description of the proof of Theorem~\ref{main_thm1}. Any $\tau$-structure $\bM$ with domain size $n$ can be naturally converted into an isomorphic successor structure $\bM(\pi)$, where $\pi$ is a bijective function $\pi : M \rightarrow \{1,\dots,n\}$. We define the domain $M(\pi)$ as $\{1,\dots,n\}$ (note that this automatically defines $\one^{M_\pi}$, $\n^{M_\pi}$ and $\suc^{M_\pi}$) and for any $R \in \tau$, and $(t_1,\dots,t_{\ar(R)}) \in R^\bM$, we place the tuple $(\pi(t_1),\dots,\pi(t_{\ar(R)}))$ into $R^{\bM_\pi})$. When we want to emphasize that a structure under consideration is a successor $\tau$-structure, we use the subscript $s$, for example $\bM_s$. Given a successor $\tau$-structure $\bM_s$, $\bM$ denotes the structure $\bM_s$ but with the relations $\one^{\bM_s}$, $\n^{\bM_s}$ and $\suc^{\bM_s}$ removed.

We make the simple but important observation that we are interested only in isomorphism-closed classes. For example, $\coCSP(\bB)$ is obviously isomorphism-closed. We will crucially use the fact that if $\bM_s$ is accepted by a 1-linDat(\suc) program $\cP$, then $\cP$ must also accept $\bM(\pi)$ for any bijective function $\pi$. We are ready to describe the intuition behind the proof of Theorem~\ref{main_thm1}.

A 1-linDat(\suc) program that ensures that the class of successor-structures $\cC$ it defines is homomorphism-closed (and therefore isomorphism-closed) does not have enough ``memory''--due to its restricted width--to also ensure that some key structures in $\cC$ are ``well-connected''. If these key structures are not too connected, then we can define $\coCSP(\bB)$ in linear Datalog.

The more detailed proof plan is the following. Assume that $\coCSP(\bB)$, where the input is a successor structure, is defined by a linDat(\suc) program $\cP$ of width $(j,k)$. We choose a ``minimal'' structure $\bM$ in $\cC$ that is accepted, and assume for contradiction that $\bM$ does not have width $(j,k)$. Then roughly speaking, for all possible ``permutations of the domain elements of $\bM$'', $\bM$ must be accepted; therefore for each of these isomorphic structures, $\cP$ must be able to provide a derivation. Because this procedure will provide many enough derivations, we will be able to find some derivations which are of a desired form. The identification of these ``good'' derivations also crucially uses the generalized Erd\H os-Ko-Rado theorem. Once these derivations are detected, they can be combined to produce a derivation that ``encodes'' a structure of bounded pathwidth. The structures of bounded pathwidth produced this way can be used to define $\coCSP(\bB)$ in linear Datalog. We give the formal proofs.

We need the following additional definitions related to linear Datalog. In addition to extracting $\Ex(\sD)$ from $\sD$, we can also extract a decomposition of $\Ex(\sD)$ reminiscent of a path decomposition. For each $\ell \in [q]$, we define a tuple structure $\tilde{\bB}_\ell$ by adding $(R,\mathbf{t})$ to $\tilde{\bB}_\ell$ if $R(\mathbf{t})$ appears in $\rho_\ell$. In such a representation of $\Ex(\sD)$, we call $\tilde{\bB}_\ell$ the $\ell$-th \emph{bag}, and $(\tilde{\bB}_1,\dots,\tilde{\bB}_q)$ the \emph{tuple distribution} of $\Ex(\sD)$. It will be useful to remove empty bags from the list of bags $(\tilde{\bB}_1,\dots,\tilde{\bB}_q)$ to obtain the sequence $(\tilde{\bB}_{i_1},\dots,\tilde{\bB}_{i_t})$, where $i_\ell < i_{\ell'}$ if $\ell < \ell'$. For simpler notation, we renumber $(\tilde{\bB}_{i_1},\tilde{\bB}_{i_2}.\dots,\tilde{\bB}_{i_t})$ to $(\tilde{\bB}_1,\tilde{\bB}_2,\dots,\tilde{\bB}_t)$. We call the sequence $(\tilde{\bB}_1,\dots,\tilde{\bB}_t)$ the \emph{pruned tuple distribution} of $\sD$. The following is easy to prove.

\begin{prop}\label{hom_for_d}
Let $\bA'$ be a $\tau$-structure obtained from a $\tau$-structure $\bA$ by applying a sequence of split operations. Then $\bA' \rightarrow \bA$.
\end{prop}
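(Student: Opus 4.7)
The plan is to define the obvious ``un-splitting'' map and verify it is a homomorphism, then iterate. Concretely, suppose $\bA'$ is obtained from $\bA$ by a single split operation: some element $a \in A$ is chosen together with a strict nonempty subset $T \subsetneq T_a$, and every triple $(\mathbf{t},R,i) \in T$ has its $i$-th coordinate replaced by a fresh element $a' \notin A$. Define $f : A' \to A$ by $f(a') = a$ and $f(x) = x$ for all $x \in A$. I would then check that $f$ is a homomorphism: for any $(R,\mathbf{t}') \in \tilde{\bA'}$, either $\mathbf{t}'$ is unchanged from $\bA$ (in which case $f(\mathbf{t}') = \mathbf{t}' \in R^\bA$), or $\mathbf{t}'$ arose from some $\mathbf{t} = (t_1,\dots,t_r) \in R^\bA$ with $t_i = a$ by replacing $t_i$ with $a'$; applying $f$ coordinate-wise sends $a'$ back to $a$ and fixes everything else, recovering $\mathbf{t} \in R^\bA$.

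For a sequence $\bA = \bA^{(0)}, \bA^{(1)}, \dots, \bA^{(m)} = \bA'$ of split operations, I would proceed by induction on $m$. The base case $m = 0$ is trivial (identity). For the inductive step, compose the homomorphism $f_m : \bA^{(m)} \to \bA^{(m-1)}$ produced by the single-split argument above with the homomorphism $\bA^{(m-1)} \to \bA$ supplied by the inductive hypothesis; composition of homomorphisms is a homomorphism, yielding $\bA' \to \bA$.

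No real obstacle is anticipated: the only point that deserves a line of care is noting that $a' \notin A$ so that $f$ is well-defined on $A' = A \cup \{a'\}$ without ambiguity, and that the definition of a split operation only modifies tuples from $T \subseteq T_a$ (in particular, does not touch tuples whose $i$-th coordinate was not $a$), which is exactly what makes the coordinate-wise application of $f$ match the original tuples in $R^\bA$.
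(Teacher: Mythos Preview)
Your proposal is correct and is exactly the natural argument; the paper does not spell out a proof at all, merely stating that the proposition is ``easy to prove,'' so your un-splitting map $f(a')=a$, $f(x)=x$ otherwise, together with induction on the number of splits, is precisely what is intended.
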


We recall the following theorem tailored a bit to our needs.
\begin{thm}[Erd\H{o}s-Ko-Rado, general case; see, e.g., \cite{Frankl/Graham:89:Old}]\label{EKR}
Suppose that $\cF$ is a family of $s$-subsets of $\{1,\dots,n\}$, where $n \geq n_0(s,j+1)$. Suppose that for any two sets $S_1,S_2 \in \cF$, $|S_1 \cap S_2| \geq j+1$. Then $|\cF| \leq \binom{n-(j+1)}{s-(j+1)} = O(n^{s-(j+1)})$.
\end{thm}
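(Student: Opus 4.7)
The plan is to use the shifting (compression) technique pioneered for such problems by Frankl. For each pair $1 \leq i < j' \leq n$ (using $j'$ to avoid clash with the $j$ of the statement), define the shift operator $S_{i,j'}$ that replaces any $A \in \cF$ containing $j'$ but not $i$ with $(A \setminus \{j'\}) \cup \{i\}$, provided this replacement is not already in $\cF$; otherwise $A$ is left alone. First I would verify that $|S_{i,j'}(\cF)|=|\cF|$, that the sets in $S_{i,j'}(\cF)$ remain of size $s$, and crucially that the $(j+1)$-intersecting condition is preserved. This last claim requires care: for $t$-intersecting families with $t=j+1 \geq 2$, one must argue by case analysis on whether $i$ and $j'$ belong to each of a pair of sets before and after the shift.

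After iterating shifts until no further shift changes the family, we obtain a stable (``compressed'') family $\cF^{*}$ with $|\cF^{*}|=|\cF|$ and the same intersection property. The second step is to exploit compressedness to show that, for $n \geq n_0(s,j+1)$, every set in $\cF^{*}$ must contain the canonical block $K=\{1,2,\dots,j+1\}$. The idea is that if some $A \in \cF^{*}$ misses an element $r \in K$, then since $\cF^{*}$ is shifted one can produce a witness $B \in \cF^{*}$ obtained by pushing the support of some set ``low,'' so that $|A \cap B| \leq j$; here the hypothesis $n$ large enough is used to guarantee enough room to construct such a $B$.

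Once the inclusion $K \subseteq A$ for every $A \in \cF^{*}$ is established, we conclude by counting: the remaining $s-(j+1)$ elements of such an $A$ lie in an $(n-(j+1))$-element set, so
\[
|\cF| \;=\; |\cF^{*}| \;\leq\; \binom{n-(j+1)}{\,s-(j+1)\,} \;=\; O(n^{\,s-(j+1)}),
\]
which is the desired bound.

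The main obstacle, I expect, is the preservation of $t$-intersection under shifting when $t \geq 2$: unlike the classical $t=1$ Erd\H{o}s--Ko--Rado case, a single shift can in principle reduce a multi-element intersection, and only a careful pairing argument shows that the compressed family again has all pairwise intersections of size $\geq j+1$. The threshold $n_0(s,j+1)$ in the statement emerges from this analysis; for small $n$ the extremal families are not stars but the Frankl--Ahlswede--Khachatrian ``kernel'' families, and proving the sharp structural bound in that regime would be considerably harder. Fortunately, the paper only needs the asymptotic $O(n^{s-(j+1)})$ form, so the star regime suffices.
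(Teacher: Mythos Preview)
The paper does not prove this theorem at all: it is stated as a known result with a citation to the survey of Frankl and Graham, and is then used as a black box inside the proof of Theorem~\ref{main_thm1}. So there is nothing to compare your argument against in the paper itself.

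That said, your shifting outline is the standard route to the large-$n$ case and is essentially sound. The case analysis you flag as the ``main obstacle''---preservation of $(j+1)$-intersection under a single shift $S_{i,j'}$---goes through exactly as you suggest: the only delicate case is when $A$ is shifted but $B$ is not because $B':=(B\setminus\{j'\})\cup\{i\}$ already lies in $\cF$; then $|A\cap B'|\ge j+1$ forces $|A\cap B|\ge j+2$, which compensates for the element lost when $A$ is shifted. One point to tighten: for a compressed $t$-intersecting family with $n$ large, the claim that \emph{every} member contains $\{1,\dots,t\}$ is true but not a one-liner; the clean way is to note that $[s]\in\cF^{*}$ (by compressedness), and then for any $A\in\cF^{*}$ missing some $r\le t$ one constructs, again using compressedness and the room afforded by large $n$, a set $B\in\cF^{*}$ supported on $[t]\cup\{\text{elements outside }A\}$ with $|A\cap B|\le t-1$. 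Since the paper only needs the asymptotic bound $O(n^{s-(j+1)})$, your plan is adequate for its purposes.
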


\begin{proof}[Proof of Theorem~\ref{main_thm1}.]
Let the read-once linear Datalog(\suc) program that defines $\cC$ be $\cP$. Let $\bM$ be a structure in $\cC$ such that $\bM$ is critical and split-minimal, but assume for contradiction that $\bM$ has no $(j,k)$-path decomposition. Suppose that $M = \{m_1,\dots,m_s\}$. We choose a large enough $n$ divisible by $s$ (for convenience): how large $n$ should be will become clear later. We begin with constructing a class of successor structures from $\bM$. Let $\varphi : M \rightarrow \{1,\dots,n\}$ be a function that for all $i \in [s]$, maps $m_i$ to one of the numbers in
$\left\{(i-1) \cdot \frac{n}{s}+1,\dots, i \cdot \frac{n}{s} \right\}$. We call such a function an \emph{embedder}. Observe that there are $(\frac{n}{s})^s$ possible embedder functions. For each embedder $\varphi$, we define a successor structure $\bM_{\varphi}$ as follows. $\bM_{\varphi}$ is obtained from $\bM$ by renaming $m_i$ to $\varphi(m_i)$ for each $i \in [s]$, and adding all numbers inside $\{1,\dots,n\}$ but not in the range of $\varphi$ to the domain of the structure.

Obviously for any embedder $\varphi$, $\bM_\varphi$ contains an isomorphic copy of $\bM$, and therefore $\bM \rightarrow \bM_{\varphi}$. Since $\cC$ is closed under homomorphisms (and successor-invariant), it follows that for any embedder $\varphi$, $\bM_{\varphi}$ is accepted by $\cP$. Our goal now is to show that $\cP$ accepts a structure that can be obtained from $\bM$ by applying a nonempty sequence of split operations. This would contradict the split-minimality of $\bM$ with respect to $\cC$.

Let $\varphi_1,\dots,\varphi_t$ be an enumeration of all $t = (\frac{n}{s})^s$ embedders, and $\bM_{\varphi_1},\dots,\bM_{\varphi_t}$ the corresponding successor structures. Since $\cP$ is read-once, we can assume that for each $i \in [t]$, there is a read-once $\cP$-derivation for $\bM_{\varphi_i}$:
\[\mathscr{D}(\bM_{\varphi_i}) = (\rho_1^i, \lambda_1^i), \dots, (\rho_{q_i}^i,\lambda_{q_i}^i).\]

For each $\mathscr{D}(\bM_{\varphi_i})$ we denote its pruned tuple distribution as $(\tilde{\bB}_1^i,\dots,\tilde{\bB}_{w_i}^i)$. Let $\psi_i(\tilde{\bB}_1^i,\dots,\tilde{\bB}_{w_i}^i)$ denote $(\tilde{\bM}_1^i,\dots,\tilde{\bM}_{w_i}^i)$, where $\tilde{\bM}_\ell^i$ for each $\ell \in [w_i]$ is obtained as follows. For every $(R,\mathbf{t}) \in \tilde{\bB}_\ell^i$, place $(R, \varphi_i^{-1}(\mathbf{t}))$ into $\tilde{\bM}_\ell^i$. We call $\psi_i(\tilde{\bB}_1^i,\dots,\tilde{\bB}_{w_i}^i)$ the \emph{prototype} of $(\tilde{\bB}_1^i,\dots,\tilde{\bB}_{w_i}^i)$. We say that two pruned tuple distributions $(\tilde{\bB}_1^i,\dots,\tilde{\bB}_{w_i}^i)$ and $(\tilde{\bB}_1^{i'},\dots,\tilde{\bB}_{w_{i'}}^{i'})$ are \emph{similar} if they have the same prototypes, i.e., $\psi_i(\tilde{\bB}_1^i,\dots,\tilde{\bB}_{w_i}^i) = \psi_{i'}(\tilde{\bB}_1^{i'},\dots,\tilde{\bB}_{w_{i'}}^{i'})$.

Note that the codomain of $\psi_i$, for any $i$, is a sequence $S$ of bags such that a bag contains elements of $\tilde{\bM}$. Because by definition, every bag in $S$ is nonempty, and $\mathscr{D}(\bM_{\varphi_i})$ is read-once, we have that $|S| \leq |\tilde{\bM}|$. Therefore the number of possible bag sequences can be upper-bounded by a function of $s$; let this upper bound be $c_s$. It follows that there must be at least $t' = \frac{t}{c_s}$ embedders $\varphi_{i_1},\dots,\varphi_{i_{t'}}$ such that for any $\ell, \ell' \in \{i_1,i_2,\dots,i_{t'}\}$, $(\tilde{\bB}_1^\ell,\dots,\tilde{\bB}_{w_\ell}^\ell)$ and $(\tilde{\bB}_1^{\ell'},\dots,\tilde{\bB}_{w_{\ell'}}^{\ell'})$ are similar. Let the common prototype of all these similar pruned tuple distributions be $(\tilde{\bM}_1,\dots,\tilde{\bM}_w)$ $(i.e., \psi_{i_1}(\tilde{\bB}_1^{i_1},\dots,\tilde{\bB}_{w_{i_1}}^{i_1}))$. Because $\tilde{\bM}$ is critical, it follows that \mbox{$\tilde{\bM} = \tilde{\bM}_1 \cup \dots \cup \tilde{\bM}_w$}\footnote{Note that because $\tilde{\bM}$ is critical and $\cC$ is homomorphism closed, $\tilde{\bM}$ cannot contain isolated elements except when $\tilde{\bM}$ is a structure with a single element and no tuples. In this case the only critical and split-minimal element is $\tilde{\bM}$ and the empty set is a $(0,0)$-path decomposition for $\tilde{\bM}$.}.

To give a heads-up to the reader, our goal now is to construct a derivation $\mathscr{D}'$ using the derivations $\mathscr{D}(\bM_{\varphi_{i_1}}), \mathscr{D}(\bM_{\varphi_{i_2}}), \dots, \mathscr{D}(\bM_{\varphi_{i_{t'}}})$, such that $\Ex(\mathscr{D}')$ is isomorphic to a structure $\tilde{\bM}'$ that can be obtained from $\tilde{\bM}$ by a nonempty sequence of split operations. Because $\tilde{\bM}$ is split-minimal, this contradiction will complete the proof.

Define $X_g = \tilde{M}_1 \cup \dots \cup \tilde{M}_g$, and $Y_g = \tilde{M}_g \cup \dots \cup \tilde{M}_w$ for $g \in [w]$. If there is no $g \in [w-1]$ such that $|X_g \cap Y_{g+1}| > j$, then we construct a $(j,k+j)$-path decomposition $S_1,\dots,S_w$ for $\bM$ as follows. Define $S_1 = \tilde{M_1}$, $S_w = \tilde{M}_w$, and $S_\ell = \tilde{M}_\ell \cup (X_{\ell-1} \cap Y_{\ell+1})$, for $2 \leq \ell \leq w-1$. The first condition of Definition~\ref{path_decomp} is obviously satisfied. For the second condition, take $S_i$ and $S_{i'}$ and $i < \ell < i'$. If $a \in S_i \cap S_{i'}$ then $a \in \tilde{M}_{i''}$ and $a \in \tilde{M}_{i'''}$ for some $i'' \leq i$ and $i' \leq i'''$, so $a \in S_{\ell}$.  For the first part of the third condition observe that because $\cP$ has width $(j,k)$, $|\tilde{M}_\ell| \leq k$. Because we added at most $j$ new elements to $\tilde{M}_\ell$ to obtain $S_\ell$,  $|S_\ell| \leq k+j$ for any $\ell$. For the second part of the third condition, observe that $S_\ell \subseteq X_\ell$ and $S_{\ell+1} \subseteq Y_{\ell+1}$, so $|S_\ell \cap S_{\ell+1}| \leq j$ for any $\ell$.

For the other case, suppose that for some $g$, $|X_g \cap Y_{g+1}| > j$. Recall that for each $\ell \in \{i_1,i_2,\dots,i_{t'}\}$, $\tilde{\bM}_g$ was constructed from the bag $\tilde{\bB}^\ell_g$, and $\tilde{\bB}^\ell_g$ was constructed from a rule $\rho^\ell_{g_\ell}$ for some $g_\ell$, i.e., the $g_\ell$-th rule in the derivation $\mathscr{D}(\bM_{\varphi_{\ell}}) = (\rho^\ell_1,\lambda^\ell_1),\dots,(\rho^\ell_{q_\ell},\lambda^\ell_{q_\ell})$. Let $\iota$ be the number of IDBs of $\cP$ and $\kappa$ the maximum arity of any IDB of $\cP$. Recall that since $\cP$ has width $(j,k)$, any IDB contains at most $j$ variables.
Assume that the head IDB of $\rho^\ell_{g_\ell}$ is $I^\ell_{g_\ell}(\mathbf{x^{\ell}}_{g_\ell})$. Then there are at most $\iota j^\kappa n^j$ possibilities for the head IDB $I^\ell_{g_\ell}$ together with its variables instantiated to numbers in $[n]$. This means that there is an IDB $I$ and a tuple $\mathbf{t}$ such that for at least $t''=\frac{t'}{\iota j^\kappa n^j}$ values of $\ell \in \{i_1,i_2,\dots,i_{t'}\}$, it holds that
$I^\ell_{g_\ell} = I$, and $\lambda^\ell_{g_\ell}(\mathbf{x^{\ell}}_{g_\ell}) = \mathbf{t}$. Let these $t''$ values be $\{\ell_1,\dots,\ell_{t''}\}$.

We establish later that we can choose values $\ell_a, \ell_b \in \{\ell_1,\dots,\ell_{t''}\}$ such that the following inequality holds:
\begin{align}\label{small_intersection}
\left(\tilde{B}_1^{\ell_a} \cup \dots \cup \tilde{B}_w^{\ell_a} \right) \cap \left(\tilde{B}_1^{\ell_b} \cup \dots \cup \tilde{B}_w^{\ell_b} \right) \leq j.
\end{align}
Assuming that we have such $\ell_a$ and $\ell_b$, we define $\mathscr{D}'$ as:
\[(\rho^{\ell_a}_1,\lambda^{\ell_a}_1),\dots,(\rho^{\ell_a}_{g_{\ell_a}},\lambda^{\ell_a}_{g_{\ell_a}}),
(\rho^{\ell_b}_{g_{\ell_b}+1},\lambda^{\ell_b}_{g_{\ell_b}+1}),\dots,(\rho^{\ell_b}_{q_{\ell_b}},\lambda^{\ell_b}_{q_{\ell_b}}).\]
That is, we ``cut'' the derivations $\mathscr{D}(\bM_{\varphi_{\ell_a}})$ at the $g_{\ell_a}$-th rule, and cut the derivation $\mathscr{D}(\bM_{\varphi_{\ell_b}})$ at the $g_{\ell_b}$-th rule, and concatenate the first part of $\mathscr{D}(\bM_{\varphi_{\ell_a}})$ with the second part of $\mathscr{D}(\bM_{\varphi_{\ell_b}})$. $\mathscr{D}'$ is a valid derivation because at the point of concatenation, the head IDB of $\rho^{\ell_a}_{g_{\ell_a}}$ is the same as the IDB in the body of $\rho^{\ell_b}_{g_{\ell_b}+1}$, and the variables of this IDB are instantiated to the same values in both rules. Observe that the pruned tuple distribution of $\mathscr{D}'$ is
$(\tilde{\bB}_1^{\ell_a},\dots,\tilde{\bB}_g^{\ell_a},\tilde{\bB}_{g+1}^{\ell_b},\dots,\tilde{\bB}_w^{\ell_b})$. Set $\tilde{\mathbf{B}} = \tilde{\bB}_1^{\ell_a} \cup \dots \cup \tilde{\bB}_g^{\ell_a} \cup \tilde{\bB}_{g+1}^{\ell_b} \cup \dots \cup \tilde{\bB}_w^{\ell_b}$.

\begin{inclaim}
$\tilde{\mathbf{B}}$ is isomorphic to a structure that can be obtained from $\tilde{\bM}$ by a nonempty sequence of split operations.
\end{inclaim}
\begin{proof}[Proof of Claim]
Observe that the substructure $\tilde{\bM}_1 \cup \dots \cup \tilde{\bM}_g$ of $\tilde{\bM}$ is isomorphic to $\tilde{\bB}_1^{\ell_a} \cup \dots \cup \tilde{\bB}_g^{\ell_a}$ through $\varphi_{\ell_a}$. Similarly, $\tilde{\bM}_{g+1} \cup \dots \cup \tilde{\bM}_w$ is isomorphic to $\tilde{\bB}_{g+1}^{\ell_b} \cup \dots \cup \tilde{\bB}_w^{\ell_b}$ through $\varphi_{\ell_b}$. Our goal is to understand the difference between $\tilde{\bM}$ and $\tilde{\mathbf{B}}$.

Notice that because any embedder maps $m_i \in M$ into the interval $\left\{(i-1) \cdot \frac{n}{s}+1,\dots, i \cdot \frac{n}{s} \right\}$, and for any $i \neq i'$, $\left\{(i-1) \cdot \frac{n}{s}+1,\dots, i \cdot \frac{n}{s} \right\} \cap \left\{(i'-1) \cdot \frac{n}{s}+1,\dots, i' \cdot \frac{n}{s} \right\} = \emptyset$, if $i \neq i'$, then $\varphi_{\ell_a}(m_i) \neq \varphi_{\ell_b}(m_{i'})$. Therefore $\varphi_{\ell_a}$ and $\varphi_{\ell_b}$ can return the same value only if they both get the same input. The set $X_g \cap Y_{g+1}$ can be thought of as those elements of $\tilde{\bM}$ where $\tilde{\bM}_1 \cup \dots \cup \tilde{\bM}_g$ and $\tilde{\bM}_{g+1} \cup \dots \cup \tilde{\bM}_w$ are ``glued together'' to obtain $\tilde{\bM}$. Let $U = \tilde{B}_1^{\ell_a} \cup \dots \cup \tilde{B}_g^{\ell_a}$ and $V = \tilde{B}_{g+1}^{\ell_b} \cup \dots \cup \tilde{B}_w^{\ell_b}$. The set $U \cap V$ can be thought of as those elements of $\tilde{\bB}$ where $\tilde{\bB}_1^{\ell_a} \cup \dots \cup \tilde{\bB}_g^{\ell_a}$ and $\tilde{\bB}_{g+1}^{\ell_b} \cup \dots \cup \tilde{\bB}_w^{\ell_b}$ are ``glued together'' to obtain $\tilde{\bB}$.

If for all elements $m \in X_g \cap Y_{g+1}$, $\varphi_{\ell_a}(m) = \varphi_{\ell_b}(m)$, then $\tilde{\bB}$ would be isomorphic to $\tilde{\bM}$, i.e., $\tilde{\bB}_1^{\ell_a} \cup \dots \cup \tilde{\bB}_g^{\ell_a}$ would be glued to $\tilde{\bB}_{g+1}^{\ell_b} \cup \dots \cup \tilde{\bB}_w^{\ell_b}$ to obtain $\tilde{\bB}$ the same way as $\tilde{\bM}_1 \cup \dots \cup \tilde{\bM}_g$ is glued to $\tilde{\bM}_{g+1} \cup \dots \cup \tilde{\bM}_w$ to obtain $\tilde{\bM}$. But by Inequality~\ref{small_intersection}, $|X_g \cap Y_{g+1}| > |U \cap V|$. In other words, there are some elements $m \in X_g \cap Y_{g+1}$ which have one copy for $\varphi_{\ell_a}$, and another copy for $\varphi_{\ell_b}$ in $\tilde{\bB}$. Identifying $\varphi_{\ell_a}(m)$ and  $\varphi_{\ell_b}(m)$ for all such $m$ would convert $\tilde{\bB}$ to a structure isomorphic to $\tilde{\bM}$. Now it is easy to see that going backwards, splitting elements of $\tilde{\bM}$ would yield a structure isomorphic to $\tilde{\bB}$.
\end{proof}

It remains to show why we can choose $\ell_a$ and $\ell_b$ to satisfy Inequality~\ref{small_intersection}. Note that $t'' = \frac{(\frac{n}{s})^s}{c_s \iota j^\kappa n^j} \geq \Omega(n^{s-j})$. Also note that for any $\ell' \in \{\ell_1,\dots,\ell_{t''}\}$, $\tilde{B}_1^{\ell'} \cup \dots \cup \tilde{B}_w^{\ell'}$ is an $s$-subset of $[n]$. So by Theorem~\ref{EKR}, if for every pair $\ell_a,\ell_b \in \{\ell_1,\dots,\ell_{t''}\}$,
$\left(\tilde{B}_1^{\ell_a} \cup \dots \cup \tilde{B}_w^{\ell_a} \right) \cap \left(\tilde{B}_1^{\ell_b} \cup \dots \cup \tilde{B}_w^{\ell_b} \right) \geq  j + 1$,
then $t'' \leq O(n^{s-j-1)})$. But as observed $t'' \geq \Omega(n^{s-j})$, so for a large enough $n$ (as a function of $s$,$j$, $\iota$ and $\kappa$, so $n$ can be chosen in advance) Inequality~\ref{small_intersection} must hold for some $\ell_a,\ell_b \in \{\ell_1,\dots,\ell_{t''}\}$.
\end{proof}

\begin{proof}[Proof of Corollary~\ref{main_coro1}]
Let $\cO = \coCSP(\bB)$, i.e., the set of all those successor structures that do not homomorphically map to $\bB$. We construct an obstruction set $\cO'$ for $\bB$ such that every structure in $\cO'$ has pathwidth $(j,k+j)$. $\cO'$ is the set of all critical and split minimal structures of $\cO$. Theorem~\ref{main_thm1} tells us that every structure in $\cO'$ has a $(j,k+j)$-path decomposition.

To see that $\cO'$ is an obstruction set for $\bB$, take any structure $\bS \in \coCSP(\bB) = \cO$. Keep on applying split operations to $\bS$ and taking substructures of $\bS$ (again, these operations are with respect to non-built-in relations only), as long as the resulting structure is still in $\cO$. That is, if we apply any split operation to $\bS'$, or if we take any substructure of it, then the resulting structure is not in $\cO$ any more. Then $\bS' \in \cO'$ because $\bS'$ is critical and split minimal with respect to $\cO$. Using Proposition~\ref{hom_for_d}, we also see that $\bS' \rightarrow \bS$.

Because $\cO'$ is an obstruction set for $\bB$ such that every structure in $\cO'$ has width $(j,k+j)$, it follows from results of Dalmau in \cite{Dalmau:02:Constraint} that $\coCSP(\bB)$ is definable in linear $(j,k+j)$-Datalog.
\end{proof}

These proofs can be adapted for mnBP1s to obtain Theorem~\ref{main_thm2} and Corollary~\ref{main_coro2}.

\section*{Acknowledgement}
We thank Benoit Larose and Pascal Tesson for useful discussions and comments on an earlier draft. We also thank the anonymous referees for their helpful comments.

\bibliographystyle{abbrv}
\bibliography{mybib}

\end{document}